\begin{document}
	
\title{Fast and Accurate Triangle Counting in Graph Streams Using Predictions \\
}

\hypersetup{
	pdftitle={Fast and Accurate Triangle Counting in Graph Streams Using Prediction},
	pdfauthor={Cristian Boldrin, Fabio Vandin},
	pdfkeywords={Data Mining, Big Data, Graphs, Triangle Counting, Streaming},
}

\author{
	Cristian Boldrin \\ 
	\small{Dept. of Information Engineering} \\
	\small{University of Padova, Italy} \\
	\small{\texttt{boldrincri@dei.unipd.it}}  
\and
	Fabio Vandin \\
	\small{Dept. of Information Engineering} \\
	\small{University of Padova, Italy} \\
	\small{\texttt{fabio.vandin@unipd.it}}
}

\date{September, 2024}

\maketitle

\begin{abstract}
	In this work, we present the first efficient and practical algorithm for estimating the number of triangles in a graph stream using \emph{predictions}. Our algorithm combines waiting room sampling and reservoir sampling with a \emph{predictor} for the \emph{heaviness} of edges, that is, the number of triangles in which an edge is involved. As a result, our algorithm is fast, provides guarantees on the amount of memory used, and exploits the additional information provided by the predictor to produce highly accurate estimates.  We also propose a simple and domain-independent predictor, based on the degree of nodes, that can be easily computed with one pass on a stream of edges when the stream is available beforehand.
	
	Our analytical results show that, when the  predictor provides useful information on the heaviness of edges, it leads to estimates with reduced variance compared to the state-of-the-art, even when the predictions are far from perfect. Our experimental results show that, when analyzing a single graph stream, our algorithm is faster than the state-of-the-art for a given memory budget, while providing significantly more accurate estimates. Even more interestingly, when sequences of hundreds of graph streams are analyzed, our algorithm significantly outperforms the state-of-the-art using our simple degree-based  predictor built by analyzing only the first graph of the sequence.
\end{abstract}



\section{Introduction}
\label{sec:intro}
Counting the number of triangles is a fundamental primitive in the analysis of graphs, with applications ranging from community detection~\cite{berry2011tolerating} and anomaly detection~\cite{becchetti2010efficient, lim2015mascot} to molecular biology~\cite{milo2002network}. In most applications the exact computation of the number of triangles is unfeasible, due to the massive size of the data. For this reason, one has often to resort to efficient algorithms that provide high-quality approximations, that can be used in place of exact values in subsequent analyses.  One of the main requirements for such algorithms is that they use a limited amount of memory, for example storing only a small fraction of the edges of the graph to not exceed a given memory budget.

In some applications the graphs of interest are not only massive, but they are also observed as a stream of edges that occur in arbitrary order. In such situations one is interested in keeping \emph{high-quality} approximations \emph{at every time} during the stream. Even when the data is not observed as a stream, analyzing a graph through one (or few) passes on its edges leads to efficient algorithms.

A lot of work has been done on approximate triangle counting in streams, often using sampling (see Section~\ref{sec:related}). For example,~De Stefani et al.~\cite{stefani2017triest} introduced the use of \emph{reservoir sampling} to keep the memory bounded \emph{with certainty} and, at the same time, make better use of the  available memory. More recently~Shin et al.\cite{shin2017wrs} introduced the notion of \emph{waiting room sampling}, which allows to take advantage of temporal localities of triangles observed in most applications by keeping the most recent edges in the stream in a waiting room. While such algorithms provide accurate and efficient solutions for counting triangles in data streams, there is still room for improvement. For example, such algorithms cannot adapt to the data distribution that is specific to each application, and that may be captured by using a \emph{predictor}, such as a machine learning model, for example, providing additional and relevant information on the graph~\cite{mitzenmacher2022algorithms}.

\textit{Contributions.} We introduce  \algname\ (\emph{T}riangles c\emph{O}u\emph{N}ting with pred\emph{IC}tions), a suite of novel efficient algorithms for fast and accurate triangle counting in graph streams. \algname\ can be instantiated for both insertion-only and fully dynamic streams with both edge insertions and deletions. For ease of presentation and due to space constraints, we focus on the insertion-only variant of \algname; the variant of \algname\ for fully dynamic streams is described in Appendix~\ref{sec:additionalalgos}. Our algorithms use uniform sampling schemes to store a small sample of the stream while making sure to not exceed the allowed memory budget.
\begin{myitemize}
	\item  \algname\ is the first \emph{fast} and \emph{accurate} algorithm for approximating the number of triangles in graph streams using \emph{predictions}. \algname\ combines such predictions with reservoir sampling and waiting room sampling to provide high-quality estimates of both the \emph{global} number of triangles and the \emph{local} number of triangles incident to each vertex. \algname\ can use any predictor providing some measure of \emph{heaviness} for edges, defined as the number of triangles an edge is involved in. By using a predictor, \algname\ focuses on the most relevant edges for triangle counting, and it adapts to the data distribution specific to the application. Our analysis shows that the predictor leads to improved estimates when it provides useful information on heavy edges, even if its predictions are far from perfect.
	\item We propose a very simple and application independent predictor, based on the degree of nodes, to be used in \algname. Such predictor can be learned with one pass of the stream and can be easily stored. This is in contrast with previous work exploring the use of \emph{heaviness} predictors for triangle counting in graph streams~\cite{chen2022triangle}, and makes \algname\ the first \emph{practical} approach for triangle counting in data streams with predictions.
	\item We conduct an extensive experimental evaluation on very large graph streams and on sequences of graph streams. The results show that, when analyzing a single graph stream, \algname\ is faster than the state-of-the-art for a given memory budget, while providing significantly more accurate estimates. The improvement is even more significant for sequences of hundreds of graph streams, where \algname\ substantially outperforms the state-of-the-art using our simple degree-based predictor built analyzing only the \emph{first} graph stream.
\end{myitemize}

\section{Related Works}
\label{sec:related}

The problem of counting global and local triangles in a graph has been extensively studied in the last decades~\cite{tsourakakis2009doulion,tsourakakis2011triangle,kane2012counting,manjunath2011approximate,jayaram2021optimal,mcgregor2016better}, in many different settings~\cite{pagh2012colorful,park2014mapreduce,park2013efficient,park2016pte,suri2011counting}. Due to space constraints, we discuss here the works mostly related to ours, focusing on sampling approaches to estimate triangle counts in graph streams, and refer to the surveys \cite{latapy2008main, al2018triangle} for a more in-depth presentation of other approaches. Sampling approaches fall into two main categories: \textit{fixed memory}, where edges are sampled without exceeding a given memory budget, and \textit{fixed probability}, where edges are sampled with a given fixed probability. While the two approaches are related (by appropriately setting the sampling probability in fixed probability approaches), the latter does not guarantee that the memory budget is not exceeded. Since our focus is on a fixed memory budget, we discuss only such kind of approaches.

Buriol et al.~\cite{buriol2006counting} presented a suite of sampling algorithms, resulting in $(1 + \epsilon)$-approximation of the global triangles. Pavan et al.~\cite{pavan2013counting} introduced a one-pass stream efficient  approach that runs different estimators to approximate the count of triangles, providing guarantees on the memory used. Jha et al.~\cite{jha2013space} provided an algorithm for approximating the global number of triangles with a single pass through a graph stream using the birthday paradox.
De Stefani et al.~\cite{stefani2017triest} proposed \textit{Trièst}, a suite of algorithms for counting the global and local number of triangles in fully dynamic streams using reservoir sampling~\cite{vitter1985random} (see Section~\ref{sec:reservoirs}) to keep edges within a fixed memory budget, and random pairing~\cite{gemulla2008maintaining} to maintain a bounded-size uniform random sample of the edges in a fully dynamic stream. Shin et al.~\cite{shin2017wrs} and Lee et al.~\cite{lee2020temporal} introduced the idea of waiting room sampling (see Section~\ref{sec:wrs}), which consists of storing the most recent edges, based on the empirical observation that triangles have edges spanning a short interval in the stream. Shin et al.~\cite{shin2018think, shin2020fast} presented 
$\textit{ThinkD}$, a suite of algorithms for handling fully dynamic graph streams with random pairing, that builds on the scheme introduced by~\cite{stefani2017triest} resulting in highly accurate estimates.


The use of predictions about the input has been recently formalized in the \textit{algorithms with predictions} framework (see the survey by Mitzenmacher and Vassilvitskii~\cite{mitzenmacher2022algorithms}), and several algorithms with predictions for well-studied problems have been proposed~\cite{mitzenmacher2018model,khalil2017learning,hsu2019learning}. The problem of counting \emph{global} triangles in a \emph{insertion-only} graph stream has been studied in such framework by Chen et al.~\cite{chen2022triangle} lately. They proposed one-pass streaming algorithms for estimating the number of global triangles and four cycles using an oracle that provides predictions about the heaviness of edges. In this regard, the algorithm from~\cite{chen2022triangle} is similar to our algorithm for insertion-only graph streams, but there are three crucial differences. First, we also propose a variant of our algorithm for fully dynamic graph streams, while~\cite{chen2022triangle} limits to insertion-only graph streams. Second, the algorithm they propose and analyze uses a fixed probability approach, with no guarantees to respect the allowed memory budget and requiring knowledge of the number of edges in the stream to compute the sampling probability. Third, they assume that the oracle is available to the algorithm, but do not propose practical and efficiently computable oracles, that is instead one of our main contributions. As a result, our algorithm \algname\ is the first efficient and practical approach that uses predictions for estimating the number of triangles in (fully dynamic) graph streams, as shown by our experimental evaluation (see Section~\ref{sec:experiments}).





\section{Preliminaries}
We now introduce the main notions used in this paper. 
We consider an undirected graph $G = (V, E)$ with no self-loops and no multiple edges, where $V$ and $E$ are the set of nodes and the set of edges, respectively, with $|V| = n$ and $|E| = m$. Edges are observed in arbitrary order through the graph stream $\Sigma = \{e^{(1)}, ..., e^{(m)} \}$. An edge $e^{(t)} = \{u, v\} \in E$ is an unordered pair of nodes. Note that $\Sigma$ is an \emph{insertion-only} stream: edges can only be added and not deleted. Our algorithm \algname\ can be adapted to fully dynamic streams with arbitrary edge insertions and deletions, but we present only the variant for insertion-only streams due to space constraints and for simplicity. The description of our algorithm for fully dynamic streams is in Appendix~\ref{sec:additionalalgos}.

For $t = 1, ..., m$, we denote with $G^{(t)} = (V, E^{(t)})$, where $E^{(t)} =  \{e^{(1)},\dots,e^{(t)} \}$, the graph up to time $t$; note that $G^{(m)}= G$.  We say that $\Delta = \{u,v,w\}$ is a \emph{triangle} in $G^{(t)}$ if edges $\{u,v\}, \{u,w\}$, and $\{v,w\}$ all appear in $E^{(t)}$.  We also say that a triangle $\Delta$ in $G^{(t)}$ is incident to a vertex $v$ if $v \in \Delta$.

For every $t=1,2,\dots$, we are interested in approximating the \emph{global} number $T^{(t)}$  of  triangles in $G^{(t)}$ as well as the \emph{local} number $T_u^{(t)}$ of triangles incident to $u$ in $G^{(t)}$, for every $u \in V$. In our work, we make the following assumptions: no exact information about the input stream (e.g., the true number of nodes, the true number of edges, the true number of triangles) is available to the algorithm; we can store at most $k$ edges in memory ($k$ is part of the input). We are also going to assume that the edge stream can be only accessed once (i.e., we consider one-pass streaming algorithms) to count triangles. However, we also  describe a simple but effective predictor that can be obtained with a fast additional pass on the stream $\Sigma$.

\subsection{Waiting Room Sampling}
\label{sec:wrs}
Our algorithm \algname\ uses waiting room sampling, proposed by Shin\soutc{et al.}~\cite{shin2017wrs}, that allows to exploit \emph{temporal localities} observed in most real-world datasets. Temporal localities represent the tendency that future edges are more likely to form triangles with recent edges rather than with older edges in real graph streams. To exploit such temporal localities in triangle counting, Shin\soutc{et al.}~\cite{shin2017wrs} propose to always store the most recent edges of the stream. More in detail, in waiting room sampling with memory budget $k$, for some constant $\alpha\in (0,1)$, a portion $k  \alpha$ of the memory, called \emph{waiting room} $\mathcal{W}$, is reserved to keep the $k  \alpha$ most recent edges from the stream $\Sigma$.

\subsection{Predictor}
\label{sec:pred}
Our algorithm makes use of a \emph{predictor}. We consider predictors that provide some information about the \emph{heaviness} of edges, that is, the number of triangles in which edges are involved. More formally, for an edge $e=\{u,v\}$, let $\Delta(e)$ be the number of triangles containing both $u$ and $v$ (i.e., $e$ is an edge of the triangle). We define a predictor $O_H$ for the heaviness of edges as a function $O_H: E \rightarrow \mathbb{R}^+$, where $O_H(e)$ is a measure \emph{related} to $\Delta(e)$: our algorithm uses $O_H$ only to \emph{compare} the heaviness of edges, so $O_H(e)$ does not need to be a prediction for the actual value of $\Delta(e)$. In particular, for every time $t$, our algorithm \algname\ uses $O_H$ to keep a fixed size set with the heaviest edges observed up to time $t$.

The intuition for using such predictions is that in most cases triangles are not distributed equally across edges, thus $O_H$ allows to focus on edges that most contribute to the number of triangles. The impact of $O_H$ then depends on how much the edges it predicts as \emph{heavy} actually contribute to triangle counting (in relation to the other edges).

Note that a heaviness predictor is targeting global counting of triangles, in contrast to the local number of triangles incident to \emph{each} vertex. However, a heaviness predictor is useful also for vertices with \emph{high} local triangles counts, which are often the ones of interest when analyzing local counts.


\subsection{Reservoir Sampling}
\label{sec:reservoirs}
The last component of our algorithm \algname\ is the  sampling of edges through \emph{reservoir sampling}~\cite{vitter1985random}. In particular, at time $t$,  \algname\ stores a set of $k' < k$ edges sampled uniformly at random among the \emph{light edges} of $G^{(t)}$, that are edges of $G^{(t)}$ that are neither in the waiting room $\mathcal{W}$ nor kept in the set of (predicted) heavy edges (see Section~\ref{sec:alg}). More in detail, let $L^{(t)}$ be the set of light edges in $G^{(t)}$. At time $t$, reservoir sampling maintains a sample $\mathcal{S}_L$ of at most $k'$ edges as follows: let $e^{(t)}$ be the edge observed at time $t$; if $|L^{(t)}| < k'$, then $e^{(t)}$ is added to $\mathcal{S}_L$; otherwise, with probability  $k'/|L^{(t)}| $ remove an edge chosen uniformly at random from $\mathcal{S}_L$ and add $e^{(t)}$ to $\mathcal{S}_L$ (with probability $1-k'/|L^{(t)}|$, $\mathcal{S}_L$ does not change).

Let $\mathcal{S}_L^{(t)}$  be the set of edges in $\mathcal{S}_L$ at the end of time step~$t$. Reservoir sampling guarantees~\cite{vitter1985random} that for every time step $t$ with $|L^{(t)}| \ge k'$, if we let $A$ be any subset of $L^{(t)}$ of size $|A|~=~k'$, then $ \mathbb{P}\left[\mathcal{S}_L^{(t)}= A\right]~=~\frac{1}{\binom{|L^{(t)}|}{k'}}$,
that is, $\mathcal{S}_L^{(t)}$ is a uniform sample of size $k'$ from the set $L^{(t)}$ of light edges seen so far in the graph stream.   

\section{\algname: Counting Triangles with Predictions}
\label{sec:alg}

We now describe our algorithm \algname\ for counting triangles in graph streams with predictions. Again, due to space constraints and for the sake of clarity, we describe the version for insertion-only streams (our algorithm for fully dynamic streams is in Appendix~\ref{sec:additionalalgos}). Similarly to previous approaches, \algname\ uses reservoir sampling and waiting room sampling, but in addition it devotes part of its memory to store edges predicted as \emph{heavy} by a predictor. In particular, \algname\ stores  a set $\mathcal{S}$ of at most $k$ edges in memory, where $k$ is provided in input. The set $\mathcal{S}$ comprises three disjoint sets of edges: the waiting room $\mathcal{W}$, storing the $k  \alpha$ most recent edges in the stream, where $\alpha \in (0,1)$ is constant; the set $\mathcal{H}$ with the heaviest  edges (according to the predictor) observed in the stream, of size $\left|\mathcal{H}\right| = k  \left(1 - \alpha\right)  \beta$, where $\beta \in (0,1)$ is constant; the set $\mathcal{S}_L$, storing a sample of dimension $k  \left(1 - \alpha\right)  \left(1 - \beta\right)$ of \emph{light} edges, i.e., edges observed in $\Sigma^{(t)}$ that are neither in $\mathcal{W}$ nor in $\mathcal{H}$ at time $t$. \algname\ is a 1-pass streaming algorithm, that is, it returns estimates of global and local counts of all the triangles in the graph $G$ at the end of only one pass of edges in the input stream. 

As stated in Section~\ref{sec:pred}, \algname\ employs a predictor $O_H$ that predicts a measure of heaviness for  every edge. In general, \algname\ makes no assumption on the quality of the predictor $O_H$. For example, we do not assume that $O_H$ provides reliable measure of the number $\Delta(e)$ of triangles that include edge $e$ for all $e$. Our algorithm provides unbiased estimates independently of the quality of the predictor $O_H$, and our analysis (see Section~\ref{sec:analysis}) shows that, as long as the total number of triangles captured by edges kept in $\mathcal{H}$ (due to the predictions of $O_H$) is large enough (compared to the total number of triangles), the use of $O_H$ leads to improved estimates of the number of triangles. Moreover, we prove that if the predictor $O_H$ makes random predictions, our algorithm does not provide worse estimates than previous approaches.

We now describe \algname; its pseudocode is presented in Algorithm~\ref{alg:tonic-ins}. \algname\ first initializes  the sets $\mathcal{W}, \mathcal{H}$, and $\mathcal{S}_L$, the estimate $\hat{T}$ of the global number of triangles, and the observed number $\ell$ of light edges (lines~\ref{line:1}-\ref{line:2}). Let $t$ be the instant of time in which the edge $\{u, v\}$ arrives in the stream (line~\ref{line:3}). Let $\mathcal{S}^{(t)}$ denote the set of edges present in $\mathcal{S}$ at time $t$, and analogously $\mathcal{W}^{(t)}$, $\mathcal{H}^{(t)}$ for the waiting room $\mathcal{W}$ and the heavy edges $\mathcal{H}$, respectively.  At time $t$, the algorithm performs the following steps. First, it counts each occurrence of triangles containing $\{u, v\}$ in the set $\mathcal{S}^{(t)}$ (line~\ref{line:alg2counttriangles}), computing for each triangle $\Delta~=~\{u,v,w\}$ a corresponding probability $p_\Delta$ according to whether $\{u,w\}$ and/or $\{v,w\}$ belong to the sample $\mathcal{S}_L$ of light edges or not; $p_\Delta$ is used as a correction factor for updating  the estimated counts of the global and local number of triangles. \algname\ then updates $\mathcal{W}$, $\mathcal{H}$, and $\mathcal{S}_L$ as follows. If $\mathcal{W}$ is not full, it inserts $\{u, v\}$ in $\mathcal{W}$ (line~\ref{line:alg2WR1}). If $\mathcal{W}$ is full, it replaces the oldest edge $\{x, y\}$ in $\mathcal{W}$ with $\{u, v\}$ (lines~\ref{line:7}-\ref{line:alg2WR2}), and, if $\mathcal{H}$ is not full, it inserts $\{x, y\}$ in $\mathcal{H}$ (lines~\ref{line:9}-\ref{line:alg2Hnotfull}). Otherwise, we are going to observe a light edge and increment counter $\ell$ (line~\ref{line:alg2ell}). Let $e'$ be the \emph{lightest} edge in $\mathcal{H}$, i.e., $e' = \arg\min_{e\in \mathcal{H}} O_H(e)$ (line~\ref{line:alg2lightestHE}). Let $e_{\max}$ be the heaviest edge between $\{x,y\}$ and $e'$, and let $e_{\min}$ be the lightest edge between such edges. Then \algname\ keeps $e_{\max}$ in $\mathcal{H}$ and updates $\mathcal{S}_L$: if $\mathcal{S}_L$ is not full, it inserts $e_{\min}$ in $\mathcal{S}_L$; otherwise, it updates $\mathcal{S}_L$ using reservoir sampling for edge $e_{\min}$ (lines~\ref{line:14}-\ref{alg:sampleedge}).  \algname\ reports the final estimates $\hat{T}$ and $\hat{T}_u$ at the end of the stream (line~\ref{alg:return}).

\begin{algorithm}[htbp]
	\footnotesize
	\caption{\algname\  $\left(\Sigma, k, \alpha, \beta, O_H \right)$}
	\label{alg:tonic-ins}
	\LinesNumbered
	\kwInput{Arbitrary order edge stream $\Sigma = \{e^{(1)}, e^{(2)}, ... \}$; memory budget $k$; fraction of waiting room space $\alpha$; fraction of heavy edges space $\beta$; edge heaviness predictor $O_H$.}
	\kwOutput{Estimate of global triangles count $\hat{T}$; 
		estimate of local triangles count $\hat{T_u}$ for each node $u$.}
	$\mathcal{W} \longleftarrow \emptyset$; $\mathcal{H} \longleftarrow \emptyset $; $\mathcal{S}_L \longleftarrow \emptyset $\label{line:1}\;
	$ \hat{T} \longleftarrow 0$; $ \ell \longleftarrow 0$\label{line:2}\;
	\For{each edge $e^{(t)} = \{u, v\}$ in the stream $\Sigma$\label{line:3}}{
		\texttt{CountTriangles}$\left( \{u, v\}, \mathcal{W} \cup \mathcal{H} \cup \mathcal{S}_L,  \ell \right)$\label{line:alg2counttriangles}\;
		\lIf{$\left|\mathcal{W}\right| < k\alpha$}{$\mathcal{W} \longleftarrow \mathcal{W} \cup \{\{u, v\}\}$} \label{line:alg2WR1}
		
		\Else{
			$\{x, y\} \gets $ oldest edge in $\mathcal{W}$\label{line:7}\;
			$\mathcal{W} \longleftarrow \mathcal{W} \setminus \{\{x, y\}\} \cup \{\{u, v\}\}$\label{line:alg2WR2}\;
			\If{$\left|\mathcal{H}\right| < k \left(1 - \alpha\right) \beta$\label{line:9}}{$\mathcal{H} \longleftarrow \mathcal{H} \cup \{\{x, y\}\}$;}\label{line:alg2Hnotfull}
			
			\Else{
				$ \ell \longleftarrow \ell + 1$\label{line:alg2ell}\;
				$ \{u', v'\} \longleftarrow$ lightest edge in $\mathcal{H}$\label{line:alg2lightestHE}\;  
				\uIf{$O_H\left(\{x, y\}\right) >  O_H\left(\{u', v'\}\right)$\label{line:14}}{
					$\mathcal{H} \longleftarrow \mathcal{H} \cup \{\{x, y\}\} \setminus \{ \{u', v'\}\}$\label{line:alg2H}\;
				}
				\lElse{$\{u', v'\} \longleftarrow \{x, y\}$}
				\uIf{$\ell < k\left(1 - \alpha\right)\left(1 - \beta\right)$}{ \label{line:alg2samplenotfull1}
					$\mathcal{S}_L \longleftarrow \mathcal{S}_L \cup \{\{u', v'\}\}$\label{line:alg2samplenotfull2}\;
				}
				\lElse{\texttt{SampleLightEdge}$\left( \{u', v'\}, \ \mathcal{S}_L,  \ell \right)$\label{alg:sampleedge}}
			}
		}
	}
	\Return $\hat{T}$, $\hat{T}_u$ for each node $u \in \mathcal{S} = \mathcal{W} \cup \mathcal{H} \cup \mathcal{S}_L$\label{alg:return}\;
	
\end{algorithm}

At any time $t$, $\hat{T}$ provides an estimate of the (global) number of triangles observed in the graph $E^{(t)}$ up to time $t$. For local triangles, \algname\ maintains estimates $\hat{T}_u >0$ for some vertices $u$ in $V$, in particular for vertices with triangles contributing to $\hat{T}$; for all other vertices, the estimated number of triangles is $0$.

\algname\ makes use of two subroutines, \texttt{CountTriangles} and \texttt{SampleLightEdge}.  \texttt{CountTriangles} (Alg.~\ref{alg:counttriangles2})  counts the number of triangles \emph{closed} by the current edge in the stream, computes the probability that such triangles have been sampled by the algorithm and uses such probability to update the relevant counts (see Sect.~\ref{sec:prob_comp} in Appendix for the correctness of the probabilities computed). \texttt{SampleLightEdge} (Alg.~\ref{alg:samplelightedge}) uses reservoir sampling (see Section~\ref{sec:reservoirs}) to update the sample $\mathcal{S}_L$ of the set $L$ of light edges observed in stream up to that time.
Also, we assume that \textit{FlipBiasedCoin(p)} flips \textit{HEAD} with probability $p$.

\subsection{Analysis}
\label{sec:analysis}

In this section, we analyze our algorithm \algname. In particular, we first prove that \algname\ provides unbiased estimates of the number of global/local triangles at each time step. We then analyze the time  complexity of our algorithm, and also analytically assess when the use of a noisy predictor leads to estimates with smaller variance (i.e, that are more accurate) compared to the \wrs\ algorithm from~\cite{shin2017wrs}. For lack of space, all proofs and analyses, including the ones for our algorithm for fully dynamic streams, are in Appendix~\ref{sec:proofs}.

\begin{algorithm}[htbp]
	\footnotesize
	\caption{\texttt{CountTriangles}$\left(\{u, v\}, \mathcal{S}, \ell \right)$}
	\label{alg:counttriangles2}
	\LinesNumbered
	\kwInput{edge $\{u, v\}$; subgraph $\mathcal{S} = \left(\hat{V}, \ \hat{E}\right)$; number of predicted light edges $\ell$.
	}
	\For{each node $w$ in $\hat{\mathcal{N}}_u \cap \hat{\mathcal{N}}_v$}{ \label{line:alg2commonneighs}
		initialize $\hat{T_u}$, $\hat{T_v}$, $\hat{T_w}$ to zero if not set yet\;
		$p_{uvw} = 1$\;
		\uIf{$\{w, u\} \in \mathcal{S}_L$ AND $\{v, w\} \in \mathcal{S}_L$}{
			$p_{uvw} = \min \left( 1, \ \frac{k(1 - \alpha)(1 - \beta)}{\ell} \times \frac{k(1 - \alpha)(1 - \beta) - 1}{\ell - 1} \right)$\;
		} \uElseIf{$\{w, u\} \in \mathcal{S}_L$ OR $\{v, w\} \in \mathcal{S}_L$}{
			$p_{uvw} = \min \left( 1, \frac{k(1 - \alpha)(1 - \beta)}{\ell} \right) $\;
		}
		increment $\hat{T}$, $\hat{T}_u$, $\hat{T}_v$, $\hat{T}_w$ by $ 1 / p_{uvw}$\;        
	}
	
\end{algorithm}

\begin{algorithm}[htbp]
	\footnotesize
	\caption{\texttt{SampleLightEdge}$\left(\{u, v\}, \mathcal{S}_L, \ell \right)$}
	\label{alg:samplelightedge}
	\LinesNumbered
	\kwInput{edge $\{u, v\}$; current sample $\mathcal{S}_L$ of light edges; number of observed light edges $\ell$ in the stream.
	}
	$p_{sampling} = \frac{k\left(1 - \alpha\right)\left(1 - \beta\right)}{\ell}$\;\label{line:alg2probsamp}
	\If{FlipBiasedCoin$\left(p_{sampling}\right)$ == HEAD}{
		$\{\bar{u}, \bar{v}\} \longleftarrow$ edge sampled uniformly at random from $\mathcal{S}_L$\;
		$\mathcal{S}_L \longleftarrow \mathcal{S}_L  \setminus \{\{\bar{u}, \bar{v}\}\} \cup \{\{u, v\}\}$\label{line:samplelightSL}\;} 
\end{algorithm}

\subsubsection{Unbiasedness}
\label{sec:unbiasedness}
The unbiasedness of the estimates reported by \algname\ is provided by the following result.

\begin{theorem}
	\label{thm:unbiasedness}
	Let $T^{(t)}$ and $T_u^{(t)}$ be the true global count of triangles in the graph and the true local triangle count for node $u \in V$ at time $t$, respectively. We have:
	\begin{equation*}
		\mathbb{E}\left[\hat{T}^{(t)}\right] = T^{(t)}, \ \forall \ t \geq 0
	\end{equation*}
	\begin{equation*}
		\mathbb{E}\left[\hat{T}_u^{(t)}\right] = T_u^{(t)} \ \forall u \in V, \ \forall \ t \geq 0
		\label{eq:localcount2}
	\end{equation*}
\end{theorem}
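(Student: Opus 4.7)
The plan is to decompose $\hat{T}^{(t)}$ as a sum over individual triangles and apply linearity of expectation. For every triangle $\Delta = \{u,v,w\} \in \mathcal{T}^{(t)}$, let $t_\Delta \le t$ denote the arrival time of its \emph{closing edge}, i.e., the last of the three edges of $\Delta$ to appear in the stream. Inspection of Algorithm~\ref{alg:tonic-ins} together with \texttt{CountTriangles} shows that $\Delta$ contributes to $\hat{T}$ only at step $t_\Delta$, when the closing edge triggers a scan of common neighbors inside $\mathcal{S}^{(t_\Delta)}=\mathcal{W}^{(t_\Delta)}\cup\mathcal{H}^{(t_\Delta)}\cup\mathcal{S}_L^{(t_\Delta)}$; writing $Y_\Delta$ for this contribution, $Y_\Delta$ equals $1/p_\Delta$ if both completing edges $\{u,w\},\{v,w\}$ belong to $\mathcal{S}^{(t_\Delta)}$ and $0$ otherwise, where $p_\Delta$ is the value assigned by \texttt{CountTriangles}. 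Thus it suffices to show $\mathbb{E}[Y_\Delta] = 1$ for every $\Delta \in \mathcal{T}^{(t)}$ and then sum.

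The key observation I will use is that, given the stream $\Sigma$ and the predictor $O_H$, the category of every edge (in $\mathcal{W}$, in $\mathcal{H}$, or \emph{light}) at every time is deterministic: $\mathcal{W}$ always holds the $k\alpha$ most recent edges, and membership in $\mathcal{H}$ depends only on the fixed values $O_H(\cdot)$ and on the order in which edges leave $\mathcal{W}$. In particular, the set of light edges observed so far and the counter $\ell^{(t_\Delta)}$ read by \texttt{CountTriangles} are deterministic. The algorithm's only randomness is the reservoir sampling performed by \texttt{SampleLightEdge}, which by the Vitter guarantee recalled in Section~\ref{sec:reservoirs} keeps $\mathcal{S}_L^{(t_\Delta)}$ as a uniformly random subset of size $\min(k(1-\alpha)(1-\beta),\,\ell^{(t_\Delta)})$ of all currently-classified-light edges.

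I then split into three cases according to the categories of $e_1=\{u,w\}$ and $e_2=\{v,w\}$ at time $t_\Delta$. If both lie in $\mathcal{W}^{(t_\Delta)}\cup\mathcal{H}^{(t_\Delta)}$, they are stored with probability $1$ and \texttt{CountTriangles} assigns $p_\Delta = 1$. If exactly one is light, the uniform marginal of reservoir sampling gives $\Pr[e_i \in \mathcal{S}_L^{(t_\Delta)}] = \min(1,\,k(1-\alpha)(1-\beta)/\ell^{(t_\Delta)})$, matching the middle branch. If both are light, the fact that $\mathcal{S}_L^{(t_\Delta)}$ is a uniformly random size-$k(1-\alpha)(1-\beta)$ subset of the $\ell^{(t_\Delta)}$ light edges gives joint inclusion probability $\min\bigl(1,\,\tfrac{k(1-\alpha)(1-\beta)}{\ell^{(t_\Delta)}}\cdot\tfrac{k(1-\alpha)(1-\beta)-1}{\ell^{(t_\Delta)}-1}\bigr)$, matching the first branch. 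In every case $\Pr[e_1,e_2 \in \mathcal{S}^{(t_\Delta)}] = p_\Delta$, hence $\mathbb{E}[Y_\Delta] = 1$. Linearity then gives $\mathbb{E}[\hat{T}^{(t)}] = T^{(t)}$; repeating the argument restricted to triangles $\Delta \ni u$ yields $\mathbb{E}[\hat{T}_u^{(t)}] = T_u^{(t)}$.

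The delicate step I expect to spend the most care on is the joint-inclusion claim in the both-light case. The subtlety is that light edges are not revealed to \texttt{SampleLightEdge} in the order they arrive in $\Sigma$, but in the order in which they are \emph{evicted} from $\mathcal{W}$ or from $\mathcal{H}$ (an edge may sit in $\mathcal{H}$ and become light only much later, once a heavier edge displaces it). I will argue that this eviction order is a deterministic function of $\Sigma$ and $O_H$, so that \texttt{SampleLightEdge} is effectively applied to a fixed sequence of $\ell^{(t_\Delta)}$ items and the standard uniform-subset conclusion of reservoir sampling applies verbatim, giving the claimed pairwise probability and closing the proof.
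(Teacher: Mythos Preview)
Your proposal is correct and follows essentially the same route as the paper: decompose $\hat{T}^{(t)}$ as a sum of per-triangle contributions, do a three-way case split on whether the two non-closing edges lie in $\mathcal{W}\cup\mathcal{H}$ or are light, verify that in each case the weight $1/p_\Delta$ used by \texttt{CountTriangles} is exactly the reciprocal of the inclusion probability, and conclude by linearity (the paper packages the case split as a separate lemma on $p_{uvw}$ and then applies it). Your explicit observation that the $\mathcal{W}/\mathcal{H}/\text{light}$ classification and the eviction order fed to \texttt{SampleLightEdge} are deterministic functions of $(\Sigma,O_H)$---so that Vitter's reservoir guarantee applies to a fixed item sequence and yields the exact pairwise inclusion probability---is a point the paper leaves implicit, and it is the right thing to spell out; one small notational slip is that the sample inspected when the closing edge arrives is $\mathcal{S}^{(t_\Delta-1)}$ rather than $\mathcal{S}^{(t_\Delta)}$, matching the paper's convention.
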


\subsubsection{Time Complexity}

In the following, we analyze time complexity of \algname. In the analysis we also account for how the sets of edges stored by \algname\ are implemented. The worst-case time complexity of \algname\ is dominated by computing the number of triangles the last observed edge closes, plus a logarithmic factor due to retrieving, for each edge in the graph stream, the lightest edge from the set of $\mathcal{H}$. 

\begin{theorem}
	\label{theorem:timecomplexity2}
	Given an input graph stream $\Sigma$ of insertion-only edges, and given $\alpha = \left|\mathcal{W}\right|/k$, $\beta = \left|\mathcal{H}\right|/k(1 - \alpha)$, \algname\ processes each edge in $\Sigma$ in $\bigO{\left( k + \log \left(k (1 - \alpha) \beta \right) \right)}$ time.
\end{theorem}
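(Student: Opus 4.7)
The plan is to account, step by step, for every operation that Algorithm~\ref{alg:tonic-ins} performs upon arrival of an edge $e^{(t)}=\{u,v\}$, assuming efficient data structures backing the three sets. Concretely, I will represent the stored sample $\mathcal{S}=\mathcal{W}\cup\mathcal{H}\cup\mathcal{S}_L$ via hash-based adjacency lists (so adjacency queries take expected constant time and the neighbors of any node in $\mathcal{S}$ can be enumerated in time linear in their number); the waiting room $\mathcal{W}$ via a FIFO queue supporting $O(1)$ enqueue and dequeue of the oldest edge; the heavy-edge set $\mathcal{H}$ via a min-heap keyed by $O_H(\cdot)$ of size at most $k(1-\alpha)\beta$; and the light-edge sample $\mathcal{S}_L$ via a dynamic array, so that a uniformly chosen element can be located and replaced in $O(1)$ time.

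First I would bound the cost of \texttt{CountTriangles} (Algorithm~\ref{alg:counttriangles2}), which is invoked once per stream edge. The subroutine iterates over the common neighbors $\hat{\mathcal{N}}_u \cap \hat{\mathcal{N}}_v$ in the stored subgraph. Using the adjacency-list representation this intersection can be computed in $O(\min(|\hat{\mathcal{N}}_u|,|\hat{\mathcal{N}}_v|))$ expected time by iterating over the smaller neighborhood and probing a hash set built on the larger one. Since every edge in $\mathcal{S}$ contributes at most one neighbor to any fixed endpoint, $|\hat{\mathcal{N}}_u|\le|\mathcal{S}|\le k$, so this step takes $O(k)$ time. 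Each of the (at most $O(k)$) triangles discovered requires only constant-time arithmetic: checking the membership of $\{w,u\}$ and $\{v,w\}$ in $\mathcal{S}_L$, computing $p_{uvw}$, and updating $\hat T$ together with three local counters; overall $O(k)$ work.

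Next I would bound the updates to $\mathcal{W}$, $\mathcal{H}$, and $\mathcal{S}_L$. Inserting $\{u,v\}$ in $\mathcal{W}$ (and, when full, evicting the oldest edge $\{x,y\}$) is $O(1)$ thanks to the FIFO queue. When an edge $\{x,y\}$ leaves $\mathcal{W}$, reading $O_H$ at the root of the min-heap is $O(1)$; a potential replacement (pop-min followed by push) costs $O(\log|\mathcal{H}|)=O(\log(k(1-\alpha)\beta))$. Whether we then insert $e_{\min}$ directly into $\mathcal{S}_L$ (when not yet full) or invoke \texttt{SampleLightEdge}, the sampling step requires drawing one biased coin and, in case of replacement, picking a uniform index into the array storing $\mathcal{S}_L$; both are $O(1)$. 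The hash-based adjacency structure indexing $\mathcal{S}$ is updated upon each of the $O(1)$ insertions and deletions just described, at constant expected cost each.

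Summing these contributions yields a worst-case per-edge time of $O(k)$ from the triangle-counting scan plus $O(\log(k(1-\alpha)\beta))$ from the heap update on $\mathcal{H}$, which is exactly the claimed bound. The only delicate point is justifying the $O(k)$ cost for the common-neighbor intersection, since a naive implementation would yield $O(k^2)$; the argument rests on hashing and on the fact that each stored edge contributes only two endpoints, bounding every node's degree in $\mathcal{S}$ by $k$. All remaining work is standard data-structure bookkeeping, so the theorem follows.
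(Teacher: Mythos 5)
Your proposal is correct and follows essentially the same route as the paper's own proof: identify the dominant costs as the common-neighbor scan in \texttt{CountTriangles} (bounded by $O(k)$) and the min-heap operations on $\mathcal{H}$ (bounded by $O(\log(k(1-\alpha)\beta))$), while all waiting-room, reservoir, and predictor-query operations are $O(1)$ with the standard data structures (FIFO queue, fixed-size array, hash lookups). The only difference is that you spell out the hash-set intersection argument for why the neighbor scan is $O(k)$ rather than $O(k^2)$, which the paper simply asserts.
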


From the above, the total time to compute the estimation of local and global triangles for the entire graph stream of $m$ edges is $\bigO{\left( mk + m\log \left(k (1 - \alpha) \beta \right) \right)}$. This is a worst-case bound since in practice, in real graph streams, the complexity of the computation of common neighbors for nodes $u$ and $v$ is much smaller than $k$, i.e., our memory budget.

\subsubsection{Comparison with \wrs\ for global triangles count}
\label{sec:analysis_variance_compariso}
We now prove that our algorithm leads to better estimates than \wrs~\cite{shin2017wrs} for the global number of triangles when the predictions from the predictor $O_H$ are \emph{useful}, in the sense that they lead to consider as \emph{heavy}, edges that are involved in a large number of triangles. Note that this is different from having a \emph{perfect oracle} (i.e., making no errors).

For the sake of simplicity, in the analysis we consider a simplified version of the \wrs\ algorithm and a simplified version of \algname\ algorithm that capture the main features of the two approaches. The simplified version of \wrs\ samples each light edge (i.e, that leaves the waiting room) independently with probability $p$;  the simplified version of \algname\  uses a predictor that predicts an edge leaving the waiting room as \emph{heavy} or \emph{light}, keeps (predicted) heavy edges in $\mathcal{H}$, and samples each light edge (see line~\ref{alg:sampleedge} of Algorithm~\ref{alg:tonic-ins}) with probability $p' <p$, where $p$ is the probability that an edge is sampled by \wrs. Note that by properly fixing $p'$, accounting for the memory budget for heavy edges (i.e., $k \left(1 - \alpha\right)\beta$), the two algorithms have the same expected memory budget. 
We assume that the waiting room has the same size in both \wrs\ and \algname. Note that triangles where the first two edges are in the waiting room $\mathcal{W}$ at the time of discovery are counted (with probability 1) by both algorithms. Since the size of $\mathcal{W}$ is the same, the sets of triangles counted (with probability 1) in the waiting room by the two algorithms are identical. Given that we are interested in the difference in the estimates from the two algorithms, we exclude such triangles from our analysis (i.e., all quantities below are meant after discarding such triangles).

Note that the quality of the approximation reported by our algorithm \algname\ w.r.t. \wrs\ must depend on several quantities: the number of triangles in which heavy edges are involved; the number of triangles in which light edges are involved; $p$ and $p'$, that govern the memory allocated for light edges; the quality of the predictor in differentiating heavy edges from light edges. Our result provides a formal relation between such quantities.

Let $\Delta(e)$ be the number of triangles in which edge $e$ is involved.  Let $T^H = \sum_{h \in \mathcal{H}} \Delta(h)$  be the sum, over heavy edges, of the number of triangles in which heavy edges appear; analogously, let $T^L = \sum_{l \in L} \Delta(l)$ be the sum, over light edges, of the number of triangles in which light edges appear.

In our analysis we make some assumptions on the predictor; in particular, we will assume heavy edges appear in $\ge \rho$ triangles, while light edges appear in $< \rho$ triangles, for some $\rho \ge 3$. However, to capture the errors of the predictor, in particular around the threshold $\rho$, we assume that edges $h$ predicted as heavy by the predictor are involved in $\Delta(h)\ge \rho/c$ triangles, while edges $l$ predicted as light by the predictor are involved in $\Delta(l) < c\rho$ triangles, for some constant $c \ge 1$.  Note that for edges $e$ with $\rho / c < \Delta(e) < c\rho$ the predictor can make arbitrarily wrong predictions.

\begin{proposition}
	\label{prop:var_vs_WRS}
	Let $\Var[\hat{T}_{\wrs}(p)]$ be the variance of the estimate  $\hat{T}_{\wrs}$ obtained by \wrs\ when light edges are sampled independently with probability $p$, and let  $\Var[\hat{T}_{\algname}(p')]$ be the variance of the estimate  $\hat{T}_{\algname}$ obtained by \algname\ when light edges are sampled with probability $p'$. Then $\Var[\hat{T}_{\algname}(p')] \le \Var[\hat{T}_{\wrs}(p)]$ if 
	\begin{equation*}
		\frac{T^H}{T^L} > 3 \frac{(1/p'^2 - 1/p^2)+c\rho(1/p' - 1/p)}{(1/p-1)(3+4\rho/c)}.
	\end{equation*}
\end{proposition}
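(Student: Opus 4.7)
The plan is to express both variances as sums of per-triangle contributions and then to compare the two sums by partitioning triangles according to how their edges are classified by the predictor. Both estimators admit the form $\hat{T}=\sum_\Delta X_\Delta/p_\Delta$, where $X_\Delta$ is the indicator that the two edges $e_1,e_2$ of a triangle $\Delta$ arriving before its third edge $e_3$ are still present in memory when $e_3$ is processed. Since $X_\Delta$ is Bernoulli with mean $p_\Delta$, the diagonal contribution of $\Delta$ to $\Var[\hat{T}]$ equals $(1-p_\Delta)/p_\Delta$. Independence of sampling across distinct non-waiting-room edges makes covariances between triangles sharing no edge vanish, and the edge-sharing covariance terms have the same combinatorial structure in both algorithms, so I expect them to cancel up to the diagonal-level comparison.

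For \wrs, each of the two pre-third edges of $\Delta$ is kept independently with probability $p$, yielding $p_\Delta\in\{p^2,p,1\}$ depending on how many still lie in the waiting room; discarding triangles counted deterministically by both algorithms in the waiting room leaves only $p_\Delta\in\{p^2,p\}$, with per-triangle variance between $1/p-1$ and $1/p^2-1$. For \algname, the same edges are kept deterministically when predicted heavy and with probability $p'$ otherwise, so $p_\Delta\in\{1,p',p'^2\}$ depending on how many of the pair are predicted heavy. I would then stratify triangles by $k_\Delta\in\{0,1,2,3\}$, the number of predicted-heavy edges in $\Delta$, with counts $T_k=|\{\Delta:k_\Delta=k\}|$, yielding the identities $T^H=3T_3+2T_2+T_1$ and $T^L=3T_0+2T_1+T_2$.

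Next, I would invoke the predictor guarantees $\Delta(h)\ge\rho/c$ for predicted-heavy $h$ and $\Delta(l)<c\rho$ for predicted-light $l$ to rewrite the $T_k$-indexed variance contributions in terms of $T^H$ and $T^L$. Triangles with $k_\Delta=3$ contribute $0$ to $\Var[\hat{T}_{\algname}]$ but at least $1/p-1$ to $\Var[\hat{T}_{\wrs}]$, providing the positive $T^H$-term in the final inequality; the factor $3+4\rho/c$ in the denominator should arise from writing the count of pure-heavy triangles as a fraction of $T^H$ via $\Delta(h)\ge\rho/c$ and subtracting the share of $T^H$ consumed by mixed triangles. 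Triangles with $k_\Delta\in\{1,2\}$ contribute to $\Var[\hat{T}_{\algname}]$ terms of the form $(1/p'-1)$ weighted by a count bounded (via $\Delta(l)<c\rho$) by the number of predicted-light edges times $c\rho$, producing the $c\rho(1/p'-1/p)$ factor in the numerator. Triangles with $k_\Delta=0$ contribute $(1/p'^2-1/p^2)$ per triangle, giving the remaining numerator term. Combining these bounds and rearranging the inequality $\Var[\hat{T}_{\algname}(p')]\le\Var[\hat{T}_{\wrs}(p)]$ into a ratio on $T^H/T^L$ should yield the stated condition.

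The main obstacle I anticipate is coupling the per-triangle variance terms, which depend on the specific arrival order of the three edges of each triangle, to the edge-centric sums $T^H$ and $T^L$. The class to which a triangle is assigned (pure-heavy, mixed, or pure-light among its first two arrivals) depends on which edge arrives last, so the bound has to be stated either for a worst-case choice or via a uniform averaging over the three orderings; this is where I expect the constant $3$ in the numerator and the $3+4\rho/c$ denominator to emerge. A secondary point is verifying that cross-triangle covariances do not overwhelm the comparison; under the independent-sampling model of the simplified variants, this should reduce to an edge-level accounting in which each shared light edge contributes identically to both algorithms up to the ratio $1/p$ versus $1/p'$, matching the diagonal calculation.
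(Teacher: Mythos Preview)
Your proposal has a genuine gap: you treat the cross-triangle covariance terms as essentially canceling between the two algorithms (``the edge-sharing covariance terms have the same combinatorial structure in both algorithms, so I expect them to cancel''), but in the paper's argument they are the \emph{primary} source of the $\rho$- and $c$-dependent factors in the final bound. The diagonal per-triangle contributions $(1-p_\Delta)/p_\Delta$ you focus on involve only the numbers $1/p-1$, $1/p^2-1$, $1/p'-1$, $1/p'^2-1$ multiplied by triangle counts; no rewriting of the $T_k$ in terms of $T^H=\sum_h\Delta(h)$ and $T^L=\sum_l\Delta(l)$ can manufacture a factor of $\rho/c$, because the identities $T^H=3T_3+2T_2+T_1$ etc.\ are purely combinatorial and never use the assumption $\Delta(h)\ge\rho/c$. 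Your claim that ``the factor $3+4\rho/c$ should arise from writing the count of pure-heavy triangles as a fraction of $T^H$ via $\Delta(h)\ge\rho/c$'' cannot work: the only bound available in that direction is $T_3\le T^H/3$, with no $\rho$ involved.

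The actual mechanism is a covariance asymmetry. In \wrs, heavy edges leaving the waiting room are sampled with probability $p$ like everything else, so any two triangles sharing a heavy edge $h$ contribute covariance $1/p-1$; summing over all such pairs on $h$ gives $\binom{\Delta(h)}{2}(1/p-1)$. In \algname\ that same edge is kept deterministically, and those covariances vanish. Summing the gain over $h\in\mathcal{H}$ and using $\binom{\Delta(h)}{2}\ge\Delta(h)^2/3\ge(\rho/c)\Delta(h)/3$ yields $\tfrac{1}{3}(\rho/c)(1/p-1)T^H$, which is where both the constant $3$ and the factor $\rho/c$ enter. Symmetrically, pairs of triangles sharing a light edge $l$ contribute $\binom{\Delta(l)}{2}(1/p-1/p')$ to the difference (with the wrong sign), and $\Delta(l)<c\rho$ bounds this by $\tfrac{1}{2}c\rho(1/p-1/p')T^L$. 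Combining these covariance differences with the (straightforward) diagonal differences and rearranging gives the stated threshold on $T^H/T^L$. So the covariances are not a secondary point to be checked at the end; they are the engine of the whole inequality.
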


The result above explicits a trade-off between the quality of the predictor (represented by $c$), the impact of heavy edges in the count (represented by $\rho$ and $T^H$) with respect to light edges (represented by $T^L$), and the fraction of memory allocated for heavy edges in our algorithm (that depends on the difference between $p$ and $p'$.) For example, if $p=0.1$ and $p'=0.09$ (these are representative values for our experimental evaluation), $c=1.5$, and $\rho=10$, then the bound above is $\frac{T^H}{T^L} > 0.45$, that corresponds to the sum of the counts for heavy edges being at least one third of all triangles (that do not have two edges in the waiting room $\mathcal{W}$ when counted); if instead the parameters are as above but $\rho=100$, then $\frac{T^H}{T^L} > 0.24$, that corresponds to the sum of the counts for heavy edges being at least one fifth of all triangles (again, that do not have two edges in the waiting room $\mathcal{W}$ when counted). For the latter case, note that $c=1.5$ implies that the predictor is not very accurate: a light edge $l$ (with $\Delta(l)<100$ by definition) may be mispredicted as heavy as long as $\Delta(l)>66$, while a heavy edge $h$ (with $\Delta(h) \ge 100$ by definition) may be mispredicted as light as long as $\Delta(h)<150$.

The result above formalizes the intuition that the predictor helps when it provides fairly reliable information on heavy edges. The following result instead proves that, when the predictor does not provide useful information on heavy edges, our algorithm returns estimates as accurate as \wrs, in the sense that the variances of the estimates from the two algorithms are the same when the same amount of memory is used.

\begin{proposition}
	\label{prop:variance_random_pred}
	If the predictor $O_H$ predicts a random set of edges as heavy edges, and \wrs\ and \algname\ use the same amount of memory, then $\Var[\hat{T}_{\algname}(p')] = \Var[\hat{T}_{\wrs}(p)]$.
\end{proposition}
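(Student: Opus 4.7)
The plan is to prove Proposition~\ref{prop:variance_random_pred} by exhibiting a distributional equivalence between \algname\ under a random predictor and \wrs\ at the same memory budget, at the level of both the sample kept in memory and the resulting estimator.

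First, I would characterize the distribution of the three memory regions $\mathcal{W}, \mathcal{H}, \mathcal{S}_L$ under a random predictor. Because $\mathcal{H}$ keeps the edges with the largest (uniformly random) heaviness scores among those that have left the waiting room, it is distributed as a uniform random subset of size $k(1-\alpha)\beta$ of the past-waiting-room edges. Similarly, $\mathcal{S}_L$ is maintained as a uniform random subset of size $k(1-\alpha)(1-\beta)$ drawn from the complementary edges via reservoir sampling. Thus the union $\mathcal{S} = \mathcal{H}\cup \mathcal{S}_L$ is a uniform random subset of size $k(1-\alpha)$ of the edges that have left the waiting room, coinciding in distribution with the reservoir sample maintained by \wrs\ when both algorithms use the same total memory.

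Second, I would show that, conditional on the combined sample $\mathcal{S}$, the expected contribution of each triangle $\Delta$ to $\hat{T}_{\algname}$ matches the corresponding contribution to $\hat{T}_{\wrs}$. Conditional on $\mathcal{S}$, the induced partition of $\mathcal{S}$ into $\mathcal{H}$ and $\mathcal{S}_L$ is uniform, so for each triangle whose two pre-current edges lie in $\mathcal{S}$ one can compute the expected correction over the three label configurations (both in $\mathcal{H}$, one in each, both in $\mathcal{S}_L$), weighting by their hypergeometric probabilities and by the algorithm-prescribed corrections from Algorithm~\ref{alg:counttriangles2}. A direct algebraic computation collapses this expected correction to the single \wrs\ factor, proportional to $\ell(\ell-1)/[k(1-\alpha)(k(1-\alpha)-1)]$, yielding $\mathbb{E}[\hat{T}_{\algname}(p')\mid \mathcal{S}] = \hat{T}_{\wrs}(\mathcal{S})$.

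Finally, I would combine these two observations via the Law of Total Variance, obtaining $\Var[\hat{T}_{\algname}(p')] = \Var[\hat{T}_{\wrs}(p)] + \mathbb{E}[\Var[\hat{T}_{\algname}(p')\mid \mathcal{S}]]$. The main obstacle is controlling the residual term $\mathbb{E}[\Var[\hat{T}_{\algname}(p')\mid \mathcal{S}]]$, that is, arguing that the label-induced randomness does not inflate the variance of the estimator in aggregate. I anticipate this step to require a careful pairwise-cancellation argument across triangles sharing a pre-current edge, in which the positive contributions from such pairs are absorbed by reinterpreting the \algname\ estimator as an unbiased randomization of the \wrs\ estimator over the uniform label assignment, thereby establishing the claimed equality.
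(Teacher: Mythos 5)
Your first two steps are correct, and they go further than the paper's own proof: the paper's argument for this proposition is a single sentence that records exactly your step 1 --- under a random predictor, the non-waiting-room memory $\mathcal{H}\cup\mathcal{S}_L$ is a uniform random subset of the non-waiting-room edges with the same cardinality as the \wrs\ reservoir --- and then asserts that ``the result follows.'' It never opens the conditional decomposition, so it never confronts the fact that, even given $\mathcal{S}=\mathcal{H}\cup\mathcal{S}_L$, the \algname\ estimate still depends on which stored edges carry the label $\mathcal{H}$ versus $\mathcal{S}_L$, through the three distinct correction factors in \texttt{CountTriangles}.

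That label-dependence is exactly where your step 3 breaks down, and your own decomposition shows why. The law of total variance gives
$\Var[\hat{T}_{\algname}] = \Var\bigl[\mathbb{E}[\hat{T}_{\algname}\mid\mathcal{S}]\bigr] + \mathbb{E}\bigl[\Var[\hat{T}_{\algname}\mid\mathcal{S}]\bigr] = \Var[\hat{T}_{\wrs}] + \mathbb{E}\bigl[\Var[\hat{T}_{\algname}\mid\mathcal{S}]\bigr]$,
where the second equality uses your steps 1 and 2. The residual is an expectation of a conditional variance, hence nonnegative, and it is zero only if $\hat T_{\algname}$ is almost surely a deterministic function of $\mathcal{S}$. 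It is not: a triangle whose two pre-current edges lie in $\mathcal{S}$ contributes a correction of $1$, $\ell/s_\ell$, or $\ell(\ell-1)/\bigl(s_\ell(s_\ell-1)\bigr)$ depending on the random labels, with genuine conditional spread whenever $|\mathcal{H}|\ge 1$, $s_\ell\ge 1$, and $\ell>s_\ell$. No pairwise-cancellation argument across triangles can rescue this --- an expectation of variances has no negative contributions to cancel --- and the ``unbiased randomization'' reading you invoke is precisely what \emph{produces} the residual rather than removes it: unbiasedly randomizing an estimator over an independent label assignment can only add variance. Followed honestly, your route therefore establishes $\Var[\hat{T}_{\algname}] \ge \Var[\hat{T}_{\wrs}]$, not the claimed equality; so there is a genuine gap in your final step, and it is the same gap that the paper's one-line sketch elides rather than closes.
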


Strikingly, our experimental evaluation shows that our algorithm \algname\ provides estimates of quality similar to \wrs\ even when the most \emph{adversarial} predictor is provided (see Sect.~\ref{sec:adversarial-experiments} in Appendix).

\subsection{A Simple Domain-Independent Predictor}
\label{sec:simple_pred}
The predictor $O_H$ used by \algname\ could be implemented by using one of the several machine learning models that may consider information other than the graph $G$ in its prediction. For example, in social networks, information about the users may be useful to predict the heaviness of an edge. In protein interaction networks, the function and properties of the protein provide a lot of information on its heaviness. However, we now describe a simple predictor based only on the graph structure that, as we will show, is extremely powerful in practice.

\sloppy{
	We define a simple predictor  \texttt{MinDegreePredictor} that predicts, as measure of heaviness for $\{u,v\}$, the minimum between the degree $deg(u)$ of $u$ and the degree $deg(v)$ of $v$, that is $O_H~=~\min\left\{ deg(u), deg(v) \right\}$. Note that the degree $deg(u)$ for each node $u\in V$ can be computed easily with 1 pass on the data whenever the whole stream is available beforehand. Moreover, in practical applications, one can measure the number of observed edges involving a given node $u$ in a \emph{training} phase, and then use such information for the prediction in later phases. Additionally, to reduce the memory required for the predictor, instead of storing \emph{all} nodes degrees, one can simply store the largest ones, and assume a value of $0$ for the degree of all other nodes.
}

\section{Experimental Evaluation}
\label{sec:experiments}
In this section we present the results of our experiments. Due to space constraints, we only present a subset of results for the estimation of the global number of triangles for insertion-only streams. The complete results, including the estimation of the local number of triangles and the results for fully dynamic streams, are in Appendix~\ref{sec:additional-experiments}.

The goals of our experiments are: i) to assess the dependency of  \algname\ from the relative dimension of the  waiting room (parameter $\alpha$), from the relative dimension of the heavy edges set ($(1-\alpha)\beta$), and from the total memory budget ($k$);  ii) to assess the improvement in the estimates that results from the predictor; iii) to compare \algname\ with state-of-the-art algorithms (with and without predictors) for counting triangles in edge streams on single streams and on sequences of edge streams, where the predictor is learned only in the first stream (i.e., graph)  of the sequence; iv) to assess the quality of \algname's estimates during the evolution of the input stream.

\textit{Experimental Setup.} We implemented  \algname\ in C\texttt{++}17. The code to reproduce all experiments is available at \url{https://github.com/VandinLab/Tonic}. All the code was compiled under \texttt{gcc} 9.4.0 and ran on a 
2.20 GHz Intel Xeon CPU with 1 TB of RAM, on Ubuntu 20.04. If not explicitly specified, we fixed the memory budget of each algorithm to $k = m / 10$ and for each run we report mean and standard deviation across 50 independent trials. To measure accuracy for global triangles estimates, we considered the \emph{global relative error} at the end of the stream, that is $|\hat{T} - T|/T$.

\textit{Datasets.}
We considered both single graph and sequence of graph streams as datasets of interest, representing social and citation networks, and autonomous system (AS) relationships, downloaded from \cite{nr, konect, snapnets}.  Datasets' names and statistics are summarized in Table \ref{tab:datasets}.  A complete description of the considered dataset can be found in Appendix~\ref{sec:dataset_description}. From each dataset, we removed self-loops and multiple edges, deriving a stream of insertion-only edges, for consistency with previous works. 
For graph sequences (i.e., the last four rows of the table), the statistics in Table~\ref{tab:datasets} refer to the graph with highest number of nodes. 

\begin{table}[htbp]
	\centering
	\footnotesize
	\begin{tabular}{ c c c c c }
			Dataset & $n$ & $m$ & $T$ \\
			\toprule
			\multicolumn{5}{c}{\emph{Single Graphs}} \\
			\midrule
			Edit EN Wikibooks    										 & $133k$ 					   & $386k$  		  				  & $178k$ 				  \\
			SOC YouTube Growth    									& $3.2M$ 					 & $9.3M$  		  				    & $12.3M$ 			   \\
			Cit US Patents    										 		& $3.7M$ 					  & $16.5M$  		  				& $7.5M$ 			    \\
			Actors Collaborations   								   & $382k$ 					 & $15M$  		  				    & $346.8M$ 			 \\
			Stackoverflow    										 	    & $2.5M$ 					 & $28.1M$  		  				& $114.2M $ 		  \\
			SOC LiveJournal    										      & $4.8M$ 					   & $42.8M$  		  				 & $285.7M$ 		  \\
			Twitter-merged      										   & $41M$ 					   & $1.2B$  		  				    & $34.8B$ 		        \\
			\midrule
			\multicolumn{4}{c}{\emph{Snapshot Sequences}} \\
			\midrule
			Oregon (9 graphs)   				   					   & $11k$ 	  					  & $23k$  		 				       & $19.8k$     		   & 			  		\\
			AS-CAIDA (122 graphs)   	 	  					  & $26k$ 	 					& $53k$  	   						 & $36.3k$ 		  	    &				  \\
			AS-733 (733 graphs)   									& $6k$ 	    				   &$13k$  	 	   					     & $6.5k$ 		  		 & 		 			    \\
			Twitter (4 graphs)   									    & $29.9M$ 	    		    &$373M$  	 	   				   & $4.4B$ 		  	  & 		 			    \\
			\bottomrule
		\end{tabular}
	\caption{Datasets' statistics: number $n$ of nodes; number $m$ of edges; number $T$ of triangles}
	\label{tab:datasets}
\end{table}

\textit{Predictors.}
For our algorithm \algname, the predictors used depend on whether we analyze a single graph stream or a sequence of graph streams. For single streams, we considered three predictors: \texttt{OracleExact}, where $O_H(e)$ is the number $\Delta(e)$ of triangles involving $e$ in the graph stream (i.e., the heaviness); \texttt{Oracle-noWR}, where $O_H(e)$ is obtained subtracting to $\Delta(e)$  the number of triangles for which $e$ is in the waiting room $\mathcal{W}$; \texttt{MinDegreePredictor}, the predictor described in Section~\ref{sec:simple_pred} for which $O_H(\{u,v\})$ is the minimum between the degree of $u$ and the degree of $v$. Note that, when analyzing single streams,  \texttt{OracleExact} and \texttt{Oracle-noWR} are mostly \emph{ideal} and not \emph{practical} predictors (since they require to solve the counting problem exactly first), but we use them to study the potential gain obtained with a predictor. In addition, \texttt{OracleExact} represents a general predictor for heaviness, while \texttt{Oracle-noWR} is a predictor tied to the counting strategy employed by our algorithm. In contrast, \texttt{MinDegreePredictor} is a predictor that can be easily implemented with a single pass on the edge stream, and is therefore practical when the entire edge stream is available beforehand.
For  \texttt{OracleExact} and \texttt{Oracle-noWR}, the predictor maintains the top 10\% edges sorted by predicted values (i.e., the top $m/10$ entries $((u,v);O_H(\edge{u}{v}))$ sorted by decreasing $O_H(\cdot)$). Hence, we consider predictors (practical or ideal) that provide accurate information only for the top 10\% edges, while for the remaining 90\%  of edges they output $O_H(e)=0$.
For \texttt{MinDegreePredictor}, we use a different representation based on nodes. In particular, the predictor stores $\bar{n}$ entries $(u ; deg(u))$ corresponding to the $\bar{n}$ highest degree nodes in the graph. 
For an edge $e = \edge{u}{v}$, \texttt{MinDegreePredictor} produces in output $\min{(deg(u), deg(v))}$ if both $(u; deg(u))$ and $(v;deg(v))$ are stored in the predictor, and $0$ otherwise. The value $\bar{n}$ is fixed so to correspond to the number of (unique) nodes that would be required to compute the exact value for the top $10\%$ edges (according to the  \texttt{MinDegreePredictor}); note that the nodes stored by the predictor may actually be different from the ones required to predict the exact (min-degree) value for the top $10\%$ edges, since we are storing the highest degree nodes. 
As a result, \texttt{MinDegreePredictor}  stores significantly less entries than  \texttt{OracleExact} and  \texttt{Oracle-noWR} (see results below). Further details can be found in Appendix~\ref{sec:predictors_details}.


For sequences $\Sigma_1, \Sigma_2, \dots$ of graph streams, we considered the \emph{same} predictors, but \emph{trained} using only the \emph{first} stream  $\Sigma_1$. For example, in \texttt{OracleExact}, $O_H(e)$ is the number of triangles of $\Sigma_1$ in which  $e$ appears; in later streams, edges $e$ adjacent to vertices not in $\Sigma_1$  always have $O_H(e)=0$. 
Note that \texttt{OracleExact}/\texttt{Oracle-noWR} can be obtained by solving the problem exactly on the graph stream $\Sigma_1$, and are therefore \emph{practical} whenever this is feasible. 

\textit{Baselines.}
We compared our algorithm \algname\ with \wrs~\cite{shin2017wrs, lee2020temporal} (considering the algorithm for insertion-only graph streams), that is the state-of-the-art for global and local triangles in (insertion-only) graph streams, and with the algorithm from~\cite{chen2022triangle} (considering the algorithm for arbitrary order streams), that we denote as \chenalg, which is the only algorithm that uses predictors and is limited to estimate global triangles at the end of the (insertion-only) stream\footnote{While the algorithm described in \cite{chen2022triangle} uses fixed probability sampling, the implementation provided by the authors  enforces a maximum memory budget by essentially removing a random light edge from main memory, that does not provide guarantees on the resulting sample.}. For fully dynamic (FD) streams, we compared our algorithm \algnamedel\ with \wrsdel~\cite{lee2020temporal} (considering the version for FD streams) and with \thinkdacc\ (the algorithm with fixed memory from ~\cite{shin2018think, shin2020fast}), which are both state-of-the-art for global and local triangles in (FD) graph streams. In all experiments, all algorithms are provided with the same memory budget. The main parameter of \wrs\ and \wrsdel\ is the fraction $\alpha$ of the memory allocated to the waiting room, while the main parameter for \chenalg\ is the fraction $\beta$ of the memory  allocated to the heavy edges. For \chenalg\,  we used as predictor \texttt{OracleExact}, that is, for single streams we are considering the best predictor that it could have access to. \thinkdacc\ has no tunable parameters. 

\textit{Results.} 
First, we ran \algname\ for various values of the parameters $\alpha$ and $\beta$ (see Sect.~\ref{sec:accuracy-vs-params} in Appendix for more details). We also ran \wrs\ and \chenalg\ with various values of $\alpha$ (for \wrs) and $\beta$ (for \chenalg).

\begin{figure*}[htbp]
	\centering
	\includegraphics[width=1 \textwidth]{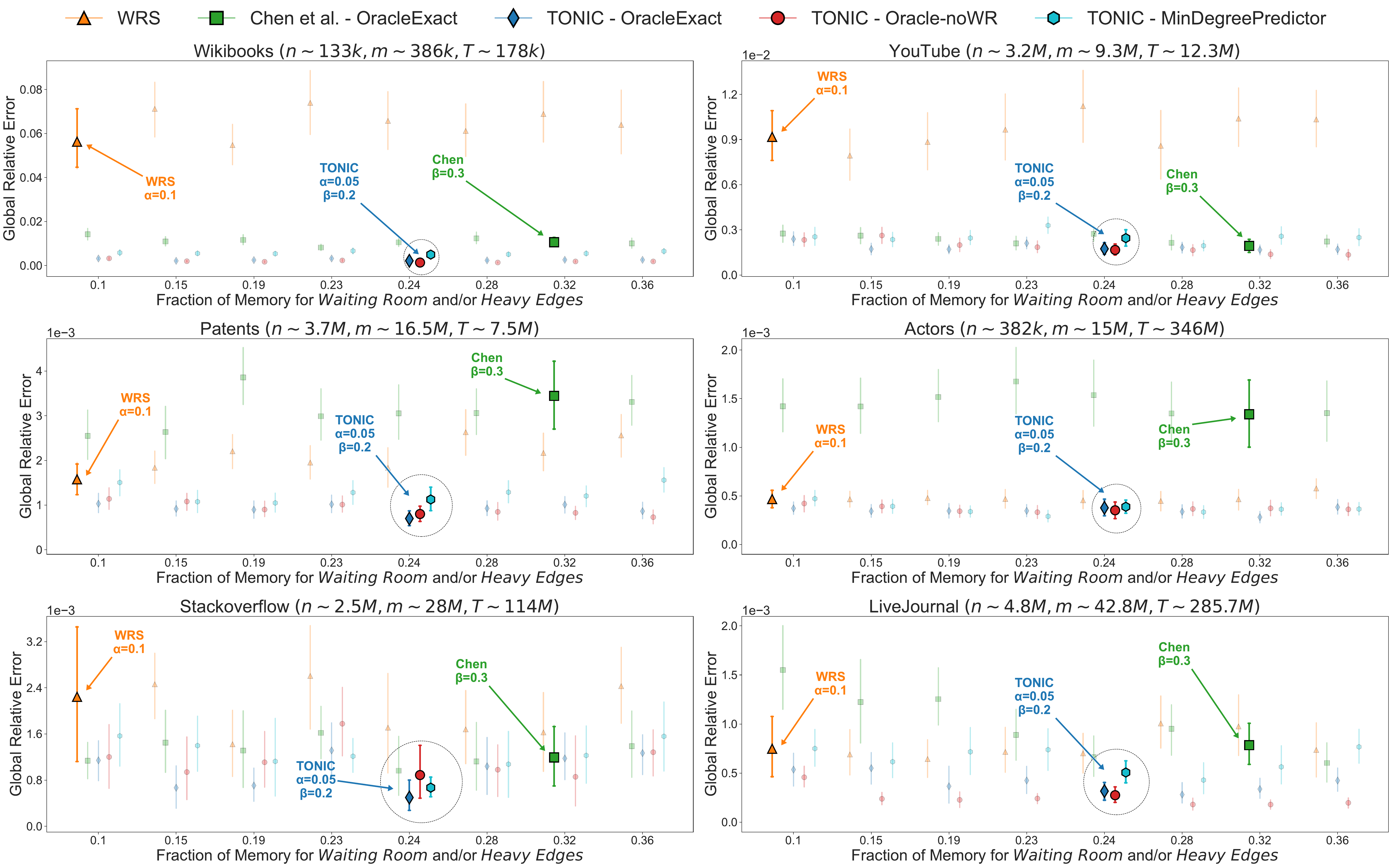}
	\caption{Error vs fraction of memory budget used for waiting room and/or heavy edges. All algorithms are provided with memory budget $k = m/10$. For each combination of algorithm and parameter (including predictor for \algname), the average and 95\% confidence interval over 50 repetitions are shown. The chosen configuration for \algname\ and the configurations suggested by \wrs\ and \chenalg\ publications are highlighted.}
	\label{fig:accuracy_params_experiments_merged}
\end{figure*}

 Fig.~\ref{fig:accuracy_params_experiments_merged} shows the error as a function of the fraction of memory budget used for the waiting room and/or heavy edges ($\alpha +(1 - \alpha)\beta$ for \algname, $\alpha$ for \wrs, and $\beta$ for \chenalg) for all our single graphs but the Twitter ones, due to time impracticability on such large graphs. We observe that  \wrs\ and \chenalg\ have performance that strongly depends on the dataset and the parameters, while our algorithm \algname\ provides estimates with low error for all values of $\alpha$ and $\beta$, with the best combination depending on the dataset, and for all datasets. However, overall the combination $\alpha = 0.05$ and $\beta = 0.2$ leads to good results across all datasets, and in all subsequent experiments we fixed the parameters to such values.
Note that \algname\ with $\alpha = 0.05$ and $\beta = 0.2$  is always better than \wrs\ and \chenalg\ with parameters as suggested in the respective publications ($\alpha = 0.1$ and $\beta = 0.3$). Interestingly, \algname\ with the \texttt{MinDegreePredictor} (the only practical one) is always close to \algname\ with predictors \texttt{OracleExact} and \texttt{Oracle-noWR}, and always outperforms \wrs\ and \chenalg\ other than for Youtube, where \chenalg\ (empowered by the \texttt{OracleExact}) shows slightly lower error. These results show that our algorithm \algname\ is robust to the choice of the parameters $\alpha$ and $\beta$, and provides accurate estimates consistently across all datasets even when a simple practical predictor is used, adapting to the data distribution thanks to the use of the predictor.

We then assessed how \algname\ behaves in terms of  error and runtime as a function of the memory budget $k$, and compared it with \wrs\ and \chenalg. For every dataset, we ran all algorithms with memory budget $k = f  m$, with $m=|E|$ and for values of $f$ showed on the x-axis. Fig.~\ref{fig:memory_experiments_merged} (left)  shows the estimation errors, with \algname\ achieving a better accuracy in almost every case.  Fig.~\ref{fig:memory_experiments_merged} (right) shows the runtime only for \algname\ and \wrs; the runtime of \chenalg\ is always at least 4, and up to 15, times larger than \algname\ runtime and therefore not shown in Fig.~\ref{fig:memory_experiments_merged} (right) for the sake of clarity. 

\begin{figure*}[htbp]
	\centering
	\includegraphics[width=1 \textwidth]{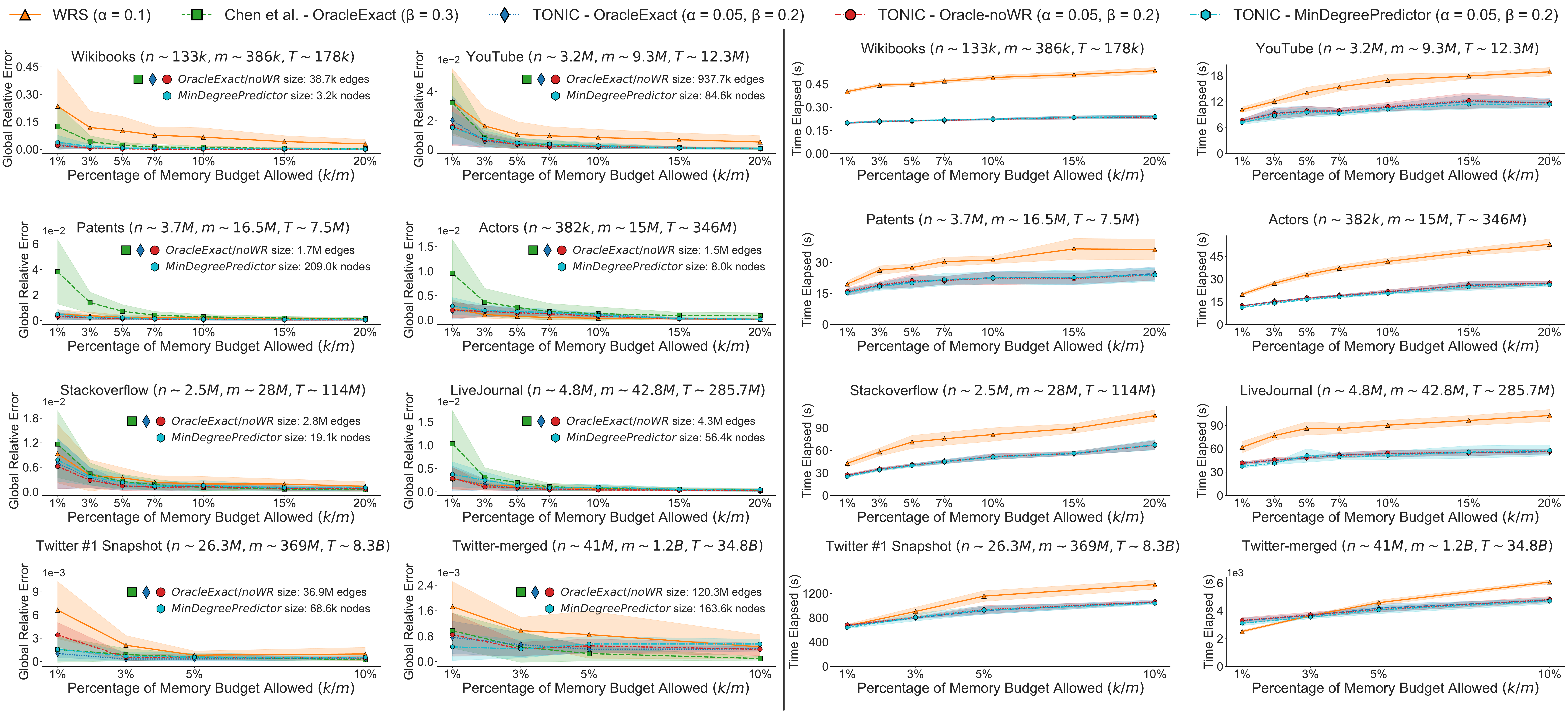}
	\caption{Error (left) and runtime (right) vs  memory budget. Each left subplot reports the size of \texttt{OracleExact} and \texttt{Oracle-noWR} as number of edges, and the size of \texttt{MinDegreePredictor} as number of nodes. \chenalg\ runtimes are not shown for clarity, since they are 4-15 times bigger than \algname\ runtime. For each combination of algorithm and parameter (including predictor for \algname), the average and standard deviation over 50 repetitions are shown. The algorithms parameters are as in legend (for \wrs\ and \chenalg\ they are fixed as in the respective publications; for \algname\ they are as chosen in Fig.~\ref{fig:accuracy_params_experiments_merged}).} 
	\label{fig:memory_experiments_merged}
\end{figure*}

We see that \algname\ is always faster than \wrs\ (excluding $1\%m$ memory budget for Twitter-merged dataset, for which the runtime is slightly higher, but still reasonable, while \algname\ significantly outperforms \wrs\ in terms of approximation error), showing a much milder slope and hence, in practice, able to scale better with respect to worst-case analyses of the running time. For larger memory budgets, \algname\ usually outperforms \wrs\ in terms of approximation error (or is at least comparable to it), while being faster. Therefore, in all scenarios \algname\ provides an advantage over \wrs. On Twitter-merged for memory budgets $k= 5\% m$ and $k= 10\%m$, \chenalg\ has the best accuracy, slightly better than \algname. This is due to Twitter streams being in adjacency list order, since in such cases the waiting room is not beneficial. However, while the accuracy of \algname\ is worse than, but comparable to, \chenalg,
the difference in runtime is huge: for a single run with $k = 10\% m$, \algname\ takes $\sim 70$ minutes, while \chenalg\ requires more than $16$ hours. 
These results show that overall \algname\  provides more accurate estimates than other methods, while being faster, also when using the practical \texttt{MinDegreePredictor}. We point out that, as shown in Fig.~\ref{fig:memory_experiments_merged} (left), \texttt{MinDegreePredictor} stores significantly less entries than \texttt{OracleExact} and \texttt{Oracle-noWR}, while maintaining comparable, and in some cases higher, accuracy.

Moreover, we assessed the quality of the approximations in terms of unbiasedness, variance, quality of estimates at any time of the stream, and number of counted and estimated triangles. Fig.~\ref{fig:theoretical-illustriation} shows the results for some representative datasets (see Appendix~\ref{sec:additional-experiments} for the all results). We observe that, as expected, all three algorithms return unbiased estimates, but \algname\ has a much lower variance (Fig.~\ref{fig:theoretical-illustriation} left), confirming our theoretical analysis (Section~\ref{sec:analysis_variance_compariso}). We also note that  \algname\ returns more accurate estimates than \wrs\ at any time $t$ in the stream (Fig.~\ref{fig:theoretical-illustriation} center-left).  Finally, from  the number of  triangles with 0/1/2 light edges counted and estimated by each algorithm (Fig.~\ref{fig:theoretical-illustriation} center-right, right), we observe that \algname\ leverages both the waiting room and the predictor, leading to an higher number of discovered triangles with low variance (recall that the fraction of memory budget dedicated to ``heavy'' edges is  0.3 for \chenalg\ and 0.19 for \algname ).

\begin{figure*}[htbp]
	\centering
	\includegraphics[width=1\textwidth]{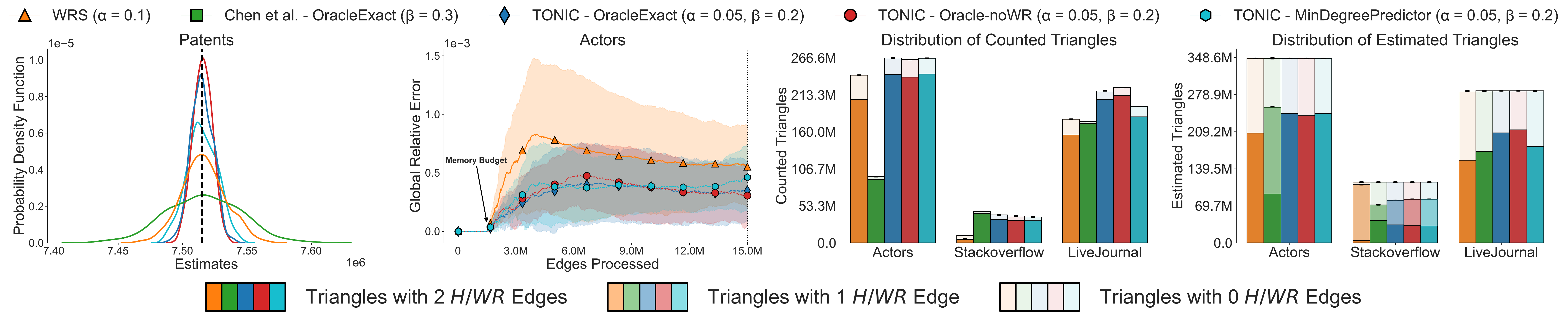}
	\caption{(Left) Distribution of estimates for Patents dataset. (Center-left) Estimation error as time progresses on Actors dataset. (Center-right) Number and type of triangles counted by each algorithm on three datasets. (Right) Fraction of each type of triangles in the total estimates by each algorithm on three datasets.}
	\label{fig:theoretical-illustriation}
\end{figure*}

As final experiments for insertion-only streams, we considered \emph{snapshot networks} with a sequence of graph streams in order to evaluate our algorithm \algname\ in a more challenging and realistic scenario. We considered datasets Oregon, AS-CAIDA and AS-733, including, respectively, 9, 122, and 733 graph streams, and ran each algorithm on each stream of the sequence, with a memory budget equal to 10\% of the number of edges of each considered stream. The predictors used by \algname\ and \chenalg\  are trained \emph{only} with the \emph{first} graph stream, and their predictions are used for each subsequent graph. Note that in this case, for each subsequent graph stream, \texttt{OracleExact} and \texttt{Oracle-noWR} are \emph{imperfect} predictors. Fig.~\ref{fig:snapshot_experiments_merged} reports the error for each graph stream in the sequence. \algname\ achieves outstanding performances with all predictors for all three datasets, with errors that are significantly smaller than \wrs\ and \chenalg\ across all graph streams. For AS-CAIDA and AS-733, with hundreds of graph streams, we observe that the results slightly deteriorate as more streams are considered, due to the fact that later graph streams can be very different from the first one in which the predictor was trained. In particular, for later graphs the data is growing significantly: in some cases, nodes, edges, and number of triangles are almost doubled. These results highlight the usefulness and practical impact of using the learned information from the predictor, as done by our algorithm \algname.

\begin{figure*}[htbp]
	\centering
	\includegraphics[width=1\textwidth]{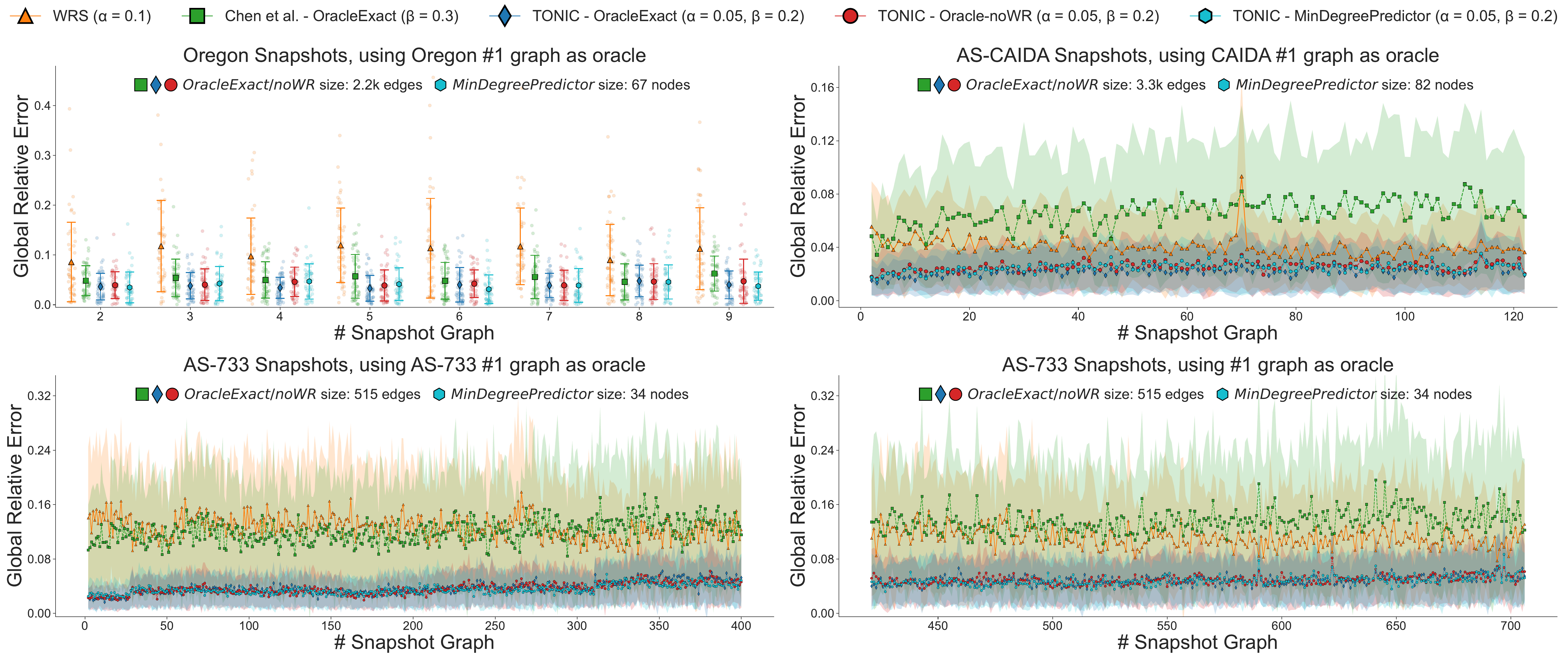}
	\caption{Error with snapshot networks with sequence of graph streams. The bottom plots are for the first 400 streams (left) and the remaining streams (right) of AS-733. In all cases the predictors are trained only on the first graph stream of the sequence (with results not shown on such graph stream). For each combination of algorithm and parameter (including predictor for \algname), the average and standard deviation over 50 repetitions are shown. The  algorithms parameters are as in legend (for \wrs\ and \chenalg\ they are fixed as in the respective publications; for \algname\ they are as chosen in Fig.~\ref{fig:accuracy_params_experiments_merged}). }
	\label{fig:snapshot_experiments_merged}
\end{figure*}

\begin{figure}[htbp]
	\centering
	\includegraphics[width=0.75\textwidth]{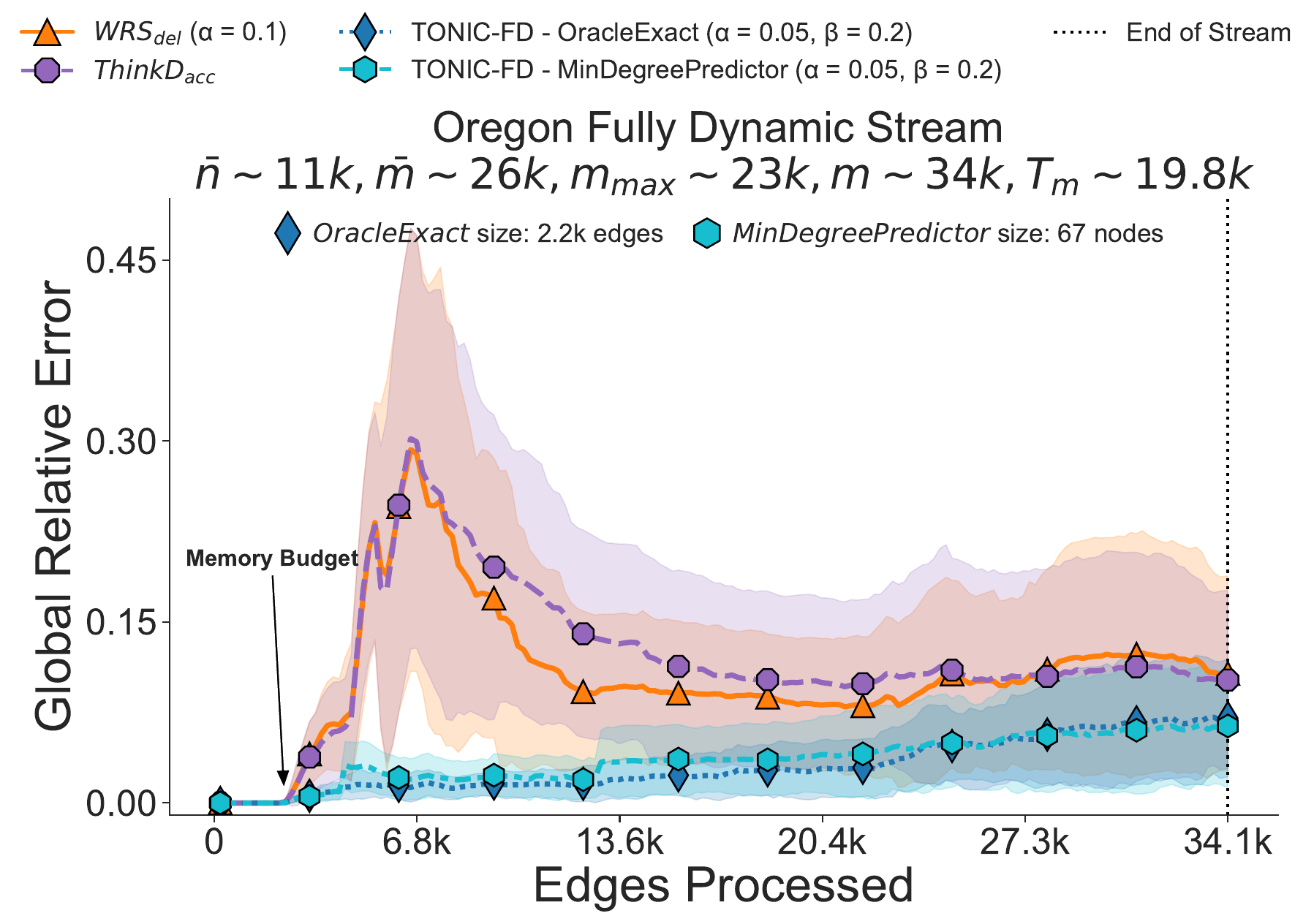}
	\caption{Estimation error as time progresses during Oregon fully dynamic stream. For each combination of algorithm and parameter (including predictor for \algnamedel), the average and standard deviation over 50 repetitions are shown. The algorithms parameters are as in legend (for \wrs\ they are fixed as in the respective publication; for \algnamedel\ they are as chosen in Fig.~\ref{fig:accuracy_params_experiments_merged}). $\bar{n}$: number of unique nodes;  $\bar{m}$: number of unique edges;  $m_{max}$: maximum number of edges at some time; $m$: total number of edges; $T_m$: number of global triangles at the end, derived from the FD stream.}
	\label{fig:fd_experiments}
\end{figure}

Finally, we give a glance of experimental results when dealing with fully dynamic (FD) streams, i.e., with graph streams that allow both insertions and deletions of edges (see Sect.\ref{sec:fd-experiments} in Appendix for all results).  More specifically, we created FD streams starting from our graph sequences datasets: Oregon, AS-CAIDA, AS-733 and Twitter. 
For each dataset, starting from the first graph of the sequence, we computed edge additions and removals between the current and the next snapshot, for all the snapshots in the sequence. 

Edge insertions are added to the FD stream by preserving the temporal ordering following the sequence, and edge deletions are added to the FD stream with random timestamps inside the time window of the snapshots that we are considering. 
We refer to our algorithm for FD streams as \algnamedel\, for which we describe the general workflow in Sect.~\ref{sec:tonc-fd-algo} in Appendix. Predictors for \algnamedel\ are trained \textit{only} on the \textit{first} snapshot of the sequence, in the same setting we described for the above snapshot experiments. Fig.~\ref{fig:fd_experiments} shows the estimation error as time progresses during Oregon FD stream. For all algorithms the memory budget is $k = m_{max}/10$ (highlighted by the black arrow in the stream).
Note that \algnamedel\ always achieves the most accurate estimate at any time through Oregon FD stream. Moreover, as shown in Sect.\ref{sec:fd-experiments} in Appendix, our algorithm \algnamedel\ is always faster than \wrsdel, and faster than or comparable to \thinkdacc, despite requiring removals of edges within our waiting room and heavy edge set.

\section{Conclusion}
In this paper we presented \algname, the first practical algorithm with predictions for fast and accurate triangle counting in insertion-only and fully dynamic graph streams. \algname\ combines waiting room sampling and reservoir sampling with a predictor for the heaviness of edges. We also propose a simple application-independent predictor, based on the degree of the nodes, that can be efficiently computed with 1 pass when the whole stream is available beforehand, and can be easily computed in a training phase in a sequence of graph streams. Our analysis shows that the predictor leads to improved estimates when the edges predicted as heavy do provide useful information, even if the predictor is far from perfect. Our experimental evaluation shows that \algname\ significantly improves the state-of-the-art in terms of error and runtime of the estimates. The improvement is particularly significant in challenging practical scenarios where sequences of hundreds of graph streams are analyzed. Our work opens several directions for future research, including the development of improved predictors.

\section*{Acknowledgment}
\noindent Work supported by “National Centre for HPC, Big Data and Quantum Computing” project, CN00000013 (approved under call M42C – Investimento 1.4 – Avvisto “Centri Nazionali” – D.D. n. 3138 of 16.12.2021, admitted to funding with MUR decree n. 1031 of 06.17.2022), and PRIN Project n. 2022TS4Y3N “EXPAND: scalable algorithms for EXPloratory Analyses of heterogeneous and dynamic Networked Data”

\balance


\appendix

\section{Notation and Symbols}
\label{sec:notations}
Table~\ref{tab:symbols} resumes the notation and symbols used throughout the manuscript, including algorithms and analysis. 

\begin{table}[htbp]
	\centering
	\footnotesize
	\resizebox{1 \textwidth}{!}{
		\begin{tabular}{ l | l }
			\toprule
			\textbf{Symbol} & \textbf{Definition}  \\
			\midrule
			\multicolumn{2}{l}{\textit{Notations for Graph Streams}} \\ 
			\midrule
			$G^{(t)}  = (V, \ E^{(t)})$ & Graph $G$ at time $t$ \\[0.05cm]
			$\{u, v\}$                                                     & (Unordered) edge between $u$ and $v$ \\[0.05cm]
			$\{u, v, +\}$                                                     & Addition of edge $\edge{u}{v}$ \\[0.05cm]
			$\{u, v, -\}$                                                     & Deletion of edge  $\edge{u}{v}$ \\[0.05cm]
			$\{u, v, w\}$                                                  & Triangle with nodes $u$, $v$ and $w$ \\[0.05cm]
			$deg(u)$										  & Degree of node $u$ \\[0.05cm]
			$e^{(t)}$                                           & Edge on the stream at time $t$ \\[0.05cm]
			$\Sigma = \{e^{(1)}, ..., e^{(m)} \}$                          & Arbitrary order graph stream \\[0.05cm]
			$\mathcal{T}^{(t)}$                                            & Set of global triangles in $G^{(t)}$                                                                    \\                                                                        [0.05cm]
			$T^{(t)}$								& Number of global triangles in $G^{(t)}$: $|\mathcal{T}^{(t)}| = T^{(t)}$\\										 [0.05cm]
			$\mathcal{T}_u^{(t)}$                                          & Set of local triangles in $G^{(t)}$                                                                     for node $u \in V$  \\[0.05cm]
			$T_u^{(t)}$							& Number of local triangles in $G^{(t)}$                                                                     for node $u \in V$: $|\mathcal{T}_u^{(t)}| = T_u^{(t)}$ \\  [0.05cm]
			\midrule
			\multicolumn{2}{l}{\textit{Notations for Algorithms and Analysis}} \\
			\midrule
			$k$                                                            & Memory budget, i.e., maximum number of edges                                                                      that can be stored \\[0.05cm]
			$\mathcal{W}$                                                  & Waiting room \\[0.05cm]
			$\alpha$                                                      & Fraction of the size of $\mathcal{W}$, i.e.,                                                              $|\mathcal{W}|/ k = \alpha, \alpha \in (0,1)$                                                                                   \\[0.05cm] 
			$\mathcal{H}$                                                  & Set of heavy edges \\[0.05cm]
			$\beta$                                                        & Fraction of the size of $\mathcal{H}$,  i.e., $|\mathcal{H}| / k(1 - \alpha) = \beta$, $\beta \in (0, 1)$ \\[0.05cm]
			$\mathcal{S}_L$                                                & Sample set of light edges of size                                                                                 $|\mathcal{S}_L| = s_{\ell} = k(1 - \alpha)(1 - \beta)$                                                                        \\[0.05cm]
			$\mathcal{S} = (\hat{V}, \hat{E})$        
			& Subgraph of stored edges: $\mathcal{S} =                         \mathcal{W} \cup \mathcal{H} \cup \mathcal{S}_L $ of size  $|\mathcal{S}| = k$ \\[0.05cm]
			$O_H$                                                          & Predictor for the measure for heaviness of edges \\[0.05cm]
			$L$                                                  & Set of 
			light edges in the stream, of size                                                                     $\ell = |L|$\\[0.05cm]
			$\hat{\mathcal{N}}_u$                                          & Set of neighbors of node $u$ in                                                                                   $\mathcal{S} $ \\[0.05cm]
			$\hat{T}$                                                      & Estimate of global triangles count \\[0.05cm]
			$\hat{T}_u$                                                    & Estimate of local triangles count                                                                  for node $u \in                                                                                    \hat{V}$ \\[0.05cm]
			$p_{uvw}$													& Probability of counting triangle $\{u, v, w\}$ \\[0.05cm]
			\bottomrule
	
	\end{tabular}}
	\caption{Table of notation and symbols used in our work}
	\label{tab:symbols}
\end{table}

\section{Description of Datasets}
\label{sec:dataset_description}
In this section, we provide a complete description of the datasets considered in Table~\ref{tab:datasets}.
Recall that, from each dataset, we removed self-loops and multiple edges, deriving a stream of insertion-only edges, for consistency with previous works. 
We report links for downloading each dataset in the README file in \url{https://github.com/VandinLab/Tonic}.
\vspace{\baselineskip}

\noindent \textbf{Single graphs:}
\setlist{nolistsep}
\begin{myitemize}
	\item \textit{Edit EN Wikibooks} contains the edit network of the English Wikipedia, representing users and pages connected by edit events. This dataset is also considered in \cite{chen2022triangle};
	\item \textit{SOC Youtube Growth} includes a list of all of the user-to-user links in Youtube video-sharing social network;
	\item \textit{Cit US Patents} \cite{hall2001nber} represents the citation graph between US patents, where each edge $\{u, v\}$ indicates that patent $u$ cited patent $v$ (used also in \cite{shin2020fast});
	\item \textit{Actors Collaborations} contains actors connected by an edge if they both appeared in a same movie. Thus, each edge is one collaboration between actors;
	\item \textit{Stackoverflow} represents interactions from the StackExchange site ``Stackoverflow''. The network is between users, and edges represent three types of interactions: answering a question of another user, commenting on another  user's question, and commenting on another user's answer;
	\item \textit{SOC LiveJournal} is a friendship network from LiveJournal free on-line community.
\end{myitemize}
\vspace{\baselineskip}
\noindent\textbf{Snapshot sequences:}
\begin{myitemize}
	\item \textit{Oregon} is a sequence of 9 graphs of Autonomous Systems (AS) peering information inferred from Oregon route-views between March 31 2001 and May 26 2001;
	\item \textit{AS-CAIDA} contains 122 RouteViews BGP graph snapshots, from January 2004 to November 2007;
	\item \textit{AS-733} are 733 daily instances which span an interval of 785 days from November 8 1997 to  January 2 2000, from the BGP logs.
	\item \textit{Twitter} \cite{boldi2011layered, Kwak10www} comprises 4 graphs of the Twitter following/followers network. In all our experiments, we consider each of the 4  single networks independently, and also the larger network \textit{Twitter-merged} obtained by merging the 4 graphs, used  also in \cite{stefani2017triest}.
\end{myitemize}

\section{Extension to Fully Dynamic Streams}
\label{sec:additionalalgos}
In this section, we describe the adaptation of our algorithm to fully dynamic (FD) graph streams. First, in Section~\ref{sec:random-pairing} we provide some preliminary notions of random pairing technique, used to maintain size-bounded uniform sample of light edges during the FD stream. Then, in Section~\ref{sec:tonc-fd-algo} we provide a description of our algorithm \algnamedel\ leveraging random pairing to adapt to FD scenarios. 

\subsection{Random Pairing}
\label{sec:random-pairing}
In our algorithm \algnamedel\ we sample edges through \emph{random pairing}~\cite{gemulla2008maintaining}, a method for incrementally maintaining a bounded-size uniform random sample of the items in a dataset in the presence of an arbitrary sequence of insertions and deletions, also adopted in~\cite{stefani2017triest, shin2018think, shin2020fast, lee2020temporal}. The goal of random pairing (RP) is to compensate sample deletions using subsequent insertions: if, at any point, all the previous edge deletions have been compensated, we will have maximum size for our sample, that is $|\mathcal{S}_L| = k(1 - \alpha)(1 - \beta) = s_\ell$.
At any time $t$ in the stream, \algnamedel\ stores in $\mathcal{S}_L$ a set of $k' \leq s_\ell$ edges sampled uniformly at random among all the \emph{light edges} of $G^{(t)}$, that are edges of $G^{(t)}$ that are neither in the waiting room $\mathcal{W}^{(t)}$ nor kept in the set $\mathcal{H}^{(t)}$ of (predicted) heavy edges. 
Following RP scheme, \algnamedel\ maintains counters $d_g$ and $d_b$ for respectively the number of good and the number of bad ``uncompensated'' deletions. RP works as follows: when receiving an edge deletions, it removes the edge from the sample $\mathcal{S}_L$ if present, and increments the counter $d_b$ for bad deletions, or otherwise it ignores the deletions and just increment the counter $d_g$ for good deletions. When receiving an edge insertion, if the number of uncompensated deletions $d = d_g + d_b = 0$, it proceeds as in standard Reservoir Sampling (see Section ~\ref{sec:reservoirs}). If $d > 0$, we need to account for the uncompensated past deletion(s): we flip a coin and include the incoming edge insertion with probability $d_b / (d_b + d_g)$, or otherwise we exclude the edge insertion from the sample; accordingly, we decrease counter $d_b$ if the edge is added to the sample to compensate for an old bad deletion, or we decrease the counter $d_g$ if the edge is excluded from the sample to compensate for an old good deletion.

Let $\mathcal{S}_L^{(t)}$  be the set of edges in $\mathcal{S}_L$ at the end of time step $t$. Random Pairing guarantees~\cite{gemulla2008maintaining} that for every time step $t$ with $|L^{(t)}| \ge k'$, if we let $A$ and $B$ be any two subsets of $L^{(t)}$ of size $|A| = |B| = k'$, then $ \mathbb{P}\left[\mathcal{S}_L^{(t)}= A\right] =  \mathbb{P}\left[\mathcal{S}_L^{(t)}= B\right]$, that is, any two samples of the same size $k'$ are equally likely to be produced, so that RP is a uniform sampling scheme.

\subsection{\algnamedel: Counting Triangles with Predictions in Fully Dynamic Streams}
\label{sec:tonc-fd-algo}
In \texttt{\algnamedel} (Alg.~\ref{alg:tonic_fd}), we provide the pseudocode of the adaptation of our algorithm to fully dynamic graph streams. The general workflow of \texttt{\algnamedel} is very similar to the insertion-only version presented in Section~\ref{sec:alg}. Since we are dealing with both edge insertion and deletions, we make use of random pairing (see Section~\ref{sec:random-pairing}).

Our algorithm \algnamedel\ works as follows: first, initializes the sets and the counters (lines~\ref{line:init-fd}, \ref{line:init-counters-fd}), including $d_g$ and $d_b$ that account for the uncompensated deletions. Then, for each edge in the stream (line~\ref{line:for-each-edge-fd}), given the current edge $\{u, v, \eta\}$, with $\eta \in \{+, -\}$, first it checks whether the edge is an edge insertion $\left( \eta = + \right)$ or an edge deletion $\left( \eta = - \right)$. If we have an edge insertion (line~\ref{line:algfd-eta-ins}), we proceed similarly to Alg.~\ref{alg:tonic-ins}, but paying attention to the number $d_g$ and $d_b$ of respectively good and bad deletion in the stream observed so far. As described in Sect.~\ref{sec:random-pairing}, $d_g$ represents the number of uncompensated edge deletions that involved the removal of edges that were not stored in the sample $\mathcal{S}_L$ (i.e., good deletions), while $d_b$ is the number of uncompensated deletions of edges that were currently stored in $\mathcal{S}_L$ (i.e., bad deletions), causing the decrement of the size of our stored sample $\mathcal{S}_L$. If such numbers are compensated (i.e., their sum is equal to zero, line~\ref{line:algfd-reservoir}), \algnamedel\ resumes to standard reservoir sampling (see Sect.~\ref{sec:reservoirs}). Otherwise, we need to account for the uncompensated number of deletions, and sample $\edge{u'}{v'}$ with probability $d_b / (d_b + d_g)$ (line \ref{line:algfd-uncompensated}). Based on such probability, we compensate the number of deletions by decrementing the respective counter (lines \ref{line:algfd-db--},~\ref{line:algfd-dg--}). 
If the current edge in the stream is an edge deletion (line~\ref{line:algfd-eta-del}), \algnamedel\ checks if the edge to remove is currently in the waiting room $\mathcal{W}$ or in the heavy edges set $\mathcal{H}$ (line~\ref{line:algfd-wh-del}); in such case, the algorithm removes the edge from the respective set. Otherwise, we are dealing with a removal of a light edge. We first need to decrease the number $\ell$ of light edges seen in the stream so far (line \ref{line:algfd-l--}), due to the incoming deletion; then, we have the following cases: (i) the current edge is in the current sample $\mathcal{S}_L$ of light edges. Algorithm~\ref{alg:tonic_fd} removes such edge from $\mathcal{S}_L$ (line~\ref{line:algfd-baddel}) and accounts for a bad deletion incrementing $d_b$ counter (line~\ref{line:db++}). (ii) the current edge is not stored in our sample $\mathcal{S}_L$. \algnamedel\ ignores the edge deletion, and accounts for a good deletion incrementing $d_g$ counter (line~\ref{line:algfd-gooddel}).
When returning the final estimates (line \ref{line:algfd-return}), we return always non-negative global and local triangle counts: in this way, we trade for unbiasedness of the algorithm (in case of negative estimates) by achieving better accuracy of approximations.

\begin{algorithm}[htbp]
	\caption{\algnamedel\  $\left(\Sigma, k, \alpha, \beta, O_H \right)$}
	\label{alg:tonic_fd}
	\footnotesize
	\LinesNumbered
	\kwInput{Arbitrary order fully dynamic edge stream $\Sigma = \{e^{(1)}, e^{(2)}, ... \}$; memory budget $k$; fraction of waiting room space $\alpha$; fraction of heavy edges space $\beta$; edge heaviness predictor $O_H$}
	\kwOutput{Estimate of global triangles count $\hat{T}$; 
		estimate of local triangles count $\hat{T_u}$ for each node $u$}
	$\mathcal{W} \longleftarrow \emptyset$; $\mathcal{H} \longleftarrow \emptyset $; $\mathcal{S}_L \longleftarrow \emptyset $\label{line:init-fd}\;
	$ \hat{T} \longleftarrow 0$; $ \ell \longleftarrow 0$; $d_g \longleftarrow 0$; $d_b \longleftarrow 0$\label{line:init-counters-fd}\;
	\For{each edge $e^{(t)} = \{u, v, \eta \}$ in the stream $\Sigma$\label{line:for-each-edge-fd}}{
		$\mathcal{S} \longleftarrow  \mathcal{W} \cup \mathcal{H} \cup \mathcal{S}_L$\;
		\texttt{CountTriangles-FD}$\left(\{u, v, \eta \}, \mathcal{S}, \ell, d_g, d_b\right)$\label{line:alg2counttriangles-fd}\;
		\uIf{$\eta == +$}{ \label{line:algfd-eta-ins}
			\lIf{$\left|\mathcal{W}\right| < k\alpha$}{$\mathcal{W} \longleftarrow \mathcal{W} \cup \{\{u, v\}\}$} \label{line:w-not-full-fd}
			
			\Else{
				$\{x, y\} \longleftarrow $ oldest edge in $\mathcal{W}$\label{line:w-front-fd}\;
				$\mathcal{W} \longleftarrow \mathcal{W} \setminus \{\{x, y\}\} \cup \{\{u, v\}\}$\; \label{line:w-pops-oldest-fd}
				\If{$\left|\mathcal{H}\right| < k \left(1 - \alpha\right) \beta $\label{line:h-not-full-fd}}{$\mathcal{H} \longleftarrow \mathcal{H} \cup \{\{x, y\}\}$\;}\label{line:h-notfull-gets-oldest-fd}
				
				\Else{
					$ \ell \longleftarrow \ell + 1$\label{line:l++-fd}\;
					$ \{u', v'\} \longleftarrow$ lightest edge in $\mathcal{H} $\label{line:get-lightest-H-fd}\;  
					\uIf{$O_H\left(\{x, y\}\right) >  O_H\left(\{u', v'\}\right)$\label{line:heaviness-comparison-fd}}{
						$\mathcal{H} \longleftarrow \mathcal{H} \cup \{\{x, y\}\} \setminus \{ \{u', v'\}\}$\label{line:h-gets-oldest-fd}\;
					}
					\lElse{ $\{u', v'\} \longleftarrow \{x, y\}$} 
					\uIf{$d_g + d_b == 0$}{ \label{line:algfd-reservoir}
						\uIf{$|\mathcal{S}_L| < k\left(1 - \alpha\right)\left(1 - \beta\right)$}{ \label{line:SL-not-full-fd}
							$\mathcal{S}_L \longleftarrow \mathcal{S}_L \cup \{\{u', v'\}\}$\label{line:SL-gets-lightest-fd}\;
						}
						\Else{
							\texttt{SampleLightEdge}$\left( \{u, v\}, \mathcal{S}_L, \ell \right)$\;
						}
					} \uElseIf{FlipBiasedCoin$\left( \frac{d_b}{d_b + d_g}  \right)$ == HEAD}{ \label{line:algfd-uncompensated}
							$\mathcal{S}_L \longleftarrow \mathcal{S}_L \cup \{\{u', v'\}\}$\;
							$d_b \longleftarrow d_b - 1$\; \label{line:algfd-db--}
						}
						\lElse{ $d_g \longleftarrow d_g - 1$} \label{line:algfd-dg--}
					
				}
			}
		} \uElseIf{$\eta == -$}{ \label{line:algfd-eta-del}
			\lIf{$\{u, v\} \in \mathcal{W} \cup \mathcal{H}$}{ \label{line:algfd-wh-del}
				$\mathcal{W} \cup \mathcal{H} \longleftarrow  \mathcal{W} \cup \mathcal{H} \setminus \{\{u, v\}\}$} \label{line:algfd-deletionH}
			\uElse{
				$\ell \longleftarrow \ell - 1$\;\label{line:algfd-l--}
				\uIf{$\{u, v\} \in \mathcal{S}_L$}{
					$\mathcal{S}_L \longleftarrow \mathcal{S}_L \setminus \{ \{u, v\} \}$\; \label{line:algfd-baddel}
					$d_b \longleftarrow d_b + 1$\;\label{line:db++}
				} \uElse{
					$d_g \longleftarrow d_g + 1$\; \label{line:algfd-gooddel}
				}
			}
		}
	}
	
	\Return \texttt{max}(0, $\hat{T}$), \texttt{max}(0, $\hat{T}_u$) for each node $u \in \mathcal{S}$\label{alg:return-fd}\; \label{line:algfd-return}
\end{algorithm}

In \texttt{CountTriangles-FD} (Alg.~\ref{alg:counttriangles-fd}), we report the procedure for counting triangles in the current subgraph in the fully dynamic setting. Such algorithm is slightly different from \texttt{CountTriangles} (Alg.~\ref{alg:counttriangles2}) since it needs to account for the counters $d_g$ and $d_b$ of good and bad deletions respectively for computing the correction probability, and the type $\eta$ (i.e., edge insertion or deletion) of the current edge in order to increment or decrement the global and local triangle estimates.

\begin{algorithm}[htbp]
\caption{\texttt{CountTriangles-FD}$\left(\{u, v, \eta \}, \mathcal{S}, \ell \right)$}
\label{alg:counttriangles-fd}
\footnotesize
\LinesNumbered
\kwInput{edge $\{u, v, \eta \}$; subgraph $\mathcal{S} = \left(\hat{V}, \ \hat{E}\right)$; number $\ell$ of predicted light edges in the stream $\Sigma^{(t)}$; counter $d_g$ for uncompensated good deletions; counter $d_b$ for uncompensated bad deletions}
\For{each node $w$ in $\hat{\mathcal{N}}_u \cap \hat{\mathcal{N}}_v$}{ \label{line:common-neighs-fd}
	initialize $\hat{T_u}$, $\hat{T_v}$, $\hat{T_w}$ to zero if not set yet\;
	$p_{uvw} = 1$\;
	\uIf{$\{w, u\} \in \mathcal{S}_L$ AND $\{v, w\} \in \mathcal{S}_L$}{
		$p_{uvw} = \min \left( 1, \ \frac{k(1 - \alpha)(1 - \beta)}{\ell + d_g + d_b} \times \frac{k(1 - \alpha)(1 - \beta) - 1}{\ell + d_g + d_b - 1} \right)$\;
	} \uElseIf{$\{w, u\} \in \mathcal{S}_L$ OR $\{v, w\} \in \mathcal{S}_L$}{
		$p_{uvw} = \min \left( 1, \frac{k(1 - \alpha)(1 - \beta)}{\ell + d_g + d_b} \right) $\;
	}
	\uIf{$\eta == +$}{
		increment $\hat{T}$, $\hat{T}_u$, $\hat{T}_v$, $\hat{T}_w$ by $ 1 / p_{uvw}$\; 
	} \uElseIf{$\eta == -$}{
		decrement $\hat{T}$, $\hat{T}_u$, $\hat{T}_v$, $\hat{T}_w$ by $ 1 / p_{uvw}$\; 
	}
}

\end{algorithm}

\section{Proofs}
\label{sec:proofs}
In this section, we provide theoretical proofs of our claims. We describe the correctness (probabilities computation), unbiasedness, variance, time and space complexity of our algorithms \algname\ and \algnamedel. Also, we prove the propositions in Sect.~\ref{sec:analysis_variance_compariso} on the variance comparison between \algname\ algorithm and \wrs\ algorithm.

\subsection{Probabilities Computation}
\label{sec:prob_comp}

In the following, we prove that the probability $p_{uvw}$ computed and used by procedure \texttt{CountTriangles}, Alg.~\ref{alg:counttriangles2} (resp. \texttt{CountTriangles-FD}, Alg.~\ref{alg:counttriangles-fd}) within \algname\  (\algnamedel), corresponds to the probability that triangle $\{u,v,w\}$ is counted by \algname\ (counted or deleted by \algnamedel).

In general,  given a triangle $\{u, v, w\}$ discovered when edge $\{u,v\}$ is observed in the stream, the corresponding probability $p_{uvw}$  depends on where the other two edges $\{v, w\}, \{w,u\}$ are stored by our algorithm, the counter $\ell$ of predicted light edges seen so far, and on the number $d_g$ of good and the number $d_b$ of bad deletions if $\{u,v\}$ is an edge deletion.  Let us denote without loss of generality $e_{uvw}^{(1)} = \{v, w\}$ as the first edge of the triangle $\triang{u}{v}{w}$ arrived in the stream; similarly $e_{uvw}^{(2)}~=~\{w, u\}$ as the second one, and $e_{uvw}^{(3)} = \{u, v\}$ as the last one, which corresponds to the edge currently on the stream at the moment in which we discover the triangle $\{u, v, w\}$. We have the following results.

\begin{lemma}
	\label{lemma:triangleprob2}
	Let $\{u,v\}$ be the edge observed at time $t$ in the \emph{insertion-only} stream and let $\ell$ be the number of light edges seen up to time $t$ in the stream. For any triangle $\{u,v,w\}$ with last edge $\{u,v\}$, the probability $p_{uvw}$ that $\{u,v,w\}$ is discovered by \algname\ is:
	\begin{enumerate}
		\item if $\{v, w\} \text{ and }  \{w, u\} \in \mathcal{W}^{(t - 1)} \cup \mathcal{H}^{(t - 1)} $, then $p_{uvw} = 1$;
		\item if $\{v, w\} \text{ and }  \{w, u\} \in \mathcal{S}_L^{(t - 1)}$, \sloppy{then $p_{uvw}~=~\min\left\{ 1, \ \frac{k(1 - \alpha)(1 - \beta)}{\ell} \times \frac{k(1 - \alpha)(1 - \beta) - 1}{\ell - 1} \right\}$;}		
		\item if only one between $\{v, w\} \text{ and }  \{w, u\}$ is in $\mathcal{W}^{(t - 1)}~\cup~\mathcal{H}^{(t - 1)} $, then $p_{uvw} = \min\left\{ 1, \ \frac{k(1 - \alpha)(1 - \beta)}{\ell} \right\}$;
	\end{enumerate}
\end{lemma}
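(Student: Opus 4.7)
The plan is to handle the three cases separately, in each case identifying exactly the source of randomness in whether the two ``older'' edges $e_{uvw}^{(1)}$ and $e_{uvw}^{(2)}$ of the triangle are present in $\mathcal{S}^{(t-1)} = \mathcal{W}^{(t-1)} \cup \mathcal{H}^{(t-1)} \cup \mathcal{S}_L^{(t-1)}$ when the closing edge $\{u,v\}$ arrives. Note that the procedure \texttt{CountTriangles} enumerates common neighbors of $u$ and $v$ in $\mathcal{S}^{(t-1)}$, so a triangle is discovered if and only if both other edges lie in $\mathcal{S}^{(t-1)}$.

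For Case 1, I would observe that both $\mathcal{W}^{(t-1)}$ and $\mathcal{H}^{(t-1)}$ are maintained \emph{deterministically} by \algname\ (the waiting room is FIFO by arrival order, and $\mathcal{H}$ is determined by the predictor $O_H$ and the stream prefix). Hence if both $\{v,w\}$ and $\{w,u\}$ are in $\mathcal{W}^{(t-1)} \cup \mathcal{H}^{(t-1)}$, both are present with probability $1$, giving $p_{uvw}=1$.

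For Case 3, the edge that lies in $\mathcal{W}^{(t-1)}\cup\mathcal{H}^{(t-1)}$ is present with probability $1$; the discovery probability is therefore the marginal probability that the other edge, a light edge, sits in $\mathcal{S}_L^{(t-1)}$. Invoking the reservoir sampling guarantee recalled in Section~\ref{sec:reservoirs}, $\mathcal{S}_L^{(t-1)}$ is a uniform random subset of size $s_\ell = k(1-\alpha)(1-\beta)$ drawn from the $\ell$ light edges observed so far (when $\ell \geq s_\ell$), so the marginal probability that a specific light edge is in $\mathcal{S}_L^{(t-1)}$ equals $\binom{\ell-1}{s_\ell-1}/\binom{\ell}{s_\ell} = s_\ell/\ell$. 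If $\ell < s_\ell$ then $\mathcal{S}_L$ has not yet become full and the edge is present with probability $1$; the $\min\{1,\cdot\}$ in the statement covers both regimes.

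For Case 2, I would compute the joint probability that two \emph{specific} light edges lie in $\mathcal{S}_L^{(t-1)}$. By the same uniform-sample property, this probability is
\[
\frac{\binom{\ell-2}{s_\ell-2}}{\binom{\ell}{s_\ell}} \;=\; \frac{s_\ell(s_\ell-1)}{\ell(\ell-1)},
\]
which matches the stated expression up to the $\min\{1,\cdot\}$ clamp that again handles the case $\ell < s_\ell$. The only subtle point, and the main obstacle, is to verify that the two events ``$\{v,w\}\in\mathcal{S}_L^{(t-1)}$'' and ``$\{w,u\}\in\mathcal{S}_L^{(t-1)}$'' can indeed be analyzed jointly under a \emph{single} uniform sample of the light edges at time $t-1$: this requires checking that the labels ``light'' assigned to $\{v,w\}$ and $\{w,u\}$ are determined by the deterministic updates of $\mathcal{W}$ and $\mathcal{H}$ (independently of the random sampling within $\mathcal{S}_L$), so that conditioning on both being light does not bias the reservoir distribution. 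Once this is established, Case 2 follows directly from the uniform sampling property, completing the proof.
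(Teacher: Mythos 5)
Your proposal is correct and follows essentially the same approach as the paper's proof: both arguments treat $\mathcal{W}$ and $\mathcal{H}$ as deterministically populated, reduce cases 2 and 3 to inclusion probabilities for a uniform $s_\ell$-subset of the $\ell$ light edges (using the reservoir sampling guarantee), and handle the not-yet-full regime via the $\min\{1,\cdot\}$ clamp. Your hypergeometric expressions $\binom{\ell-1}{s_\ell-1}/\binom{\ell}{s_\ell}$ and $\binom{\ell-2}{s_\ell-2}/\binom{\ell}{s_\ell}$ are just the combinatorial form of the paper's chain-rule factorization $\frac{s_\ell}{\ell}$ and $\frac{s_\ell}{\ell}\cdot\frac{s_\ell-1}{\ell-1}$; the ``obstacle'' you flag (that the light/heavy classification is determined solely by the stream order and the predictor, so conditioning on both edges being light does not perturb the reservoir distribution) is indeed the implicit justification the paper relies on without stating it.
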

\begin{proof}
	Observe that when the edge $e^{(t)} = \{u, v\}$ arrives, the triangle $\{u, v, w\}$ is counted if and only if $\{v, w\}$ and $\{w, u\}$ are in $\mathcal{H}^{(t-1)} \cup \mathcal{W}^{(t-1)} \cup \mathcal{S}^{(t - 1)}_L = \mathcal{S}^{(t - 1)}$.
	Proceeding by cases, we have:
	\begin{enumerate}
		\item both edges in the waiting room or heavy edge set, i.e., $\{w, u\} \ \in \ \mathcal{W}^{(t - 1)} \cup \mathcal{H}^{(t - 1)}$ and $\{v, w\} \ \in~\mathcal{W}^{(t - 1)}~\cup~\mathcal{H}^{(t - 1)}$. Since \algname\ (Alg.~\ref{alg:tonic-ins}) always stores edges in the waiting room and in the heavy edge set (lines~\ref{line:alg2WR1},~\ref{line:alg2WR2},~\ref{line:alg2Hnotfull},~\ref{line:alg2H}), then the triangle is counted with probability $p_{uvw}^{(t)} = 1$;
		\item both edges $\{v, w\} \text{ and }  \{w, u\}$ are in the sample of light edges $\mathcal{S}_L^{(t - 1)}$: here, we need to distinguish between two cases. If the sample of light edges $\mathcal{S}_L^{(t - 1)}$ is not full yet, i.e., if $\ell \leq k(1 - \alpha)(1 - \beta)$, we have sampled both edges deterministically, hence the discovered triangle is counted with probability 1. Otherwise, when  $\ell > k(1 - \alpha)(1 - \beta)$, we have that:
		\begin{small}
			\begin{align*}
				& \mathbb{P}\left[\{v, w\} \ \in \mathcal{S}_L^{(t - 1)} \ \text{and} \ \{w, u\} \ \in \mathcal{S}_L^{(t - 1)}\right] = \\
				&= \mathbb{P}\left[\{v, w\} \ \in \mathcal{S}_L^{(t - 1)}\right] \\
				& \times \ \mathbb{P}\left[\{w, u\} \ \in \mathcal{S}_L^{(t - 1)} \ | \ \{v, w\} \ \in \mathcal{S}_L^{(t - 1)}\right] = \\
				&= \frac{k(1 - \alpha)(1 - \beta)}{\ell} \times \ \frac{k(1 - \alpha)(1 - \beta) - 1}{\ell - 1}  = p_{uvw}^{(t)},
			\end{align*}
		\end{small}
		\noindent hence, the overall probability can be written as $p_{uvw}^{(t)}~=~\min\left\{ 1, \ \frac{k(1 - \alpha)(1 - \beta)}{\ell} \times \frac{k(1 - \alpha)(1 - \beta) - 1}{\ell - 1} \right\}$;
		
		\item if only one between $\{v, w\}$ and  $\{w, u\}$ is in $\mathcal{S}_L^{(t - 1)}$, similarly as before, we count the triangle with probability 1 if the sample of light edges is not full yet, else (assuming, without loss of generality, that $\{v, w\}$ is the edge observed in $\mathcal{S}_L^{(t - 1)}$ at time $t$) we can write:
		\begin{align*}
			& \mathbb{P}\left[\{v, w\} \ \in \mathcal{S}_L^{(t - 1)} \ \text{and} \ \{w, u\} \ \in \mathcal{H}^{(t - 1)} \cup \mathcal{W}^{(t - 1)} \right] = \\
			&= \mathbb{P}\left[\{v, w\} \ \in \mathcal{S}_L^{(t - 1)}\right]= \\
			&= \frac{k(1 - \alpha)(1 - \beta)}{\ell}  = p_{uvw}^{(t)},  
		\end{align*}
		\noindent thus, we have that $p_{uvw}^{(t)} = \min\left\{ 1, \ \frac{k(1 - \alpha)(1 - \beta)}{\ell} \right\}$.
	\end{enumerate}
\end{proof}

\begin{lemma}
	\label{lemma:triangleprobfd}
	Let $\{u,v\}$ be the edge observed at time $t$ in the \emph{fully dynamic} stream, let $\ell$ be the number of light edges seen up to time $t$ in the stream, and let $d_g$ and $d_b$ be respectively the uncompensated number of good and bad deletions up to time $t$ in the stream. For any triangle $\{u,v,w\}$ with last edge $\{u,v\}$, the probability $p_{uvw}$ that $\{u,v,w\}$ is discovered by \algnamedel\ is:
	\begin{enumerate}
		\item if $\{v, w\} \text{ and }  \{w, u\} \in \mathcal{W}^{(t - 1)} \cup \mathcal{H}^{(t - 1)} $, then $p_{uvw} = 1$;
		\item if $\{v, w\} \text{ and }  \{w, u\} \in \mathcal{S}_L^{(t - 1)}$, \sloppy{then $p_{uvw}~=~\min\left\{ 1, \ \frac{k(1 - \alpha)(1 - \beta)}{\ell + d_g + d_b} \times \frac{k(1 - \alpha)(1 - \beta) - 1}{\ell + d_g + d_b - 1} \right\}$;}		
		\item if only one between $\{v, w\} \text{ and } \{w, u\}$ is in $\mathcal{W}^{(t - 1)}~\cup~\mathcal{H}^{(t - 1)} $, then $p_{uvw} = \min\left\{ 1, \ \frac{k(1 - \alpha)(1 - \beta)}{\ell + d_g + d_b} \right\}$;
	\end{enumerate}
\end{lemma}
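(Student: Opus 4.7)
The plan is to mirror the proof of Lemma~\ref{lemma:triangleprob2}, proceeding by cases on where the other two edges of the triangle $\{u,v,w\}$ are stored at time $t-1$, and replacing the reservoir-sampling analysis used there with the uniformity guarantee of random pairing recalled in Section~\ref{sec:random-pairing}.

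First, I would dispatch Case~1 as in the insertion-only setting. Edges in $\mathcal{W}$ and $\mathcal{H}$ are kept deterministically: they are displaced only by newer (resp.\ heavier) incoming edges, or removed by an explicit matching deletion handled on line~\ref{line:algfd-deletionH}. By the hypothesis of Case~1, neither of $\{v,w\}, \{w,u\}$ has been displaced or deleted before step $t$, so both are present when $\{u,v\}$ arrives and the triangle is counted with probability $1$.

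For Cases~2 and 3, the key ingredient is the following marginal property of random pairing, which I would invoke directly from the analysis of Gemulla et al.~\cite{gemulla2008maintaining} (used analogously in \cite{stefani2017triest,shin2020fast}): for any specific light edge $e$ that is currently present in $G^{(t-1)}$,
\[
\mathbb{P}\!\left[e \in \mathcal{S}_L^{(t-1)}\right] \;=\; \min\!\left\{1,\; \frac{s_\ell}{\ell + d_g + d_b}\right\},
\]
where $s_\ell = k(1-\alpha)(1-\beta)$. The denominator $\ell + d_g + d_b$ is the correct effective ``sample-space size'' because $\ell$ counts light edges currently in the graph while $d_g + d_b$ tallies the uncompensated deletions that would otherwise have reduced the effective population; this is precisely what makes random pairing equivalent to reservoir sampling on the hypothetical insertion-only substream. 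With this marginal in hand, Case~3 follows immediately: the edge lying in $\mathcal{W} \cup \mathcal{H}$ is present with probability $1$ (Case~1 argument), and the edge lying in $\mathcal{S}_L$ is present with probability equal to the displayed marginal, which is exactly the claimed $p_{uvw}$.

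For Case~2, I would additionally need the conditional probability that a second specific light edge is in $\mathcal{S}_L^{(t-1)}$ given that the first one is. The uniformity guarantee of random pairing says that, conditioned on the sample size, $\mathcal{S}_L^{(t-1)}$ is a uniform random subset of the currently-present light edges. Conditioning on a fixed element being in the sample and applying the same exchangeability/marginal argument to the remaining slots yields
\[
\mathbb{P}\!\left[\{w,u\} \in \mathcal{S}_L^{(t-1)} \,\middle|\, \{v,w\} \in \mathcal{S}_L^{(t-1)}\right] \;=\; \min\!\left\{1,\; \frac{s_\ell - 1}{\ell + d_g + d_b - 1}\right\},
\]
and multiplying with the marginal gives the stated product. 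The main obstacle I expect is justifying cleanly the marginal $s_\ell/(\ell + d_g + d_b)$, because the sample size $|\mathcal{S}_L^{(t-1)}|$ is itself random (it depends on the interleaving of insertions and deletions through the $d_b$-vs-$d_g$ coin flips on line~\ref{line:algfd-uncompensated}). I would handle this by induction on $t$, showing that at every step random pairing preserves the invariant that $\mathcal{S}_L^{(t)}$ is a uniform sample of the light edges currently in $G^{(t)}$ of the appropriate (random) size, and then taking the marginal over that size — this is exactly the invariant established in \cite{gemulla2008maintaining}, so I would cite it rather than re-derive it in full.
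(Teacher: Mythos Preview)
Your proposal is correct and follows essentially the same three-case analysis as the paper: Case~1 by the deterministic retention of $\mathcal{W}\cup\mathcal{H}$, Cases~2 and~3 by the marginal and conditional inclusion probabilities under random pairing. The paper's own proof is in fact terser than yours---it simply writes down $\frac{s_\ell}{\ell+d_g+d_b}$ and $\frac{s_\ell-1}{\ell+d_g+d_b-1}$ without commenting on the random-sample-size issue you flag; your plan to lean on~\cite{gemulla2008maintaining} (and the analogous invariants used in~\cite{stefani2017triest,shin2020fast}) for that step is exactly what is needed and is, if anything, more careful than the paper's treatment.
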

\begin{proof}
	Observe that when the edge $e^{(t)} = \{u, v, \eta\}$ arrives, the triangle $\{u, v, w\}$ is counted (if $\eta = +$) or deleted (if $\eta = -$)  if and only if $\{v, w\}$ and $\{w, u\}$ are in $\mathcal{H}^{(t-1)}~\cup~\mathcal{W}^{(t-1)}~\cup~\mathcal{S}^{(t - 1)}_L = \mathcal{S}^{(t - 1)}$.
	We have:
	\begin{enumerate}
		\item both edges in the waiting room or heavy edge set, i.e., $\{w, u\} \ \in \ \mathcal{W}^{(t - 1)} \cup \mathcal{H}^{(t - 1)}$ and $\{v, w\} \ \in~\mathcal{W}^{(t - 1)}~\cup~\mathcal{H}^{(t - 1)}$. Since \algnamedel\ (Alg.~\ref{alg:tonic_fd}) always stores edges in the waiting room and in the heavy edge set (lines~\ref{line:w-not-full-fd},~\ref{line:w-pops-oldest-fd},~\ref{line:h-notfull-gets-oldest-fd},~\ref{line:h-gets-oldest-fd}), then the triangle is counted or deleted with probability $p_{uvw}^{(t)} = 1$;
		\item both edges $\{v, w\} \text{ and } \{w, u\}$ are in the sample of light edges $\mathcal{S}_L^{(t - 1)}$: here, we need to distinguish between two cases. If the sample of light edges $\mathcal{S}_L^{(t - 1)}$ is not full yet, i.e., if $\ell \le |\mathcal{S}_L^{(t)}|$, we have sampled both edges deterministically, hence the discovered triangle is counted or deleted with probability 1. Otherwise, when  $\ell > |\mathcal{S}_L^{(t)}|$, we have that:
		\begin{small}
			\begin{align*}
				& \mathbb{P}\left[\{v, w\} \ \in \mathcal{S}_L^{(t - 1)} \ \text{and} \ \{w, u\} \ \in \mathcal{S}_L^{(t - 1)}\right] = \\
				&= \mathbb{P}\left[\{v, w\} \ \in \mathcal{S}_L^{(t - 1)}\right] \\
				& \times \ \mathbb{P}\left[\{w, u\} \ \in \mathcal{S}_L^{(t - 1)} \ | \ \{v, w\} \ \in \mathcal{S}_L^{(t - 1)}\right] = \\
				&= \frac{k(1 - \alpha)(1 - \beta)}{\ell + d_g + d_b} \times \ \frac{k(1 - \alpha)(1 - \beta) - 1}{\ell + d_g + d_b - 1}  = p_{uvw}^{(t)},
			\end{align*}
		\end{small}
		hence, the overall probability can be written as $p_{uvw}^{(t)}~=~\min\left\{ 1, \ \frac{k(1 - \alpha)(1 - \beta)}{\ell + d_g + d_b} \times \frac{k(1 - \alpha)(1 - \beta) - 1}{\ell + d_b + d_g - 1} \right\}$;
		
		\item if only one between $\{v, w\}$ and  $\{w, u\}$ is in $\mathcal{S}_L^{(t - 1)}$, similarly as before, we count or delete the triangle with probability 1 if the sample of light edges is not full yet, else (assuming, without loss of generality, that $\{v, w\}$ is the edge observed in $\mathcal{S}_L^{(t - 1)}$ at time $t$) we can write:
		\begin{align*}
			& \mathbb{P}\left[\{v, w\} \ \in \mathcal{S}_L^{(t - 1)} \ \text{and} \ \{w, u\} \ \in \mathcal{H}^{(t - 1)} \cup \mathcal{W}^{(t - 1)} \right] = \\
			&= \mathbb{P}\left[\{v, w\} \ \in \mathcal{S}_L^{(t - 1)}\right]= \\
			&= \frac{k(1 - \alpha)(1 - \beta)}{\ell + d_b + d_g}  = p_{uvw}^{(t)},
		\end{align*}
		thus, we have that $p_{uvw}^{(t)} = \min\left\{ 1, \ \frac{k(1 - \alpha)(1 - \beta)}{\ell + d_b + d_g} \right\}$.
	\end{enumerate}
\end{proof}

\subsection{Unbiasedness of \algname}

We now prove that \algname\ and \algnamedel\ report unbiased estimates of the true global and local triangle counts (Thm~\ref{thm:unbiasedness}). In our proof we assume that $|\Sigma|>3$.

\begin{theorem}
	\algname\ and \algnamedel\ return unbiased estimates of the global triangle count and of the local triangle counts for each node, at any time $t$. That is, if we let $T^{(t)}$ and $T_u^{(t)}$ be respectively the true global count of triangles in the graph and the true local triangle count for node $u \in V$ at time $t$, we have:
	\begin{equation}
		\mathbb{E}\left[\hat{T}^{(t)}\right] = T^{(t)}, \ \forall \ t \geq 0
		\label{eq:gloablcount2-appendix}
	\end{equation}
	\begin{equation}
		\mathbb{E}\left[\hat{T}_u^{(t)}\right] = T_u^{(t)} \ \forall u \in V, \ \forall \ t \geq 0
		\label{eq:localcount2-appendix}
	\end{equation}
	\label{theorem:unbiasedness2-appendix}
\end{theorem}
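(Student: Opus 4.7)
The plan is to prove unbiasedness via linearity of expectation on a per-triangle basis. The key observation is that, by construction, each triangle $\Delta$ is discovered and added to $\hat{T}$ with probability equal to the value $p_\Delta$ computed by \texttt{CountTriangles} (respectively \texttt{CountTriangles-FD}), so that the expected contribution of $\Delta$ to $\hat{T}^{(t)}$ is exactly $1$ whenever $\Delta\in\mathcal{T}^{(t)}$.

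For the insertion-only case, I would fix a triangle $\Delta$ with edges $e_1,e_2,e_3$ arriving in this order in $\Sigma$ and focus on the time $t'$ at which $e_3$ arrives. Lemma~\ref{lemma:triangleprob2} identifies three disjoint cases, matching exactly the three branches of \texttt{CountTriangles}: both earlier edges already in $\mathcal{W}\cup\mathcal{H}$ (deterministic, $p_\Delta=1$), one in $\mathcal{W}\cup\mathcal{H}$ and one in $\mathcal{S}_L$, or both in $\mathcal{S}_L$. The status in $\mathcal{W}^{(t'-1)}\cup\mathcal{H}^{(t'-1)}$ is determined by the stream and by the (deterministic) predictor comparisons, so the only randomness lies in which light edges made it into $\mathcal{S}_L^{(t'-1)}$. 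By the uniform-sample property of reservoir sampling recalled in Section~\ref{sec:reservoirs}, the conditional probability that the necessary edges sit in $\mathcal{S}_L^{(t'-1)}$ is exactly the value $p_\Delta$ that the algorithm divides by. Hence $\mathbb{E}[\mathbb{1}[\Delta\text{ discovered}]/p_\Delta]=1$, and summing over $\Delta\in\mathcal{T}^{(t)}$ gives $\mathbb{E}[\hat{T}^{(t)}]=T^{(t)}$; restricting the sum to triangles incident to $u$ gives $\mathbb{E}[\hat{T}^{(t)}_u]=T^{(t)}_u$.

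For the fully dynamic version, I would argue by induction on $t$, showing that the net expected change of $\hat{T}^{(t)}$ at each arriving stream element matches the true change of $T^{(t)}$. For an edge insertion that completes a triangle, the analysis is essentially identical to the insertion-only case, using Lemma~\ref{lemma:triangleprobfd} in place of Lemma~\ref{lemma:triangleprob2} and the random-pairing uniform-sampling guarantee (Section~\ref{sec:random-pairing}) in place of the reservoir-sampling guarantee. For a deletion that destroys a currently existing triangle, \texttt{CountTriangles-FD} subtracts $1/p_\Delta$ under the same probability-of-visibility condition and with the same $p_\Delta$, so the expected decrement exactly offsets the expected increment previously earned by that triangle. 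Edges that appear in $\mathcal{W}\cup\mathcal{H}$ at deletion time, or that oscillate in and out of the light stream via repeated insert/delete cycles, are absorbed into the analysis through the counters $d_g, d_b$ that already enter the formula for $p_\Delta$ in Lemma~\ref{lemma:triangleprobfd}.

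The main obstacle is precisely this step of the fully dynamic argument: showing rigorously that the uniform-sampling invariant for $\mathcal{S}_L^{(t)}$ is preserved at every time step, so that Lemma~\ref{lemma:triangleprobfd} is applicable throughout. This reduces to verifying that the random-pairing probability $d_b/(d_b+d_g)$ used to admit new insertions exactly compensates the outstanding bad and good deletions, which is the content of the correctness of random pairing~\cite{gemulla2008maintaining}. One minor caveat to flag is that the $\max(0,\cdot)$ truncation performed at line~\ref{line:algfd-return} of \algnamedel\ can introduce bias; the theorem should therefore be read as a statement about $\hat{T}^{(t)}$ and $\hat{T}^{(t)}_u$ before this final post-processing step, as the authors remark in the discussion following Algorithm~\ref{alg:tonic_fd}.
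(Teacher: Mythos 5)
Your proposal is correct and follows essentially the same route as the paper's proof: decompose the estimate as a sum of per-triangle contributions, use Lemma~\ref{lemma:triangleprob2} (respectively Lemma~\ref{lemma:triangleprobfd}) to show each discovered-triangle random variable has expectation $+1$ (or $-1$ for deletions), and conclude by linearity of expectation, restricting to incident triangles for the local counts. Your induction framing for the fully dynamic case and your explicit caveat about the $\max(0,\cdot)$ truncation at line~\ref{line:algfd-return} are modest refinements of presentation, but the underlying argument is the same as Theorem~\ref{theorem:unbiasedness2-appendix}'s proof via the $\delta_{uvw}^{(s)}$ random variables.
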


\begin{proof}
	Let $\delta_{uvw}^{(s)}$ be the random variable representing the amount of increase or decrease of the counters due to the discovery or deletion of the triangle $\{u, v, w\}^{(s)}$ at time $s \leq t$, given the current edge $\edget{u}{v}{+}^{(t)}$ for \algname\ (insertion-only), and $\edget{u}{v}{\eta}^{(t)}$ for \algnamedel. 
	Then, the following holds:
	
	\begin{small}
	\begin{equation}
		\delta_{uvw}^{(s)} = 
		\begin{cases}
			1 / p_{uvw}^{(s)} & \text{if } \eta = + \text{ and } \{v, w \}, \{w, u\} \in \mathcal{S}^{(s - 1)}\\
			- 1 / p_{uvw}^{(s)} & \text{if } \eta = - \text{ and }  \{v, w \}, \{w, u\} \in \mathcal{S}^{(s - 1)}\\
			0                 & \text{otherwise.}
		\end{cases}
		\label{eq:delta_uvw2}
	\end{equation}
	\end{small}
	Combining the above with Lemma~\ref{lemma:triangleprob2}~and~\ref{lemma:triangleprobfd}, we have $\mathbb{E}\left[\delta_{uvw}^{(s)}\right] = 1 / p_{uvw}^{(s)} \times p_{uvw}^{(s)} \ + \ 0 \ \times (1 - p_{uvw}^{(s)}) = 1$ if the triangle $\triang{u}{v}{w}^{(s)}$ is counted, or $\mathbb{E}\left[\delta_{uvw}^{(s)}\right] = - 1 / p_{uvw}^{(s)} \times p_{uvw}^{(s)} \ + \ 0 \ \times (1 - p_{uvw}^{(s)}) = -1$ if the triangle $\triang{u}{v}{w}^{(s)}$ is deleted.
	We can express the estimated number of triangles $\hat{T}^{(t)}$ as $\hat{T}^{(t)} = \sum_{\{u, v, w\}^{(s)} \in \mathcal{T}^{(t)}} \ \delta_{uvw}^{(s)}$, for $s \leq t$. So, we can write:
	\begin{small}
	\begin{align*}
		\mathbb{E}\left[\hat{T}^{(t)}\right] =& \ \mathbb{E} \left[ \sum_{\{u, v, w\}^{(s)} \in \mathcal{T}^{(t)}} \ \delta_{uvw}^{(s)} \right] = \\
		= &  \sum_{\{u, v, w\}^{(s)} \in \mathcal{T}^{(t)}} \ \mathbb{E}\left[\delta_{uvw}^{(s)}\right] = \left|\mathcal{T}^{(t)}\right| = T^{(t)} \\
	\end{align*}
	\end{small}
	which proves Eq. \ref{eq:gloablcount2-appendix}. Similarly, we can derive the estimated local count for each node $u$ as follows:
	\begin{small}
	\begin{align*}
		\mathbb{E}\left[\hat{T}_u^{(t)}\right] =& \ \mathbb{E} \left[ \sum_{\{u, v, w\}^{(s)} \in \mathcal{T}_u^{(t)}} \ \delta_{uvw}^{(s)} \right] = \\
		= & \sum_{\{u, v, w\}^{(s)} \in \mathcal{T}_u^{(t)}} \ \mathbb{E}\left[\delta_{uvw}^{(s)}\right] = \left|\mathcal{T}_u^{(t)}\right| = T_u^{(t)}
	\end{align*}
	\end{small}
	which proves Eq. \ref{eq:localcount2-appendix}.
\end{proof}

\subsection{Time Complexity}

We now prove the following theorem that establishes the time complexity of \algname\ and \algnamedel.
\begin{manualtheorem}{Theorem~\ref{theorem:timecomplexity2}} 
	Given an input graph stream $\Sigma$, and given $\alpha = \left|\mathcal{W}\right|/k$, $\beta = \left|\mathcal{H}\right|/k(1 - \alpha)$, \algname\ and \algnamedel\ with memory budget $k$ process each edge in the stream $\Sigma$ in $\bigO{\left( k + \log \left(k (1 - \alpha) \beta \right) \right)}$ time.
	\label{theorem:timecomplexity2-appendix}
\end{manualtheorem}

\begin{proof}
	For each incoming edge $\{u, v, \eta\}$ in the stream $\Sigma$, the most expensive steps are the computation of common neighbors (line \ref{line:alg2commonneighs} of Alg.~\ref{alg:counttriangles2},  line \ref{line:common-neighs-fd} of Alg.~\ref{alg:counttriangles-fd}) and the insertion, retrieval or deletion of the lightest edge in $\mathcal{H}$ (line~\ref{line:alg2lightestHE} of Alg.~\ref{alg:tonic-ins}, line~\ref{line:get-lightest-H-fd} of Alg.~\ref{alg:tonic_fd}). Also, for \algnamedel\, such cost is paid if the deletion of current edge occurs in the heavy edge set $\mathcal{H}$ (line~\ref{line:algfd-deletionH}). In general, we are assuming constant time for: operations in the waiting room $\mathcal{W}$ (implemented through a FIFO queue in \algname, and through an indexed hash set in \algnamedel), for querying the predictor $O_H$, for the coin flip and for accessing, adding or removing an element in the set of light edges $\mathcal{S}_L$, implemented through an array with fixed size $s_{\ell}$. \algname\ and \algnamedel\ take $\bigO{\left(k\right)}$ to perform the computation of common neighbors.
	Then, we need to account for the operations on $\mathcal{H}$; assume that such set is implemented through a min-heap priority queue, where insertion and deletion takes $\bigO{\left( \log(|\mathcal{H}|) \right)}$. Thus, \algname\ and \algnamedel\ process each incoming edge in $\bigO{\left( k + \log \left(k (1 - \alpha) \beta \right) \right)}$ time.
\end{proof}

\subsection{Space Complexity}
For the space complexity, \algname\ and \algnamedel\ do not exceed the given fixed memory $k$ while storing the edges of the stream. The proof of the related proposition below is trivial.

\begin{proposition}
	\label{theorem:spacecomplexity2-appendix}
	Given an input graph stream $\Sigma$, at the end of the stream, \algname\ and \algnamedel\ store $\bigO{(k)}$ edges to compute the global and local estimates of triangles for the entire graph stream.
\end{proposition}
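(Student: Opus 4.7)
The plan is to argue directly from the capacity checks that are hard-coded into both Algorithm~\ref{alg:tonic-ins} and Algorithm~\ref{alg:tonic_fd}, and then sum the three bounds. Specifically, the stored edges are partitioned into three disjoint sets $\mathcal{W}$, $\mathcal{H}$, $\mathcal{S}_L$, and for each set I would identify the lines of the pseudocode that guard insertions.

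First I would handle the insertion-only case. For $\mathcal{W}$, line~\ref{line:alg2WR1} only adds $\{u,v\}$ when $|\mathcal{W}| < k\alpha$, and the ``full'' branch on lines~\ref{line:7}--\ref{line:alg2WR2} evicts the oldest edge before inserting the new one, so $|\mathcal{W}| \le k\alpha$ is preserved inductively. For $\mathcal{H}$, line~\ref{line:9} only inserts when $|\mathcal{H}| < k(1-\alpha)\beta$, and the heavier-edge swap on line~\ref{line:alg2H} is size-preserving (it inserts one edge and removes one), so $|\mathcal{H}| \le k(1-\alpha)\beta$. For $\mathcal{S}_L$, the unconditional insertion on line~\ref{line:alg2samplenotfull2} is guarded by $\ell < k(1-\alpha)(1-\beta)$, and reservoir sampling via \texttt{SampleLightEdge} (line~\ref{line:samplelightSL} of Algorithm~\ref{alg:samplelightedge}) is again size-preserving. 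Hence $|\mathcal{S}_L| \le k(1-\alpha)(1-\beta)$. Since the three sets are disjoint by construction, summing gives
\begin{equation*}
    |\mathcal{S}| \;=\; |\mathcal{W}| + |\mathcal{H}| + |\mathcal{S}_L| \;\le\; k\alpha + k(1-\alpha)\beta + k(1-\alpha)(1-\beta) \;=\; k.
\end{equation*}

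For \algnamedel, the argument for edge insertions is identical (the insertion branch mirrors Algorithm~\ref{alg:tonic-ins} on lines~\ref{line:w-not-full-fd}--\ref{line:algfd-dg--}, including the random-pairing variant of reservoir sampling, which never grows $\mathcal{S}_L$ beyond its target size). The deletion branch on lines~\ref{line:algfd-wh-del}--\ref{line:algfd-gooddel} only removes edges or updates the scalar counters $d_g,d_b,\ell$, so it cannot increase $|\mathcal{W}|+|\mathcal{H}|+|\mathcal{S}_L|$. The counters $\hat{T}$, $\ell$, $d_g$, $d_b$ and the per-vertex estimates $\hat{T}_u$ contribute $O(1)$ per stored vertex, which is absorbed into $O(k)$ since local estimates are only maintained for vertices incident to edges in $\mathcal{S}$. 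Therefore the same $O(k)$ bound holds throughout the fully dynamic stream, and in particular at its end.

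There is essentially no obstacle here: the proof is a one-line induction on the stream, made explicit by tracing each branch that modifies $\mathcal{W}$, $\mathcal{H}$, or $\mathcal{S}_L$. The only subtlety worth flagging is that $|\mathcal{S}_L|$ can temporarily be smaller than its nominal capacity in the fully dynamic setting (because bad deletions leave it under-filled until compensated), but this only strengthens the upper bound. As the proposition itself labels the claim trivial, I would keep the written proof to a short paragraph observing disjointness plus the three capacity invariants.
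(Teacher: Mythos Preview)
Your proposal is correct and matches the paper's treatment: the paper itself declares the proof trivial and omits it, and your argument---tracing the capacity guards on $\mathcal{W}$, $\mathcal{H}$, and $\mathcal{S}_L$ and summing---is exactly the (only) way to spell it out. One small caveat: your remark that the local counters $\hat{T}_u$ are absorbed into $O(k)$ is not quite what the paper says; immediately after the proposition the authors explicitly exclude the $\hat{T}_u$ counters from this bound, noting they require an additional $\bigO{(|\hat{V}^{(t)}|)}$ space, so you should drop that sentence since the proposition concerns only stored \emph{edges}.
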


Note that the above analysis does not consider the space required to store the local counters $\hat{T}_u$ for nodes $u$, which requires an additional $\bigO{\left( \hat{V}^{(t)} \right)}$ space, at any time $t$ in the stream. Also, note that \algname\ and \algnamedel\ explicitly represent only counters for nodes that have at least one adjacent triangles (i.e., only counters $>0$).

\subsection{\algname\ vs \wrs: Variance Comparison} 

We now prove Prop.~\ref{prop:var_vs_WRS}. We recall that, for the sake of simplicity, in the analysis we consider a simplified version of \wrs\ algorithm and a simplified version of \algname\ algorithm, considering \emph{insertion-only} graph streams. The simplified version of the \wrs\ samples each light edge (i.e, that leaves the waiting room) independently with probability $p$;  the simplified version of \algname\  uses an oracle that predicts an edge leaving the waiting room as heavy or light, keeps (predicted) heavy edges in $\mathcal{H}$, and samples each light edge with probability $p' <p$, where $p$ is the probability that an edge is sampled by \wrs. Note that by properly fixing $p'$, according to the memory budget for heavy edges $\left( k \left(1 - \alpha\right)\beta \right)$ the two algorithms have the same expected memory budget. 
We assume that the waiting room has the same size in both \wrs\ and \algname. 

Also, we will assume heavy edges appear in~$\ge~\rho$ triangles, while light edges appear in~$< \rho$ triangles, for some $\rho \ge 3$. However, to capture the errors of the predictor, in particular around the threshold $\rho$, we assume that edges $h$ predicted as heavy by the predictor are involved in $\Delta(h)\ge \rho/c$ triangles, while edges $l$ predicted as light by the predictor are involved in $\Delta(l) < c\rho$ triangles, for some constant $c \ge 1$.  Note that for edges $e$ with $\rho / c < \Delta(e) < c\rho$ the oracle can make arbitrarily wrong predictions.
We now prove the following result that provides a condition under which the variance of the estimate of global triangle counts reported by \algname\ is smaller than the estimate of \wrs.

\begin{manualtheorem}{Proposition~\ref{prop:var_vs_WRS}}
	\label{prop:var_vs_WRS-appendix}
	Let $\Var[\hat{T}_{\wrs}(p)]$ be the variance of the estimate  $\hat{T}_{\wrs}$ obtained by \wrs\ when edges are sampled independently with probability $p$, and let  $\Var[\hat{T}_{\algname}(p')]$ be the variance of the estimate  $\hat{T}_{\algname}$ obtained by \algname\ when \emph{light} edges are sampled with probability $p'$. Then $\Var[\hat{T}_{\wrs}(p)] \ge \Var[\hat{T}_{\algname}(p')]$ if 
	\begin{equation*}
		\frac{T^H}{T^L} > 3 \frac{(1/p'^2 - 1/p^2)+c\rho(1/p' - 1/p)}{(1/p-1)(3+4\rho/c)}.
	\end{equation*}
\end{manualtheorem}

\begin{proof}
	Let $\hat{T}_{algo}$ be the number of triangles estimated by \textit{algo},  where \textit{algo} $\in \{$\wrs, \algname$\}$, and $\mathcal{T}^{S_1, S_2}$ the set of triangles that, when the last edge closing a triangle in such set is observed in the stream, the other two edges are found in $S_1 \text{ and } S_2$, where $S_1, S_2 \in  \{\mathcal{H}, \mathcal{W}, L\}$. Also, let $T^{S_1, S_2}$, $\hat{T}^{S_1, S_2}$ be respectively the true and estimated number of triangles in $\mathcal{T}^{S_1, S_2}$. Without loss of generality, we assume that $\mathcal{T}^{S_1, S_2} = \mathcal{T}^{S_2, S_1}$. We recall that triangles counted surely, that is with probability 1, have zero variance (these are triangles in $\mathcal{T}^{\mathcal{W}, \mathcal{W}}$ for \wrs, and in $\mathcal{T}^{\mathcal{W}, \mathcal{W}}, \mathcal{T}^{\mathcal{H}, \mathcal{H}},\mathcal{T}^{\mathcal{W}, \mathcal{H}}$ for \algname). For the other triangles, we have:
	\begin{small}
	\begin{align*}
		& \Var \left[ \hat{T}^{S_1, S_2}_{algo} \right] = \Var \left[ \sum_{\{u, v, w\} \in \mathcal{T}_{algo}^{S_1, S_2}} \delta_{\{u, v, w\}} \right] \\
		&= \sum_{\{u, v, w\} \in \mathcal{T}_{algo}^{S_1, S_2}} \Var \left[ \delta_{\{u, v, w\}} \right] + \Cov \left[ \hat{T}^{S_1, S_2}_{algo} , \hat{T}^{S_1, S_2}_{algo} \right] \\
		&= \sum_{\{u, v, w\} \in \mathcal{T}_{algo}^{S_1, S_2}} \left( \mathbb{E} \left[ \delta_{\{u, v, w\}}^2 \right] - \left(\mathbb{E} \left[ \delta_{\{u, v, w\}} \right] \right)^2  \right) + \\
		& \ \ \ \ \ + \Cov \left[ \hat{T}^{S_1, S_2}_{algo} , \hat{T}^{S_1, S_2}_{algo} \right] \\
		&= T_{algo}^{S_1, S_2} \left( 1/q^2 - 1\right) + \Cov \left[ \hat{T}^{S_1, S_2}_{algo} , \hat{T}^{S_1, S_2}_{algo} \right],
	\end{align*}
	\end{small}
	where $ \delta_{\{u, v, w\}}$ is defined as Eq. ~\ref{eq:delta_uvw2} (but in a fixed probability context), and $q \in \{p, p^2, p', {p'}^2\}$ depending on the algorithm for which we are computing the variance and on the set of triangles $\mathcal{T}^{S_1, S_2}$ we are considering. Note that, for \wrs\ we have $q = p$ for triangles in $\mathcal{T}^{\mathcal{H}, \mathcal{W}} \cup \mathcal{T}^{L, \mathcal{W}}$ and $q = p^2$ for triangles in $\mathcal{T}^{\mathcal{H}, \mathcal{H}} \cup \mathcal{T}^{L, L} \cup \mathcal{T}^{\mathcal{H}, L}$; for \algname, $q = p'$ for triangles in $\mathcal{T}^{\mathcal{H}, L} \cup \mathcal{T}^{\mathcal{W}, L}$ and $q = {p'}^2$ for triangles in $\mathcal{T}^{L, L}$.
	Thus, for \wrs\ we can write:
	\begin{small}
	\begin{align*}
		& \Var\left[\hat{T}_{\wrs}(p)\right] = \\
		& = \Var\left[\hat{T}_{\wrs}^{\mathcal{H}, \mathcal{W}} + \hat{T}_{\wrs}^{L, \mathcal{W}} + \hat{T}_{\wrs}^{\mathcal{H}, \mathcal{H}} + \hat{T}_{\wrs}^{L, L} + \hat{T}_{\wrs}^{\mathcal{H}, L}\right] \\
		& = \left (T^{\mathcal{H}, \mathcal{W}} + T^{L, \mathcal{W}}  \right) (1/p - 1) + \\
		& + \left( T^{\mathcal{W}, L} + T^{L, L} + T^{\mathcal{H}, L}  \right) (1/p^2 - 1) + \\
		& + \sum_{x, y, z, w \in [\mathcal{W}, \mathcal{H},  L]} \Cov \left[ \hat{T}^{x, y} , \hat{T}^{w, z} \right].\\
	\end{align*}
	\end{small}
	Similarly, for \algname\ we have:
	\begin{small}
	\begin{align*}
		\Var\left[\hat{T}_{\algname}(p')\right] &= \Var\left[\hat{T}_{\algname}^{L, \mathcal{W}} + \hat{T}_{\algname}^{\mathcal{H}, L} + \hat{T}_{\algname}^{L, L} \right] \\
		&= \left (T^{\mathcal{H}, L} + T^{L, \mathcal{W}}  \right) (1/p' - 1) + \\
		& + T^{L, L}  (1/{p'}^2 - 1) + \\
		& + \sum_{x, y, z, w \in [\mathcal{W}, \mathcal{H},  L]} \Cov \left[ \hat{T}^{x, y} , \hat{T}^{w, z} \right],\\
	\end{align*}
	\end{small}
	where the last sum is intended over all combinations of the sets in $[\mathcal{W}, \mathcal{H}, L]$.
	
	We can express the covariance terms as:
	\begin{small}
	\begin{align*}
		& \Cov \left[ \hat{T}^{x, y}, \hat{T}^{w, z} \right] = \sum_{\Delta_i \in \mathcal{T}^{x, y}, \Delta_j \in \mathcal{T}^{w, z}}\Cov \left[ \delta_i^{x, y}, \delta_j^{w, z} \right] \nonumber \\
		&= \sum_{\Delta_i \in \mathcal{T}^{x, y}, \Delta_j \in \mathcal{T}^{w, z}}  \mathbb{E} \left[ \delta_i^{x, y}, \delta_j^{w, z} \right] - \mathbb{E} \left[ \delta_i^{x, y} \right] \mathbb{E} \left[ \delta_j^{w, z} \right] \nonumber \\
		& = \sum_{\substack{ \Delta_i, \Delta_j \in \\ \mathcal{T}^{x, y} \cap \mathcal{T}^{w, z} }}  \frac{1}{p_i} \ \frac{1}{p_j} \ \mathbb{P}\left[ \Delta_i, \Delta_j \in \hat{T} \text{ \footnotesize{and} } \Delta_i \cap \Delta_j \text{ \footnotesize{sampled}} \right] - 1  \nonumber \\
		& = \sum_{\Delta_i, \Delta_j \in \mathcal{T}^{x, y} \cap \mathcal{T}^{w, z}}  \frac{1}{p_i} \ \frac{1}{p_j} \ \frac{p_i \ p_j}{q} - 1 \nonumber \\
		& = \left| \mathcal{T}^{x, y} \times \mathcal{T}^{w, z} \right| \left( \frac{1}{q} - 1 \right), \nonumber \\
	\end{align*}
	\end{small}
	where $\delta_i$, $\delta_j$ are respectively the random variable corresponding to the increment in count due to triangle $\Delta_i$ and $\Delta_j$; $\Delta_i \cap \Delta_j$ corresponds to the shared edge between triangle $\Delta_i$ and triangle $\Delta_j$, $q \in \{p, p'\}$ is the sampling probability of the considered algorithm. Note that only triangles having the shared edge that has been sampled (with probability $< 1$) have impact on the covariance term. Thus, in order to have non null covariance terms, we need to have $\Delta_i \cap \Delta_j$ that has been sampled as light edge by the respective algorithm.
	
	Therefore, the difference between the variance of $\hat{T}_{\wrs}(p)$ and the variance of $\hat{T}_{\algname}(p')$ can be expressed as:
	\begin{small}
	\begin{align}
		& \Var[\hat{T}_{\wrs}(p)]-\Var[\hat{T}_{\algname}(p')] = T^{\mathcal{H},\mathcal{W}} (1/p-1) \nonumber \\
		& + T^{\mathcal{H},\mathcal{H}} (1/p^2-1) + T^{\mathcal{H},L} (1/p^2 - 1/p') + 2 S_\mathcal{H} (1/p-1)  \nonumber \\
		& + T^{L,\mathcal{W}} (1/p - 1/p')+ T^{L, L} (1/p^2-1/p'^2) +\nonumber \\
		& + 2 \bar{S}_{\mathcal{H}}(1/p - 1/p').
		\label{eq:diff_variances-appendix}
	\end{align}
	\end{small}
	In Eq. ~\ref{eq:diff_variances-appendix}, we express all the possible combinations of the covariance terms with $S_\mathcal{H} \text{ and } \bar{S}_{\mathcal{H}}$ notation. More explicitly, $S_{\mathcal{H}}$ indicates the number of all pairs of triangles (that share an heavy edge) including at least one deterministic triangle for \algname\ (i.e., having positive covariance in \wrs\ and 0 covariance in \algname), while $\bar{S}_{\mathcal{H}}$ is the number of all pairs of triangles (that share a sampled edge) without deterministic triangles in such pair (i..e., having positive covariance both in \wrs\ and in \algname). Furthermore, such pairs in covariance terms in Eq. ~\ref{eq:diff_variances-appendix} are uniquely counted, hence we need to multiply the cardinality of both sets by 2.
	Note that, assuming that $p' \approx p$ and since $p' <p$, the first four terms are $\ge 0$, while the last three are $\le 0$.  Let $h  \in \mathcal{T}^{\mathcal{H},\mathcal{W}}$ denote an heavy edge: note that $|\mathcal{T}^{\mathcal{H},\mathcal{W}}|~(1/p~-~1)~=~\sum_{h \in \mathcal{T}^{\mathcal{H},\mathcal{W}}} (1/p-1)$, and similarly for all other terms not related to $S_{\mathcal{H}}$ and to $\bar{S}_{\mathcal{H}}$. Moreover, note that:
	\begin{equation*}
		S_{\mathcal{H}} (1/p-1) = \sum_{h \in \mathcal{H}} {\Delta(h) \choose 2} (1/p - 1),
	\end{equation*}
	and 
	\begin{equation*}
		\bar{S}_{\mathcal{H}}(1/p - 1/p') = \sum_{l \in L} {\Delta(l) \choose 2} (1/p - 1/p').
	\end{equation*}
	We have:
	\begin{equation*}
		\frac{\Delta(e)^2}{3}\le {\Delta(e) \choose 2} \le \frac{\Delta(e)^2}{2},
	\end{equation*}
	where the first inequality holds for $\Delta(e) \ge 3$, while the second one is always correct (by considering ${r \choose 2} = 0$ if $r \le 1$). The first inequality is used for edges $e$ that are predicted as heavy, therefore requiring $\Delta(e) \ge 3$ is reasonable.
	We can bound:
	\begin{align*}
		\Var[\hat{T}_{\wrs}(p)]-\Var[\hat{T}_{\algname}(p')] \ge
		\sum_{h \in \mathcal{T}^{\mathcal{H},\mathcal{W}}} (1/p-1) \\
		+ \frac{1}{2} \sum_{h \in \mathcal{T}^{\mathcal{H},\mathcal{H}}} (1/p^2-1)
		+ \sum_{h \in \mathcal{T}^{\mathcal{H},L}} (1/p^2 - 1/p')\\
		+ 2 \sum_{h \in \mathcal{H}} \frac{\Delta(h)^2}{3} (1/p-1)
		+  \sum_{l \in \mathcal{T}^{L,\mathcal{W}}} (1/p - 1/p') \\
		+ \frac{1}{2}\sum_{l \in \mathcal{T}^{L,L}} (1/p^2-1/p'^2) 
		+ 2 \sum_{l \in L} \frac{\Delta(l)^2}{2}(1/p - 1/p').
	\end{align*}	
	Let $T^H = \sum_{h \in \mathcal{H}} \Delta(h)$. We have: 
	\begin{small}
	\begin{align*}
		\sum_{h \in \mathcal{H}} \frac{\Delta(h)^2}{3} (1/p-1) = 
		\sum_{h \in \mathcal{H}} \Delta(h) \frac{\Delta(h)}{3} (1/p-1) \ge \\
		\frac{1}{3}(1/p-1)\rho/c \sum_{h\in \mathcal{H}} \Delta(h) = \frac{1}{3}(1/p-1)\rho/c T^{{H}}
	\end{align*}
	\end{small}
	\noindent since $\Delta(h) \ge \rho/c$ by the assumptions on the predictor. Analogously,  Let $T^L = \sum_{l \in L} \Delta(l)$. We have:
	\begin{small}
	\begin{align*}
		\sum_{l \in L} \frac{\Delta(l)^2}{2}(1/p - 1/p') = 
		\sum_{l \in L} \Delta(l) \frac{\Delta(l)}{2}(1/p - 1/p') \ge\\
		\frac{1}{2}(1/p - 1/p') c\rho \sum_{l \in L} \Delta(l) = \frac{1}{2}c\rho(1/p-1/p') T^{L}.
	\end{align*}
	\end{small}
	If $p'\approx p$, we have that $1/p - 1 \le 1/p^2 -1/p'$, and $1/p'^2~-~1/p^2~\ge~1/p'~-~1/p^2$. Therefore, we have that:
	\begin{small}
	\begin{align*}
		& \Var[\hat{T}_{\wrs}(p)]-\Var[\hat{T}_{\algname}(p')] \ge \\
		& \ge T^{{H}}\left(\frac{1}{6} (1/p-1)(3+4\rho/c)\right) + \\
		& + T^{L}\left( \frac{1}{2}(1/p^2 - 1/p'^2) +c\rho (1/p -1/p') \right),
	\end{align*}
	\end{small}
	and $\Var[\hat{T}_{\wrs}(p)]-\Var[\hat{T}_{\algname}(p')]  > 0$ if 
	\sloppy{
		$\frac{T^H}{T^L} >~3~\frac{(1/p'^2 - 1/p^2)+c\rho(1/p' - 1/p)}{(1/p-1)(3+4\rho/c)}$}.
\end{proof}

The following result shows that if the predictor $O_H$ provides random predictions, then the variance of the estimate from \algname\ is the same as the variance of the estimate of \wrs.

\begin{manualtheorem}{Proposition~\ref{prop:variance_random_pred}} 
	If the predictor $O_H$ predicts a random set of edges as heavy edges, and \wrs\ and \algname\ use the same amount of memory, then $\Var[\hat{T}_{\wrs}(p)]=\Var[\hat{T}_{\algname}(p')]$.
\end{manualtheorem}
\begin{proof}{(Sketch)}
	If the predictor $O_H$ predicts a random set of edges as heavy edges, then for both \wrs\ and \algname\ the set of edges in memory outside the waiting room is a random subset of light edges. If the two algorithms use the same memory, the two sets have the same cardinality, and the result follows.
\end{proof}

\section{Additional Experimental Results}
\label{sec:additional-experiments}

\subsection{MinDegree Predictor: Relation to Heavy Edges}
\label{sec:mindeg-vs-heaviness}

Fig.~\ref{fig:heaviness-vs-mindeg-appendix} shows the number of triangles (i.e., heaviness) vs the minimum degree of each edge, considering the top 1 \textpertenthousand\ edges sorted by heaviness (for Wikibooks datasets, the top 1000 heaviest edges have been considered). The figure shows that there is a clear and linear relation between the minimum degree of edges and their respective heaviness for the heaviest edge in the graph stream, even if the minimum degree is not a perfect predictor of the heaviness.

\begin{figure}[h]
	\centering
	\includegraphics[width=1 \textwidth]{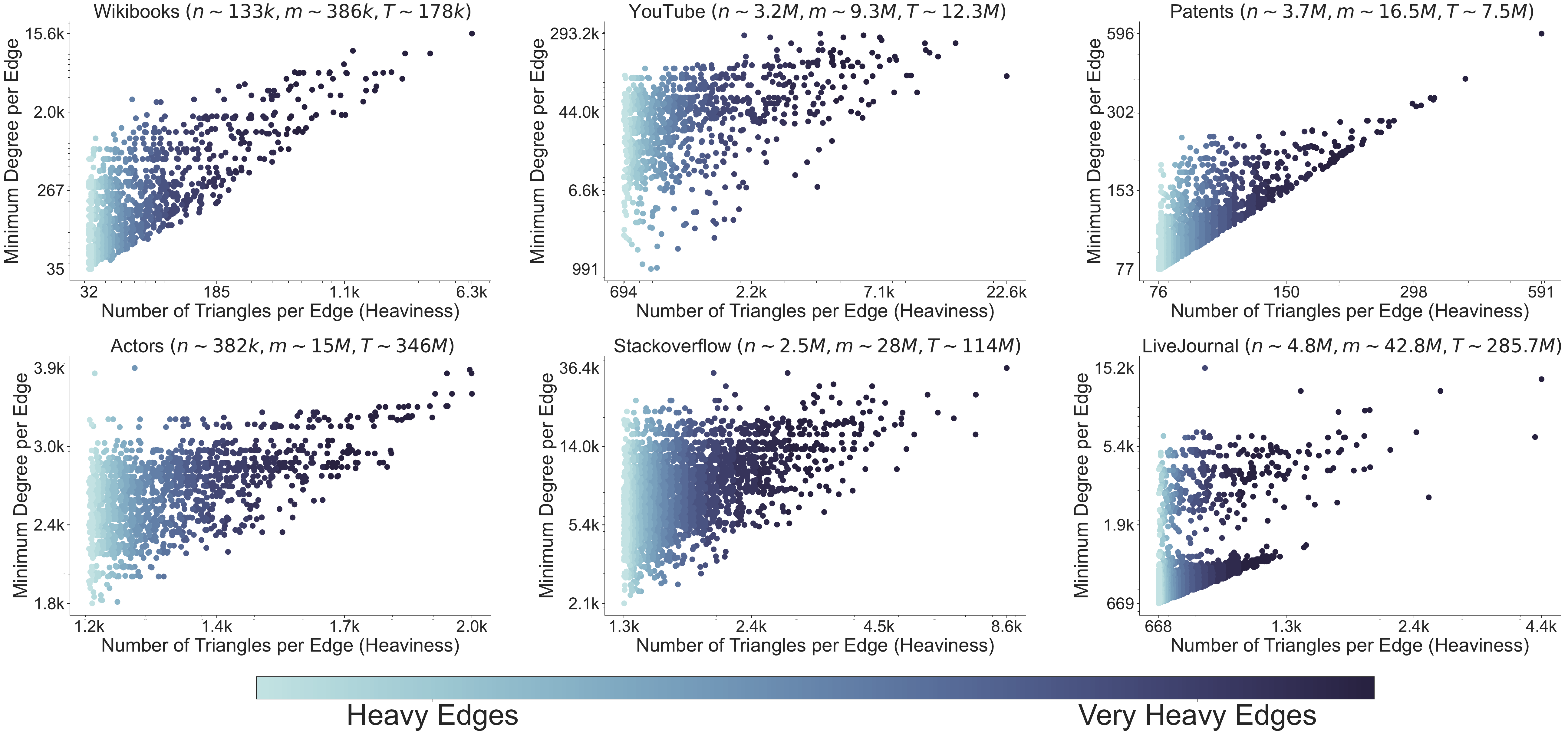}
	\caption{Relation between Number of Triangles (i.e., heaviness) per edge and Minimum Degree per edge through the top 1\textpertenthousand$m$ heaviest edges in the graph stream (for Wikibooks dataset, the top 1000 heaviest edges have been considered). Both axis are represented using a log scale.}
	\label{fig:heaviness-vs-mindeg-appendix}
\end{figure}

\subsection{Accuracy vs Parameters} 
\label{sec:accuracy-vs-params}
The results in Fig.~\ref{fig:accuracy_params_experiments_merged} are obtained by running \wrs , \chenalg,  and \algname\ with values of the main parameters $\alpha$ and $ \beta $ of such algorithms that are:
\begin{align*}
	& \alpha_{\wrs}, \beta_{\chenalg} \in [0.1, 0.15, 0.19, 0.23, 0.24, 0.28, 0.3, 0.36]; \\
	& \alpha_{\algname}, \beta_{\algname} \in [0.05, 0.1, 0.15, 0,2].
\end{align*}
We chose for such values because we were able to obtain the same space for storing edges with probability 1 for each of the considered algorithm, since the deterministic space accounts for a fraction of the memory budget $k$ corresponding to $\alpha_{\wrs}$, $\beta_{\chenalg}$, and $~\alpha_{\algname} + \beta_{\algname} \left(1 - \alpha_{\algname}~\right) $ respectively for \wrs , \chenalg , and \algname. Since for \algname\ some permutation of parameters correspond to the same deterministic space, we report in Fig.~\ref{fig:accuracy_params_experiments_merged} the best of such combinations, besides our supposed choice of parameters $\alpha_{\algname}=~ 0.05 \text{ and } \beta_{\algname}=~0.2$ (highlighted by the dashed circle).

\subsection{Quality of Approximation} 
\label{sec:pdf-experiments}
In the following, we study the quality of the approximation of global triangles returned by \algname\ in terms on unbiasedness and variance. In particular, we report the probability density function of the estimates of the considered algorithms. We ran \wrs, \chenalg, \algname\ for 500 consecutive and independent trials, for each dataset but the Twitter ones, due to time impracticability on such large graphs. Then, we estimated the PDF by using Gaussian Kernel Estimation. The results are shown in Figure~\ref{fig:pdf_merged-appendix}. As expected, all the algorithms return unbiased estimates, and \algname\ is able to achieve a lower variance in all the datasets, leading to more accurate and robust global estimates and confirming our theoretical analysis in Section~\ref{sec:analysis}. 

\begin{figure}[htbp]
	\centering
	\includegraphics[width= 1 \columnwidth]{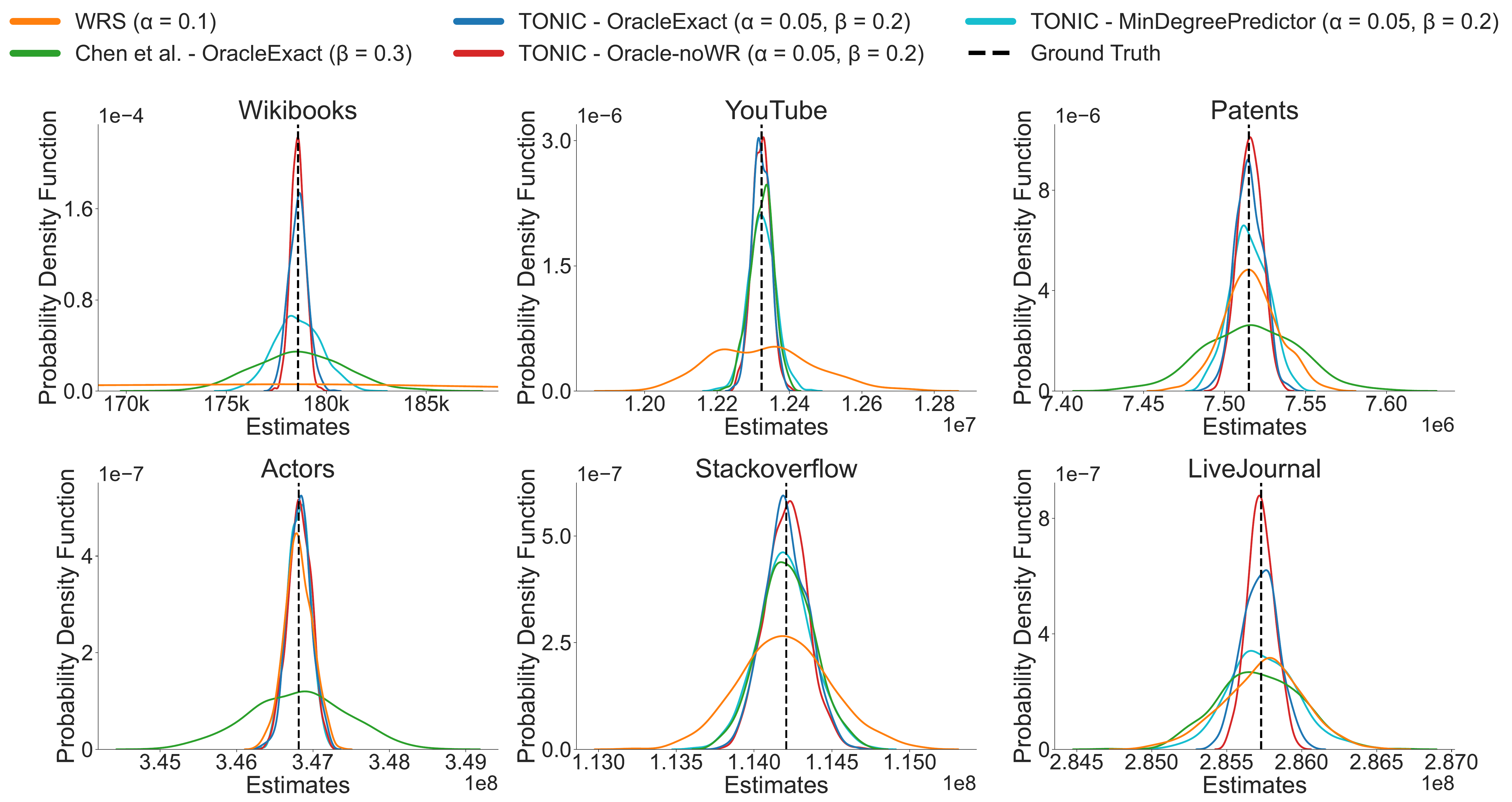}
	\caption{Probability Density Distribution of estimations. For each combination of algorithm and parameter (including oracle for \algname), the distribution of estimates over 500 independent repetitions is shown. The algorithms parameters are as in legend (for \wrs\ and \chenalg\ they are fixed as in the respective publications; for \algname\ they are as chosen in Fig.~\ref{fig:accuracy_params_experiments_merged})}
	\label{fig:pdf_merged-appendix}
\end{figure}

Moreover, we observe that, for each dataset and predictor, including the practical \texttt{MinDegreePredictor}, \algname\ is able to achieve the smallest variance with respect to \wrs\ and \chenalg\ (excluding YouTube, where \chenalg\ has slightly smaller variance than \algname\ empowered by \texttt{MinDegreePredictor}).

\subsection{Accuracy at any time}
\label{sec:stream_experiments}

In this section, we study the behavior of \algname\ in the terms of quality of global triangles estimates at any time $t$, during the evolving of the input graph stream. 
We considered only \wrs\, since \chenalg\ is not suited to return estimates at a given time. 
Let $t = 1, 2, ..., m$ be the time of arrival of the $t$-th edge in the input stream $\Sigma$. We quantized the time interval in order to obtain around 1k global estimates, hence we collected outputs at times multiple of $\tau$, where $\tau \approx m / 1000$. 
Results are displayed in Fig.~\ref{fig:stream_experiments_merged-appendix}, where mean and standard deviation are reported. Notice that \algname\ is able to significantly outperform \wrs\ for each dataset, for each predictor, and for the all duration of the graph stream, except for the last chunk of stream of Twitter-merged. Also notice that, since here we fixed $k = m/10$ (indicated with an arrow), the error is zero before reaching such number of incoming edges, since we count triangles inside our sample with probability 1, as expected. 

\begin{figure}[htbp]
	\centering
	\includegraphics[width=1 \textwidth]{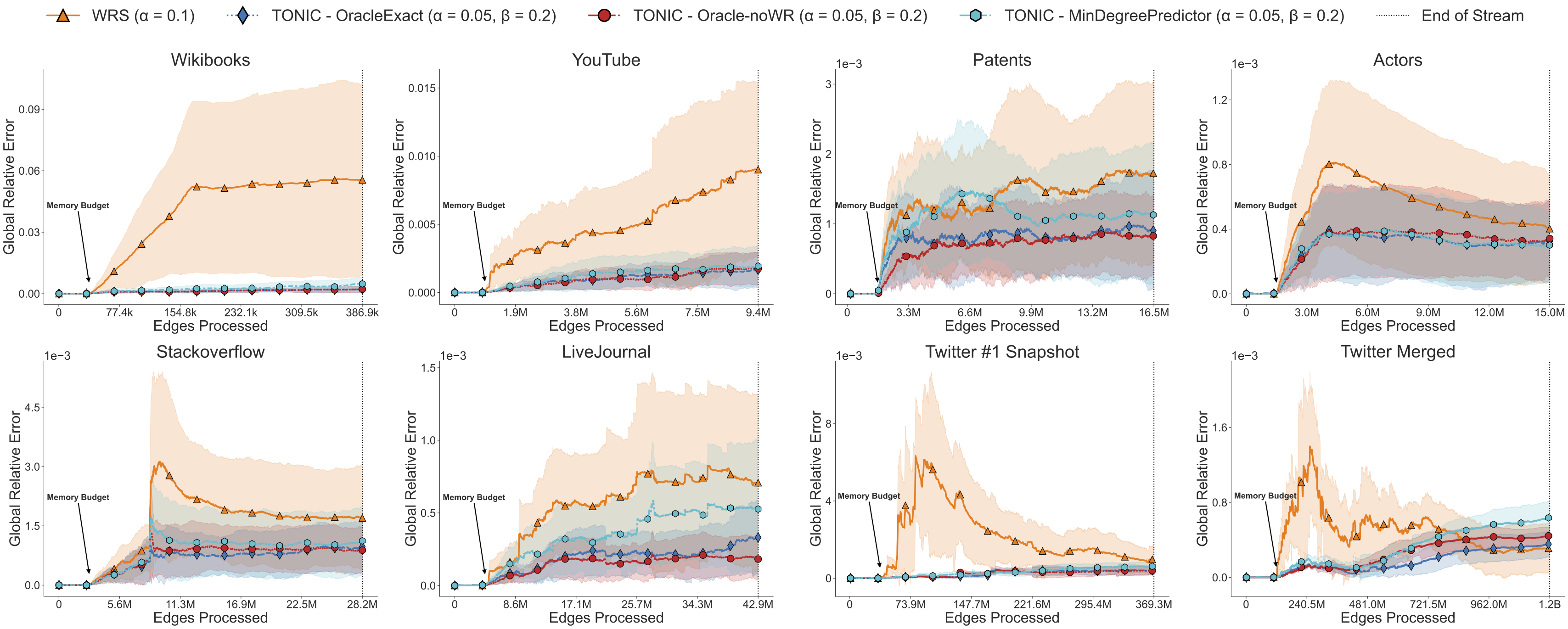}
	\caption{Estimation error as time progresses during the input graph stream. For each combination of algorithm and parameter (including predictor for \algname), the average and standard deviation over 50 repetitions are shown (except for Twitter datasets, where 10 repetitions have been considered). The algorithms parameters are as in legend (for \wrs\ they are fixed as in the respective publications; for \algname\ they are as chosen in Fig.~\ref{fig:accuracy_params_experiments_merged})}
	\label{fig:stream_experiments_merged-appendix}
\end{figure}

\subsection{Number of Discovered Triangles}
\label{sec:discovered _triangles}

In Fig.~\ref{fig:triangles_distributions_merged-appendix}, we display the number of counted and estimated triangles for the considered datasets. 
Notice that \algname\ (also the version empowered by \texttt{MinDegreePredictor}) is able, in almost any case, to count more triangles with at least 1 edge in the waiting room and/or in the heavy edge set (i.e., edges that lead to smaller variance), since it takes advantage of features both from the waiting room and from the heavy edges predictions. Also, note that, in general \algname\ corrects the number of counted triangles more accurately  than \chenalg, thanks to the reservoir sampling strategy. Again, we recall that the space for storing heavy edges for \chenalg\ is $0.3 k$, while the space for storing heavy edges in \algname\ is $0.19k$. 

\begin{figure}[htbp]
	\centering
	\includegraphics[width=1 \textwidth]{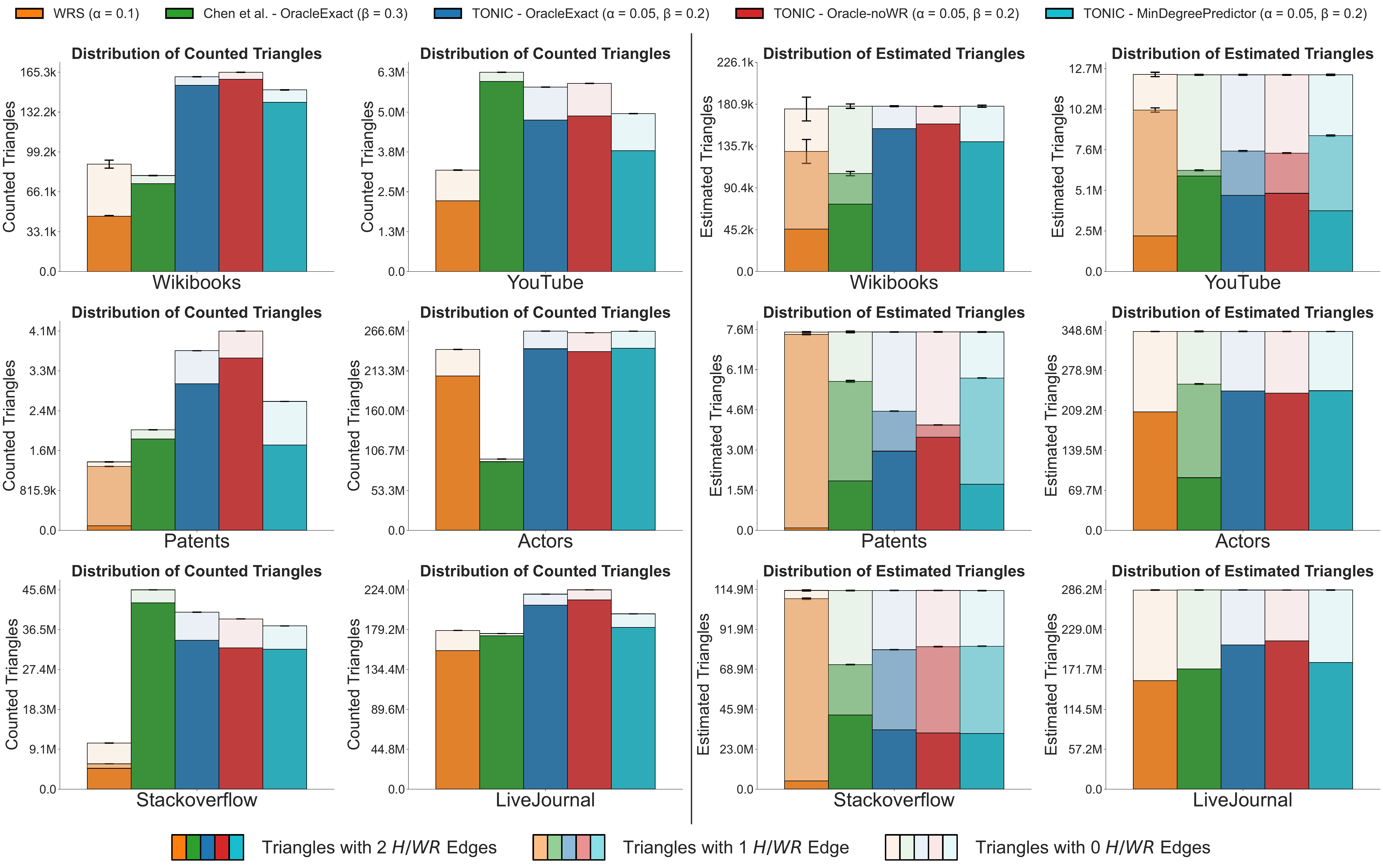}
	\caption{Distributions of Counted (left) and Estimated (right) Triangles for each dataset. For each combination of algorithm and parameter (including oracle for \algname), the average and standard deviation over 50 repetitions are shown. The algorithms parameters are as in legend (for \wrs\ and \chenalg\ they are fixed as in the respective publications; for \algname\ they are as chosen in Fig.~\ref{fig:accuracy_params_experiments_merged})}
	\label{fig:triangles_distributions_merged-appendix}
\end{figure}

\subsection{Local Triangles Experiments}
\label{sec:local_triangles_experiments}
We also performed experiments to assess \algname\ in estimating local triangle counts.
We recall that, by definition, given a node $u \in \mathcal{V}$, the number of local triangles $T_u$ associated to node $u$ is the number of triangles in which $u$ is involved. Note that $\sum_{u \in \mathcal{V}} T_u = 3T$. 
We considered the top 20\% nodes with highest local triangles count, denoted as $V_{top_{20}}$. Given the local triangles estimate $\hat{T}_u$, we computed the following metrics:
\begin{myitemize}
	\item \emph{Local Relative Error}, defined as:
	\begin{small}
	\begin{equation*}
		\frac{1}{V_{top_{20}}} \ \sum_{u \in V_{top_{20}}} \left| \hat{T}_u - T \right| / \ T_u \text{ (the lower the better);}
	\end{equation*}
	\end{small}
	\item \emph{Spearman Rank Coefficient}~\cite{spearman1961proof} calculated between local triangles ground truth in $V_{top_{20}}$, and the corresponding rank of such nodes in local triangles estimates (the higher the better, max value = 1).
\end{myitemize}
In Fig.~\ref{fig:local_triangles-appendix}, we show the results averaged over 10 independent trials, reporting mean and standard deviation. 
We provided all the algorithms with memory budget $k = fm$, with ~$m=|E|$, for $f=0.01, 0.03, 0.05, 0.1$. Note that, for some dataset, \algname\ is comparable or slightly worse than \wrs. This is due to the fact that in \algname\ the predictor focuses on heaviest edges, that account for more precise global estimates, at the expense of the estimated count of the local triangles, in particular for vertices with lower local counts.

\begin{figure}[htbp]
	\centering
	\includegraphics[width=1 \textwidth]{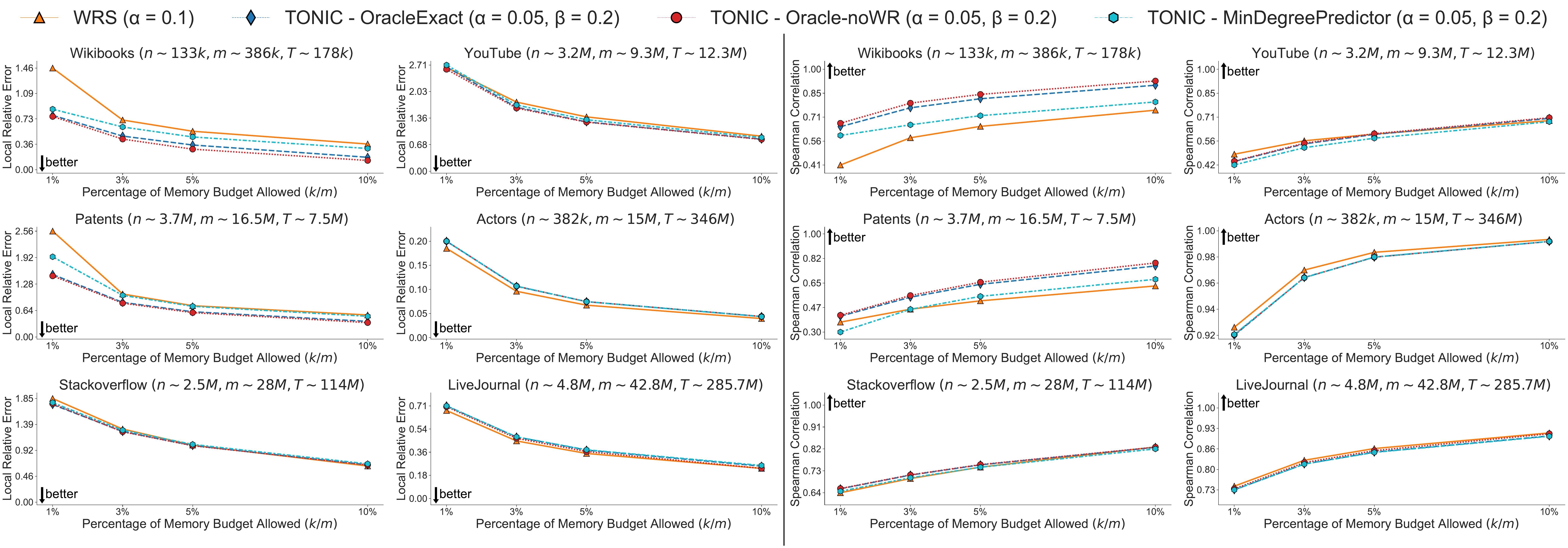}
	\caption{Error (left) and Spearman Rank Coefficient (right) vs memory budget. For each combination of algorithm and parameter (including predictor for \algname), the average and standard deviation over 10 repetitions are shown. The algorithms parameters are as in legend (for \wrs\ they are fixed as in the respective publications; for \algname\ they are as chosen in Fig.~\ref{fig:accuracy_params_experiments_merged})}
	\label{fig:local_triangles-appendix}
\end{figure}

\subsection{Experiments with Twitter Single Graphs}
\label{sec:scalability-twitter-snaps}
In this section we provide results also for the second, third and fourth snapshot of Twitter network (first snapshot and the merged version of Twitter have been already discussed in Fig~\ref{fig:memory_experiments_merged}). In Fig.~\ref{fig:twitter-snapshots-experiments}, we show results for accuracy and runtime by varying the memory budget allowed to the algorithms. Again, Fig.~\ref{fig:twitter-snapshots-experiments} (right) shows the runtime only for \algname\ and \wrs, since the runtime of \chenalg\ is always at least 6 and up to 12 times larger than \algname\ runtime. We see that \algname\ is always faster than \wrs\ (excluding $1\%m$ memory budget for Twitter~\#2 snapshot, for which the runtimes are comparable, while \algname\ is highly outperforming \wrs). Such results confirm that \algname, in practice, is able to scale better with respect to worst-case analyses of the running time. In some cases, \chenalg\ has the best accuracy, slightly better than \algname. This is due to Twitter streams being in adjacency list order, since in such cases the waiting room is not beneficial. In particular, if the current node in the adjacency list stream has a degree comparable to the size of the waiting room, the temporal localities (see Section~\ref{sec:wrs}) are completely lost. However, while the accuracy of \algname\ is worse than, but comparable to, \chenalg, the difference in runtime is huge: for example, in Twitter \#2, for a single run with $k = 10\% m$, \algname\ takes~$\sim 15$ minutes, \chenalg\ requires more than $3$ hours.

\subsection{Adversarial predictors}
\label{sec:adversarial-experiments}
In this section, we focus on the results of \algname\ when it is provided with an extremely bad predictor in snapshot sequences. More specifically, we build \emph{the most adversarial predictor}, that is we set the lightest edges (resp. lowest degree nodes) to be the heaviest (resp. highest degree nodes), and preserving the sizes of the predictors (i.e., the number of entries in the predictor). 
In Fig.~\ref{fig:adversarial_oracles} we report the results containing both best and adversarial predictors. 

\begin{figure}[tbp]
	\centering
	\includegraphics[width=0.9 \textwidth]{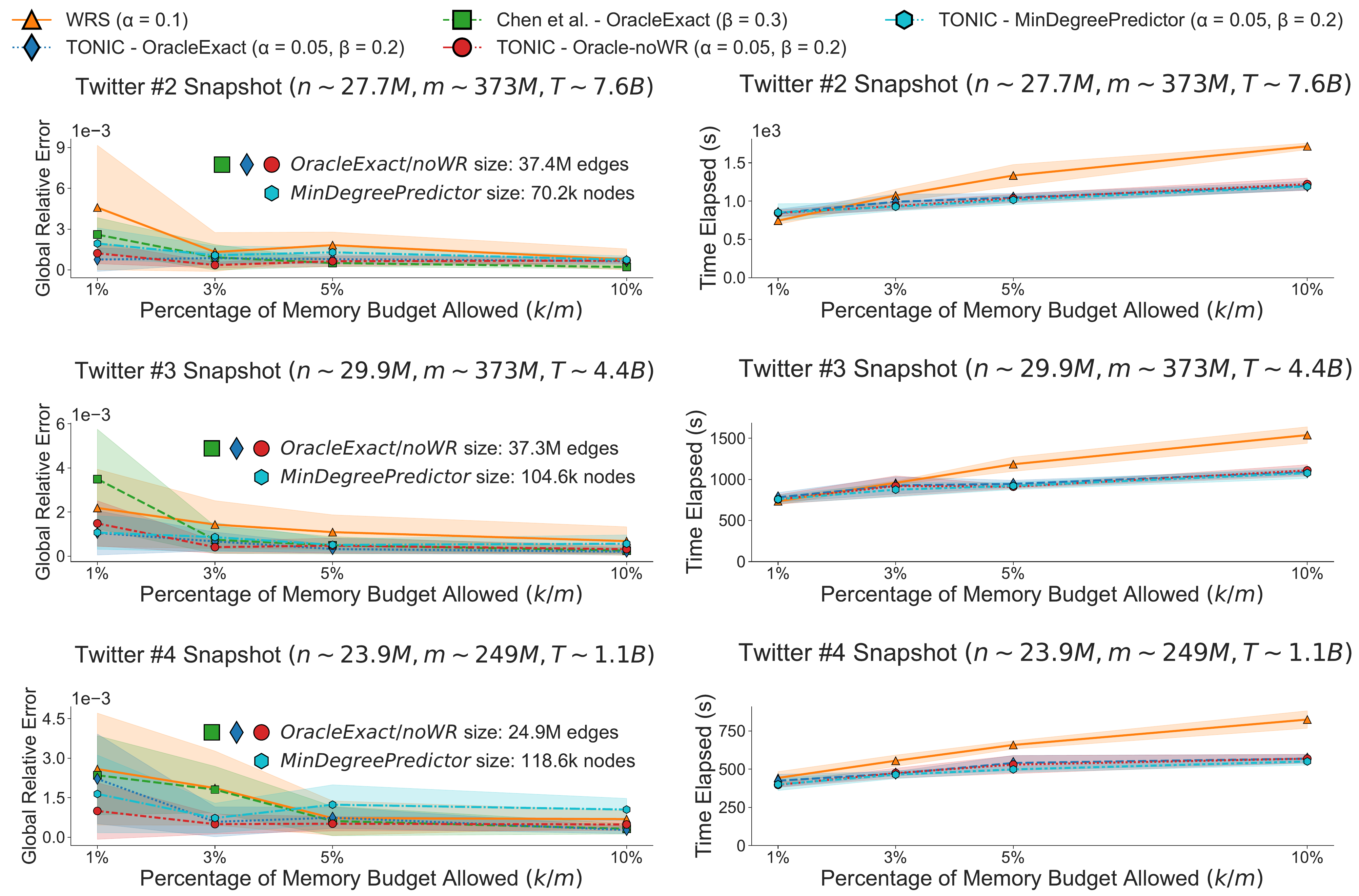}
	\caption{Error (left) and runtime (right) vs memory budget. \chenalg\ runtimes are not shown for clarity, since they are 6-12 times bigger than \algname\ runtime. For each combination of algorithm and parameter (including predictor for \algname), the average and standard deviation over 10 repetitions are shown. The  algorithms parameters are as in legend (for \wrs\ they are fixed as in the respective publication; for \algname\ they are as chosen in Fig.~\ref{fig:accuracy_params_experiments_merged}).}
	\label{fig:twitter-snapshots-experiments}
\end{figure}

Again, we want to emphasize that our node-degree predictor is taking only some dozens of the node-degrees (the highest degrees in the graph for the best \texttt{MinDegPredictor}, and the lowest ones for the adversarial \texttt{MinDegPredictor}). In the specific, we are only keeping 67 out of 10k, 82 out of 16k nodes, and 34 out of 3k respectively for Oregon, AS-CAIDA and AS-733 training graph, i.e., the first graph in the sequence, as reported in the caption of each subplot.
We note that \algname, even with its adversarial components, is almost always outperforming \chenalg\ with the \textit{exact oracle} in AS-CAIDA dataset. In any case, \chenalg\ with adversarial oracle is performing absolutely poor with respect to other methods (while \algname\ with the adversarial oracle is still competitive). Furthermore, \algname\ with adversarial components shows comparable to \wrs\ in AS-CAIDA, AS-733 and for more of the half of the sequence in Oregon. In summary, Fig.~\ref{fig:adversarial_oracles}  shows that \algname\ is robust to poor information of adversarial oracles/predictors, thanks to its combination with waiting room sampling, while is able to take advantage of the information provided when they are useful.

\begin{figure}[htbp]
	\centering
	\includegraphics[width=1 \textwidth]{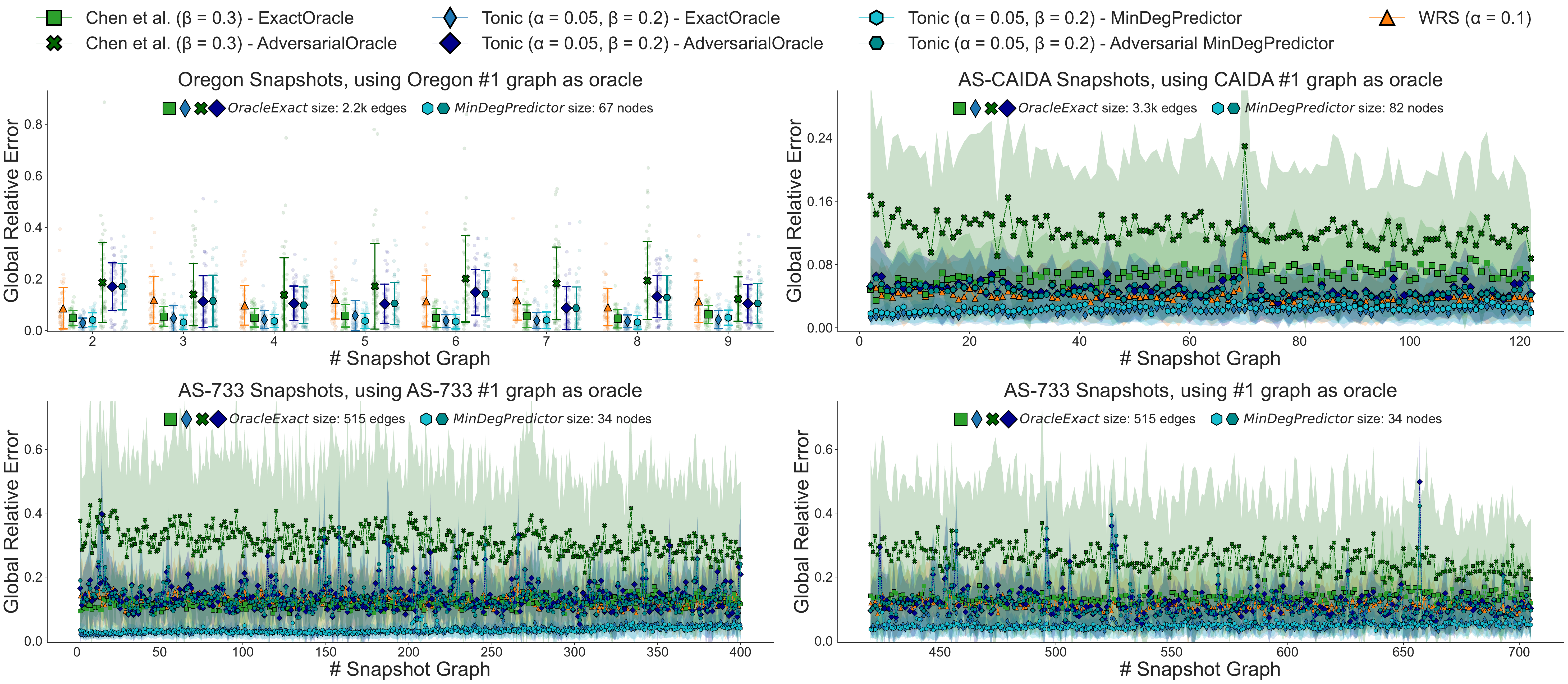}
	\caption{Error with snapshot networks with sequence of graph streams. In all cases the predictors are trained only on the first graph stream of the sequence (with results not shown on such graph stream). For each combination of algorithm and parameter (including predictor for \algname), the average and standard deviation over 50 repetitions are shown. The  algorithms parameters are as in legend (for \wrs\ and \chenalg\ they are fixed as in the respective publications; for \algname\ they are as chosen in Fig.~\ref{fig:accuracy_params_experiments_merged}). }
	\label{fig:adversarial_oracles}
\end{figure}

\subsection{Fully Dynamic Streams Experiments}
\label{sec:fd-experiments}
In the following, we provide a complete experimental evaluation when dealing with fully dynamic (FD) streams, i.e., with graph streams that allow both insertions and deletions of edges. More specifically, we create FD streams considering our 4 graph sequences datasets (see Table~\ref{tab:datasets}): Oregon, AS-CAIDA, AS-733 and Twitter. For each dataset, starting from the first graph of the sequence, we compute edge additions and removals between the current and the next snapshot, for all the snapshots in the sequence. Edge insertions are added to the FD stream by preserving the temporal ordering following the graph sequence, and edge deletions are added to the FD stream with random timestamps inside the time window of the snapshots that we are considering. 
In the following, we refer to our algorithm for FD streams as \algnamedel\, for which we described the general workflow in Algorithm~\ref{alg:tonic_fd}.

To assess performance of \algnamedel\, we consider two groups of experiments: 
(i) error during the evolving of the FD stream,
(ii) error and runtime vs memory budget.
Predictors for \algnamedel\ are trained \textit{only} on the \textit{first} snapshot of the sequence, as described for snapshot experiments in Section~\ref{sec:experiments}, and thus completely not aware of the consequent edge deletion.

\begin{figure}[htbp]
	\centering
	\includegraphics[width=1 \textwidth]{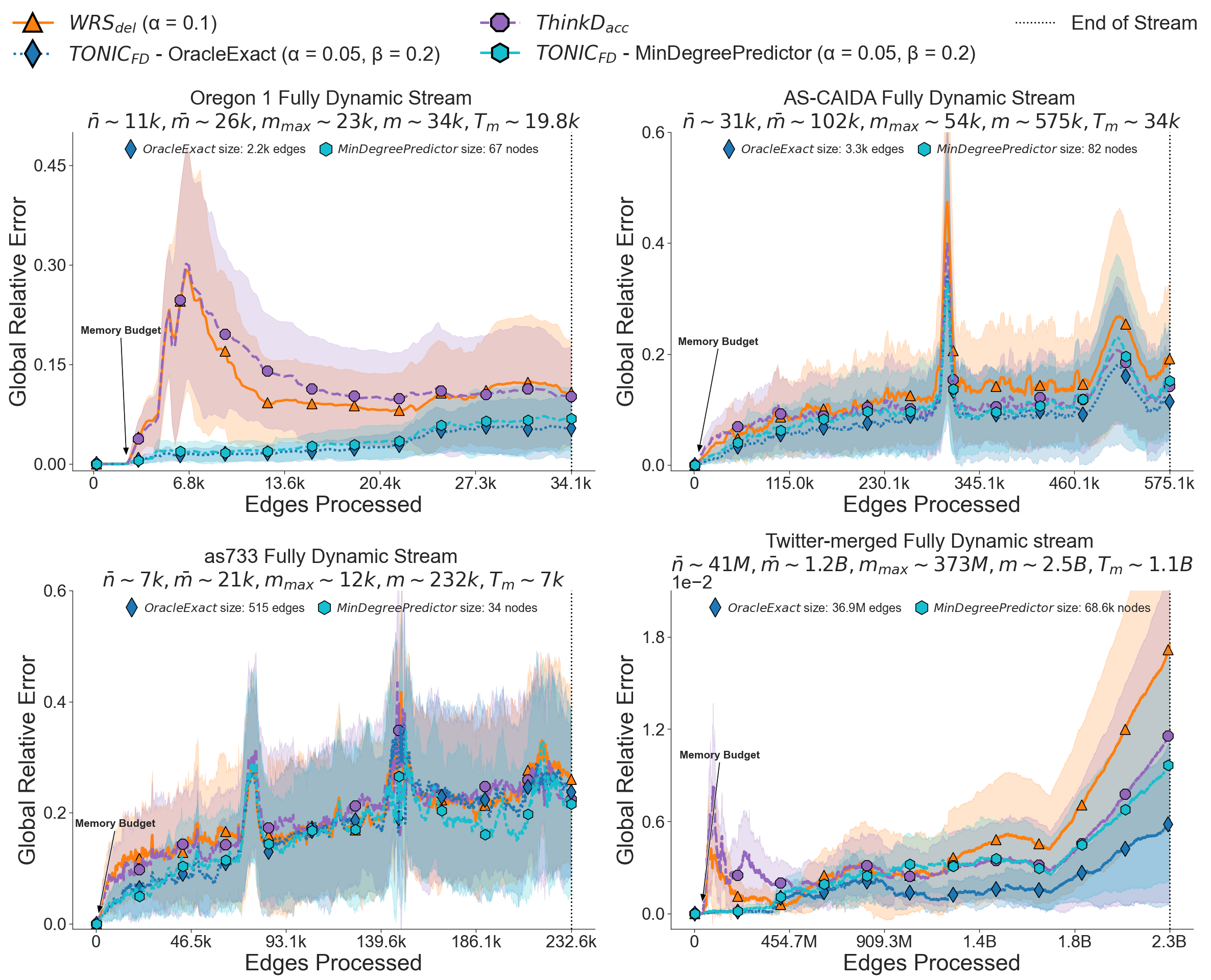}
	\caption{Estimation error as time progresses during the input FD graph stream. For each combination of algorithm and parameter (including predictor for \algnamedel), the average and standard deviation over 50 repetitions are shown (except for Twitter, where 10 repetitions have been considered). The algorithms parameters are as in legend (for \wrsdel\ they are fixed as in the respective publications; for \algnamedel\ they are as chosen in Fig. ~\ref{fig:accuracy_params_experiments_merged}). $\bar{n}$: number of unique nodes;  $\bar{m}$: number of unique edges;  $m_{max}$: maximum number of edges at some time; $m$: total number of edges; $T_m$: number of global triangles at the end, derived from the FD stream.} 
	\label{fig:fd-stream_experiments}
\end{figure}
\begin{figure}[htbp]
	\centering
	\includegraphics[width=1 \textwidth]{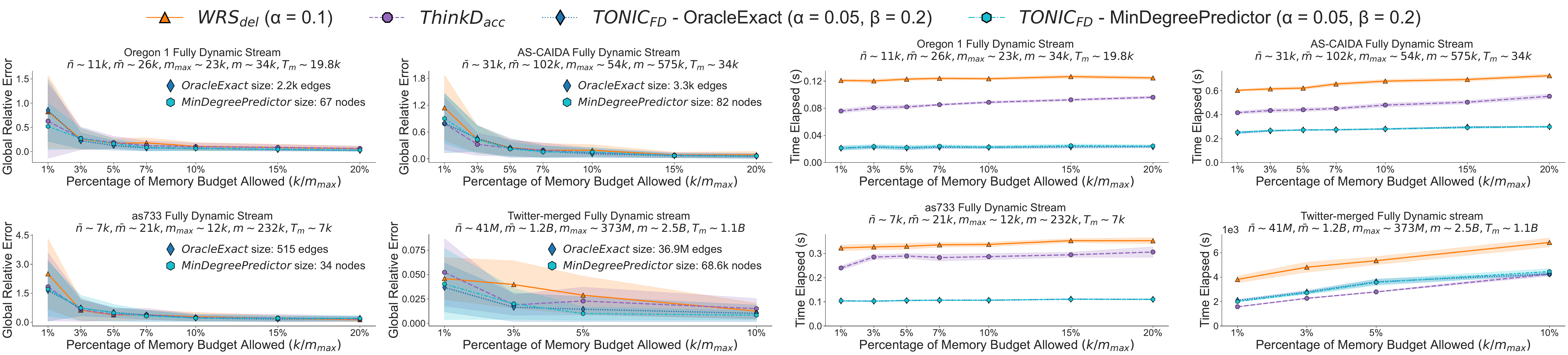}
	\caption{Error (left) and runtime (right) vs memory budget. For each combination of algorithm and parameter (including predictor for \algnamedel), the average and standard deviation over 50 repetitions are shown  (except for Twitter, where 10 repetitions have been considered). The  algorithms parameters are as in legend (for \wrsdel\ they are fixed as in the respective publications; for \algnamedel\ they are as chosen in Fig.~\ref{fig:accuracy_params_experiments_merged}). $\bar{n}$: number of unique nodes;  $\bar{m}$: number of unique edges;  $m_{max}$: maximum number of edges; $m$: total number of edges; $T_m$: number of global triangles at the end, derived from the FD stream.}
	\label{fig:fd-memory_experiments}
\end{figure}

Fig.~\ref{fig:fd-stream_experiments} shows the estimation error as time progresses during the input FD stream. Estimates are obtained as in Section~\ref{sec:stream_experiments}. Notice that \algnamedel\ performs the best in terms of error in almost any case, for all the considered datasets, predictors and all the duration of the FD stream. All the algorithms have been run with memory budget $k = m_{max} / 10$  (highlighted by the black arrow). 

In Fig.~\ref{fig:fd-memory_experiments} we report results of error and runtimes; the three algorithms are provided with memory budget $k~=~f m_{max}$, for values of $f$ showed on the x-axis. 
Note that \algnamedel\ performs always better than \wrsdel, and better or slightly worse than \thinkdacc, in terms of error (Fig.~\ref{fig:fd-memory_experiments}, left). 
Fig.~\ref{fig:fd-memory_experiments} (right) shows that our algorithm \algnamedel\ is always faster than \wrsdel, and is faster than or comparable to \thinkdacc, despite requiring the removal of edges within our waiting room and heavy edge set.

\section{Details on Predictors}
\label{sec:predictors_details}


\subsection{Overhead of node-based oracles}
\label{sec:overhead-node-oracle}

In this section, we provide further details on how we build \texttt{MinDegPredictor}. More specifically, given the training stream, we sort edges by decreasing minimum degree, and we retain the top 10\% of such edges. Thus, each entry of the predictor (represented as a lookup table) can be written as $((u, v) ; min(deg(u), deg(v)))$, for the $m / 10$ highest-min-degree entries. Then, given the number $\bar{n}$ of (unique) nodes present in the above \emph{edge-based} predictor, we computed \emph{node-based} \texttt{MinDegPredictor}, where the latter contains the $\bar{n}$ highest degree nodes of the whole graph stream.
We want to emphasize the fact that nodes in the edge-based predictor do not necessarily correspond to the same nodes taken from the highest-degree nodes in the graph: for example, think about a very high-degree node that has all low-degrees neighbors, i.e., it would not be among the unique nodes in the edge-based predictor, but it is among the highest-degrees of the graph, thus inside the node-based predictor.
Node-based predictors are in someway encoding edge features in a much more succinct representation. Hence, the resulting predictors' overhead is significantly lower than the one used in edge-based representations. Note that in our node-based predictor we combine the degrees to obtain the minimum, but many others predictors based on the degree of nodes could be used (e.g., a linear combination with trainable parameters). We leave the exploration of such predictors as future direction.  In Table~\ref{tab:overhead}, for each row (dataset) we report the following factors: gain in the number of entries, gain from storing in memory, gain from reading from file, comparing node-based and edge-based representation. 

\begin{table}[htbp]
	\centering
	\footnotesize
	\resizebox{1 \textwidth}{!}{
		\begin{tabular}{c c c c c}
			\textbf{Dataset} 						& \textbf{Gain in \# of Entries} & \textbf{Gain in Memory} & \textbf{Gain in Time} &  \textbf{Jaccard Similarity} \\
			\toprule
			Patents    									  & $7.90$			& $13.04$		 &    $9.4 \pm 0.3$	  		              & $0.9600$	 				\\
			Actors				   						   & $189.11$		&  $288.55$	   &    $65.0 \pm 15.2$			          & $1.0$	 				  \\
			Stackoverflow    						 & $147.93$		 &  $232.72$    &    $72.7 \pm 16.0$		            & $0.9998$	 			  \\
			Twitter \#1 snapshot   				 & $538.44$ 	&  $879.57$    &    $334.4 \pm 188.2$	            & $0.9998$	  		           \\
			Twitter-merged 					   & $734.96$ 	  &  $1201.52$   &	  $432.9 \pm 284.87$			& $0.9990$	  	 	        \\
			\bottomrule
		\end{tabular}
	}
	\caption{Statistics on node-based \texttt{MinDegreePredictor} vs edge-based \texttt{MinDegreePredictor}. Second, third and fourth columns represent respectively the ratio of the number of entries, the size in memory, and the time for reading from file of node-based predictor over edge-based predictor (gain factors). 
		The last column indicates the Jaccard Similarity.}
	\label{tab:overhead}
\end{table}

We notice that we are able to obtain huge savings in terms of space to store and time to read the predictor. To give additional numbers: for Twitter-merged (last row) we have edge-based representation size in memory of $\approx 2.7 GB$, and oracle's time to be loaded of $\approx 60 s$, while for node-based representation we have respectively $ \approx 2.2MB$ and $\approx 0.12 s$. Then, we computed the Jaccard Similarity (last column) between the set of unique nodes in edge-based predictor and the set of nodes in node-based predictor, that is computed as the ratio between the cardinality of the intersection and the cardinality of the union of such two sets. Jaccard Similarity provides a measure for gauging similarity and diversity of sets (highest value is 1, corresponding to identical sets). From last column on Table~\ref{tab:overhead}, we see that nodes contained in the two different representations are basically the same. We recall that sizes of predictors are in each subplot of Fig.~\ref{fig:memory_experiments_merged} (left).


\subsection{Time to build predictors/oracles}
\label{sec:time-to-build-oracles}
In the following, we present results for comparison between times for building \texttt{OracleExact} and \emph{node-based} \texttt{MinDegPredictor}. We recall that the former requires to solve exactly the problem of counting the number of triangles in the graph, and plus maintaining a lookup table with the value of the heaviness for each edge $e$, i.e., the number of triangles adjacent to edge $e$; the latter, is built with a fast pass on the stream and stores the degrees of the nodes in the lookup table. In the end, the edge table (resp. node table) is sorted by heaviness (resp. degree) and the top heaviest edges (resp. highest degrees nodes) are retained. In Table~\ref{tab:timetobuildoracles} we report times for building oracles and predictors for some representative datasets, averaged over 5 independent runs. Note that \algname\ is able to outperform or is comparable to \wrs\ in runtime even if we add times from Table~\ref{tab:timetobuildoracles} to algorithms' execution times in Fig.~\ref{fig:memory_experiments_merged} (right), for most of the combinations of datasets and memory budgets, i.e., if we consider the time for both the first pass for building \texttt{MinDegPredictor}, and the second pass to run \algname. This confirms the time practicability and efficiency of our proposed predictor and algorithm.

\begin{table}[htbp]
	\centering
	\footnotesize
	\resizebox{0.7 \textwidth}{!}{
		\begin{tabular}{c c c}
		    Dataset 						& ${t_{\texttt{OracleExact}} (s)}$ 					& ${t_{\texttt{MinDegreePredictor}} (s)}$ \\
			\toprule
			Patents    									  &  $44.61 \pm 3.15$											  		& $7.78 \pm 0.35$				 \\
			Actors				   						   &  $158.22 \pm 7.68$										   		  & $5.68 \pm 0.42$	 				 \\
			Stackoverflow    						 &  $792.41 \pm 28.59$										       & $11.27 \pm 0.12$	 			   \\
			Twitter \#1 snapshot   				 & $13.35 \times 10^3 \pm 1117.14$ 	 					     &  $178.47 \pm 4.75$	  		  \\
			Twitter - merged 					   &	$64.35 \times 10^3 \pm 6206.57$ 					   & $561.63 \pm 16.42$	  		\\
			\bottomrule
		\end{tabular}
	}
	\caption{Time to build \texttt{OracleExact} vs Time to build \texttt{MinDegreePredictor}. The results include also the times for sorting the lookup tables, retaining the top entries and writing to file. For each dataset, average and standard deviation over 5 independent repetitions are reported.}
	\label{tab:timetobuildoracles}
\end{table}

\end{document}